\newtheorem{theorem}{Theorem}[section]
\newtheorem*{theorem*}{Theorem}
\newtheorem{corollary}{Corollary}[theorem]
\newtheorem{lemma}{Lemma}[theorem]
\newtheorem{assumption}{Assumption}[section]
\theoremstyle{definition}
\definecolor{UBCblue}{RGB}{0, 0, 95} 
\definecolor{ForestGreen}{RGB}{34, 139, 34}
\definecolor{a2red}{RGB}{192,0,0}
\definecolor{a3sand}{RGB}{191,144,0}
\definecolor{a4green}{RGB}{0,204,0}
\newcolumntype{L}[1]{>{\raggedright\let\newline\\arraybackslash\hspace{0pt}}m{#1}}
\newcolumntype{C}[1]{>{\centering\let\newline\\arraybackslash\hspace{0pt}}m{#1}}
\newcolumntype{R}[1]{>{\raggedleft\let\newline\\arraybackslash\hspace{0pt}}m{#1}}
\newcommand{\CI}{\mathrel{\perp\mspace{-10mu}\perp}}
\newcommand{\nCI}{\centernot{\CI}}
\newcommand{\E}[1]{\operatorname{\mathbb{E}}\left[#1\right]}
\newcommand{\Prp}[1]{\operatorname{\mathbb{P}}\left[#1\right]}
\newcommand{\Cov}[1]{\operatorname{Cov}\left[#1\right]}
\tikzset{
    Latex-Latex,auto,node distance =1 cm and 1 cm, thick,
    state/.style ={rounded rectangle, draw, minimum width = 0.7 cm, minimum height = 0.7 cm},
    half circle/.style={semicircle, draw, shape border rotate=90, anchor=chord center, minimum width=0.7 cm},
    point/.style = {circle, draw, inner sep=0.04cm,fill,node contents={}},
    bidirected/.style={Latex-Latex,dashed},
    el/.style = {inner sep=2pt, align=left, sloped},
    line/.style={draw, line width=1, -},  
    cross/.style={cross out, draw=black, minimum size=2*(#1-\pgflinewidth), inner sep=0pt, outer sep=0pt},
    cross/.default={1pt}
}
\pgfplotsset{ every non boxed x axis/.append style={x axis line style=-},
     every non boxed y axis/.append style={y axis line style=-}}
\begin{document}

\begin{titlepage}
\title{Instrumented Common Confounding}
\author{{\scshape Christian Tien} \thanks{\href{mailto:ct493@cam.ac.uk}{ct493@cam.ac.uk}; Faculty of Economics, University of Cambridge} }
\date{\today}
\maketitle

\begin{abstract}
\noindent 
Causal inference is difficult in the presence of unobserved confounders. We introduce the \emph{instrumented common confounding} (ICC) approach to (nonparametrically) identify average causal (structural) effects with instruments, which are exogenous only conditional on some unobserved \emph{common confounders}. The ICC approach is most useful in rich observational data with multiple sources of unobserved confounding, where instruments are at most exogenous conditional on some unobserved common confounders. Suitable examples of this setting are various identification problems in the social sciences, dynamic panels, and problems with multiple endogenous confounders.
The ICC identifying assumptions are closely related to those in mixture models, proximal learning and IV. Compared to mixture models \citep{bonhomme2016}, we require less conditionally independent variables and do not need to model the unobserved confounder. Compared to proximal learning \citep{cui2020}, we allow for non-common confounders, with respect to which the instruments are conditionally exogenous. Compared to IV \citep{newey2003}, we allow instruments to be exogenous conditional on some unobserved common confounders, for which a set of observed variables is complete. 
We prove point identification with outcome model and alternatively first stage restrictions. We provide a practical step-by-step guide to the ICC model assumptions and present the causal effect of education on income as a motivating example. 
\vspace{0in}\\
\noindent\textbf{Keywords:} \\ Causal Inference, Unobserved Confounding, Instrumental Variables, proximal learning, Proximal Learning \\

\bigskip
\end{abstract}
\setcounter{page}{0}
\thispagestyle{empty}
\end{titlepage}
\pagebreak \newpage

\section{Introduction} \label{sec:intro}

Causal inference in observational data with unobserved confounders is difficult. Researchers inevitably rely on some unverifiable assumptions, which invite criticism. A popular approach towards identification is the use of instrumental variables (IV). Famously, instruments need to satisfy a relevance condition and exclusion restriction. 
Finding excluded instruments is often difficult or impossible in applications. Our novel \emph{instrumented common confounding} (ICC) approach identifies average causal (structural) effects with instruments, which are excluded (and relevant) conditional on some \emph{unobserved} confounders. We call these unobserved confounders \emph{common}, because we assume their association with other observed variables. In economics, a well-known common confounder is ability (cognitive skill) in the education production function. An alternative title for this paper would be \emph{IV with mismeasured confounders}, where the measurements of the unobservable confounders may be economically meaningful variables with their own effect on the outcome. 

The here introduced ICC approach links instrumental variable (IV) and proximal learning (also negative control) methods. 
Compared to proximal learning \citep{cui2020}, we relax the conditional unconfoundedness assumption for treatment $A$. Instead, we require conditional relevance of action-aligned proxies $Z$ with respect to treatment $A$. With conditional relevance, the assumptions imposed on action-aligned proxies $Z$ start to resemble those in IV. The main difference compared to IV \citep{newey2003} is that exclusion and relevance of instruments $Z$ are no longer required to hold conditional only on observed confounders, but may hold conditional on (observed and) unobserved common confounders $U$. Thus, the information contained in $U$ differs from proximal learning to our common confounding setting: In proximal learning, treatment $A$ is exogenous conditional on $U$. In our setting, the instruments $Z$ are exogenous conditional on $U$. While in proximal learning $U$ contains every unobserved source of variation that made treatment $A$ endogenous, in our common confounding setting $U$ contains every unobserved source of variation that made instruments $Z$ endogenous. This may be a much more realistic identifying assumption, e.g. when treatment is chosen by heterogeneous economic agents.
The identifying assumptions of our ICC method are also related to mixture models \citep{bonhomme2016}, compared to which less than three conditionally independent measurements of the unobservable are needed.

Our novel method, instrumented common confounding (ICC), for which we prove and explain identification in detail, is not a panacea. It replaces some strong, untestable identifying assumptions by other such assumptions. While the relevance assumptions of ICC are testable, its conditional exclusion restriction remains untestable, except for over-identifying restrictions tests (as in IV). To shed light on these assumptions, estimation of the education production function is thoroughly discussed as a motivating example of the ICC approach.

In section \ref{sec:lit}, we briefly discuss the related IV and proximal learning literature. Section \ref{sec:setup} contains the model setup in detail. Our main contribution with new identification results in the common confounding model are in section \ref{sec:id-outcome} and \ref{sec:id-first}. In section \ref{sec:id-outcome} the outcome model is linearly separable in the disturbance, whereas in section \ref{sec:id-first} different first stage reduced form monotonicity restrictions are considered. Some numerical examples are included in section \ref{sec:num-examples}. We present a practical algorithm, and describe the returns to education and a health treatment with individual choice as examples of ICC models in section \ref{sec:examples}. We conclude in section \ref{sec:conclusion} and provide proofs in the appendix.

\section{Related Literature} \label{sec:lit}
Instrumented common confounding (ICC) bridges the proximal learning \citep{cui2020} and nonparametric IV \citep{newey2003, imbens2009} methods. Recent proximal learning literature \citep{miao2018} extends nonclassical measurement error models with mismeasured confounders \citep{mahajan2006, hu2008, kasahara2009, kuroki2014}. All measurement error models are characterised by independence conditions between some observed variables conditional on the unobserved variable. In this sense, measurement error models are a specific application of mixture models, for which identification results are similarly available \citep{hett2000, hall2003, allman2009, bonhomme2016}. These identification results in mixture models have one thing in common: They assume the independence of three observed variables conditional on the unobserved variable. Under this key assumptions, the entire model is nonparametrically identified in conjunction with completeness conditions \citep{bonhomme2016}, which impose richness requirements on the observed variables relative to the unobserved variable. 

The proximal learning literature focuses on the identification of average causal effects (ATE, ATT) in the presence of a common, unobserved confounder \citep{miao2018, cui2020, singh2020}. Instead of identification of the entire model, only a causal effect of a treatment on an outcome is identified, using proxies to instrument for each other and thus account for the unobserved confounder. This restricted focus enables the identification of average causal effects with weaker conditional independence assumptions on the proxies compared to traditional mixture models. Specifically, the model no longer needs to contain three conditionally independent measurements of the unobserved confounder. In instrumental common confounding, we retain the conditional independence assumptions of proximal learning, but drop unconfoundedness conditional on the unobservable in favour of a relevance requirement for the excluded action-aligned proxies, which become our instruments. 

The resemblance of proximal learning and IV is noteworthy. In proximal learning, proxies are used as instruments for each other to adjust for the confounding effect of the unobservable common confounder. Contrary to traditional nonparametric IV \citep{newey2003}, the proximal learning problem is not ill- but well-posed \citep{deaner2018}. In proximal learning, conditional moments of observed variables only are constructed to identify bridge functions of observed variables. Non-unique bridge function are no problem, because any valid bridge function can be used to point-identify the average causal effect of interest \citep{kallus2021}. To our knowledge, we are the first authors to leverage the similarities in identifying assumptions of IV and proximal learning for a novel identification approach. From the perspective of proximal learning, we add a conditional relevance requirement for action-aligned proxies $Z$ (our instruments) with respect to treatment $A$. With this strengthened relevance requirement, we allow for conditional confoundedness of the treatment $A$ due to non-common confounders. From the perspective of nonparametric IV, we allow the instrument $Z$ to satisfy exclusion conditional on some unobserved common confounder $U$. Then, we add conditionally independent, observable outcome-inducing proxies $W$ to the model, which are sufficiently relevant for the common confounder $U$. As their name suggests, unlike usual proxies the outcome-inducing proxies $W$ may be economically meaningful with their own direct effect on outcome $Y$.

\section{Setup}  \label{sec:setup}

A treatment (action) $A \in \mathcal{A}$ is discrete or continuous, with base measure $\mu_A$ of $\mathcal{A}$. The counterfactual $Y(a) \in \mathbb{R}$ would be observed if we could set $a \in \mathcal{A}$. $Y=Y(A)$ is the outcome of observed action $A$. For notational simplicity, conditioning on observed covariates $X \in \mathcal{X} \subseteq \mathbb{R}^{d_X}$ is not made explicit, but always possible.
The causal effect of interest $J$ is a function of counterfactuals with a contrast function $\pi: \mathcal{A} \to \mathbb{R}$.
\begin{align}
J &\coloneqq \E{\int_{\mathcal{A}} Y(a) \pi(a) d \mu_A(a)}. \label{eq:causal-effect}
\end{align}
Due to unmeasured confounders, which may be discrete, continuous or any mix, exchangeability is violated: $Y(a) \nCI A$.
$Z \in \mathcal{Z} \subseteq \mathbb{R}^{d_Z}$ is a vector of \emph{conditionally exogenous}, relevant instruments for treatment $A$. Instruments $Z$, which may be discrete, continuous or any mix, are exogenous conditional on a subset of unobserved \emph{common} confounders $U \in \mathcal{U}$. 
\begin{align*}
Y(a) \nCI Z  \text{ but } Y(a) \CI Z \ | \ U
\end{align*}
In proximal learning, the equivalent of our instruments $Z$ is called action-aligned proxies \citep{deaner2021many}. Just like the action-aligned proxies in proximal learning, these instruments satisfy an exclusion restriction with respect to the potential outcomes $Y(a)$ conditional on the common confounders $U$ (\ref{a:icc}.\ref{a:iv-exog}). However, unlike in proximal learning where action-aligned proxies may not directly affect treatment $A$ \citep{cui2020}, we use variation in $Z$ to instrument for treatment $A$. This instrumentation step requires relevance of instruments $Z$ for treatment $A$ conditional on common confounders $U$ (\ref{a:icc}.\ref{a:iv-complete}). 
Consequently, we prefer to call $Z$ conditionally exogenous instruments rather than action-aligned proxies with a relevance requirement, but either term would be equally valid.

Below, the set of conditional independence and relevance assumptions of the simple common confounding model are listed. Noticeably, the below assumptions only impose a stronger relevance requirement for instruments $Z$ (\ref{a:icc}.\ref{a:iv-complete}) compared to standard proximal learning. 

\begin{assumption}[Simple Common Confounding Model] \label{a:icc}

\begin{enumerate}
  \item SUTVA: $Y=Y(A, Z)$ and $W=W(A,Z)$. \label{a:sutva}
  \item Instruments \label{a:iv}
  \begin{enumerate}
    \item Exclusion: $Y(a,z) = Y(a) \CI (A, Z) \ | \ U \ \forall a \in \mathcal{A}$. \label{a:iv-exog}
    \item Relevance (completeness): \label{a:iv-complete} For any $g(A, U) \in L_2(A, U)$,  
    \begin{align}  \E{g(A, U) | Z} &= 0 \text{ only when } g(A, U) = 0.  \label{eq:iv-complete} \end{align}
  \end{enumerate}
  \item Outcome-inducing proxies \label{a:nc} 
  \begin{enumerate}
    \item Exclusion: $W(a, z) = W \CI (A, Z) \ | \ U$.  \label{a:nc-exog}
    \item Relevance (bridge function): \label{a:nc-relevance} There exists some function $h_0 \in L_2(A, W)$ such that 
    \begin{align} \E{h_0(A, W) | A, U} &= k_0(A, U) \label{eq:npiv-bridge-outcome}  \end{align} 
    almost surely, where $k_0 \in L_2(A, U)$ is a function of interest.
  \end{enumerate}
\end{enumerate}
\end{assumption}

Assumption \ref{a:icc}.\ref{a:sutva} is a stable unit treatment value assumption \citep{imbens2015}. It implies no interference across units and is not the focus of this work. 
Instruments $Z$ must be independent from the potential outcomes $Y(a)$ conditional on common confounders $U$, including no direct effect on outcomes other than through the treatment as stated in assumption \ref{a:icc}.\ref{a:iv-exog}. Hence, by definition the common confounders $U$ contain all unobservables conditional on which the instruments would satisfy an exclusion restriction with respect to the potential outcomes.
Instrument relevance is formulated as a completeness condition with respect to $(A, U)$ in assumption \ref{a:icc}.\ref{a:iv-complete}. This completeness requirement means that conditional on the common confounders, the remaining variation in $Z$ must still be sufficiently relevant for treatment $A$. In some sense, this assumption sounds very similar to the standard relevance requirement in IV conditional on observed confounders: The exogenous variation in the instruments must be sufficiently relevant for the treatment.


The outcome-inducing proxies $W \in \mathcal{W} \subseteq \mathbb{R}^{d_W}$ may be discrete, continuous or any mix. outcome-inducing proxies $W$ may directly affect $Y$, while independent from $(A, Z)$ conditional on $U$ (\ref{a:icc}.\ref{a:nc-exog}). Their richness requirement \ref{a:icc}.\ref{a:nc-relevance} with respect to $U$ is stated as the existence of a bridge function $h_0 \in L_2(A, W)$, whose expectation conditional on $(A, U)$ must equal a function of interest $k_0(A, U)$, which closely relates to the causal effect of interest $J$ (see section \ref{sec:id-outcome} and \ref{sec:id-first}). The function of interest $k_0$ is either some average structural function, or defined by moment restrictions as in assumption \ref{a:outcome-model}. Completeness of $W$ for $U$ (conditional on $A$) is sufficient for \ref{a:icc}.\ref{a:nc-relevance} and can thus be used alternatively to ensure relevance of $W$ for $U$ without reference to a specific function of interest $k_0$. The action bridge function $h$ will generally not be unique whenever $W$ carries more information than $U$ \citep{kallus2021}.

\begin{figure} 
\caption{Example DAG of an instrumented common confounding model}
\label{f:model}
\centering
\begin{minipage}{0.5 \textwidth}
\begin{tikzpicture}[node distance=1.5cm and 0.75cm]

    \node [state, dashed] (u) {$U$};
    \node [state, below left=1.5cm and 1.25cm of u] (a) {$A$};
    \node [state, below right=1.5cm and 1.25cm of u] (y) {$Y$};
    \node [draw, rounded rectangle, left=of a,  minimum width = 0.7 cm,  minimum height = 0.7 cm] (z) {$ \ \ \ Z \ \ \ $};
    \node [draw, rounded rectangle, right=of y, minimum width = 0.7 cm, minimum height = 0.7 cm] (w) {$W$};
    \node [state, dashed, below right=1.5cm and 1.25cm of a] (ut) {$\tilde{U}$};
    \node [cross=8pt, line width=4pt, a2red, below right=0.4cm and -0.4cm of z] (cross) {};
    \node [a2red, below right=0.6cm and -1.2cm of z] (a2) {\textbf{2a}};
    \node [a3sand, below right=-0.2cm and 0.2cm of z] (a3) {\textbf{2b}};
    \node [a4green, below right=0cm and 1cm of u] (a4) {\textbf{3b}};
    \draw[line, style=-latex] (a) edge (y);
    \draw[line, style=-latex, a3sand, line width=2] (z) edge (a);
    \draw[line, style=-latex] (w) edge (y);
    \draw[line, a2red, densely dotted, out=270, in=225] (z) edge (y);
    \draw[line, a2red, densely dotted, out=270, in=180] (z) edge (ut);
    \draw[line, style=-latex, dashed] (u) edge (a);
    \draw[line, style=-latex, dashed] (u) edge (y);
    \draw[line, style=-latex, dashed] (ut) edge (a);
    \draw[line, style=-latex, dashed] (ut) edge (y);
    \draw[line, style=-latex, dashed, out=180, in=90] (u) edge (z);
    \draw[line, style=-latex, a4green, line width=2, dashed, out=0, in=90] (u) edge (w);
\end{tikzpicture}
\end{minipage}
\hfill
\begin{minipage}{0.45\textwidth}
\begin{itemize}
  \item[$Y$: ] Outcome
  \item[$A$: ] Treatment
  \item[$Z$: ] Instrument
  \item[$W$: ] outcome-inducing proxy
  \item[$U$:] Common Confounder
  \item[$\tilde{U}$:] Non-Common Confounder
\end{itemize}
\end{minipage}
\end{figure}
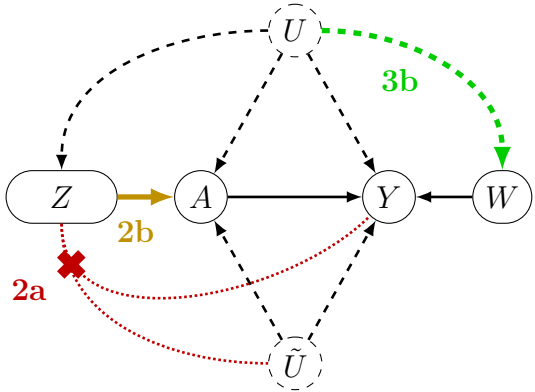

The directed acyclic graph (DAG) in figure \ref{f:model} is one of many possible representations of the conditional independences implied by assumption \ref{a:icc}. All unobserved variables and their direct effects are illustrated by dashed nodes and edges. 
The unobserved confounders $U$ are common, as they affect all observed variables $(Z, A, Y, W)$. At the same time, there are other unobserved confounders $\tilde{U}$ for the effect of $A$ on $Y$, with respect to which instruments $Z$ are exogenous (possibly conditional on $U$). The absence of edges between $(Y, \tilde{U})$ and $Z$ captures the exclusion condition satisfied by instruments $Z$ (\ref{a:icc}.\ref{a:iv-exog}): The instruments $Z$ satisfy exclusion conditional on the unobservable common confounders $U$. 
The relevance requirement for $Z$ is depicted with a thick directed edge from $Z$ to $A$ (\ref{a:icc}.\ref{a:iv-complete}). 
The outcome-inducing proxies $W$ have no direct edges to 
$(A, Z)$. Any association between $W$ and $(A, Z)$ stems from $U$, conditional on which they are independent, which reflects the exclusion restriction for $W$ (\ref{a:icc}.\ref{a:nc-exog}), even if it is not made as explicit in the graph as the exclusion restrictions for instruments $Z$. The richness requirement for outcome-inducing proxies $W$ with respect to $U$ (\ref{a:icc}.\ref{a:nc-relevance}) is illustrated with a thick arrow from $U$ to $W$. 
The confounder $\tilde{U}$ could be associated with $U$, but this link is omitted in favour of tractability in the DAG in figure \ref{f:model}.

\subsubsection*{Notation}
All results hold irrespective of whether we condition on covariates $X$, so $X$ is dropped from notation for simplicity.
$\mathbb{E}$ is the expectations operator wrt $(Y, A, Z, W)$. 
$\mathbb{E}_n$ is the empirical average over $n$ observations of $(Y, A, Z, W)$.
Let $L_2(O)$ be the space of square-integrable functions of a variable $O$ measurable wrt $(Y, A, Z, W)$.

\section{Identification with Outcome Model Restrictions}  \label{sec:id-outcome}
Identification of causal effect $J$ defined in \ref{eq:causal-effect} is the goal of this paper. 
In section \ref{sec:setup}, we introduced the common confounding model. In this section, we derive a main identification theorem, which relies on separability of the outcome model in the disturbance in assumption \ref{t:main-npiv-outcome} after introducing some useful lemmas. 

Using instruments for point identification of causal effects always requires some parametric model assumptions. One common approach is to formulate conditional moment restrictions for the outcome model. In assumption \ref{a:outcome-model}, we formulate such conditional moment restrictions. Treatment effect estimation with a continuous outcome fits into this framework. 

\begin{assumption}[Outcome model linearly separable in disturbance] \label{a:outcome-model} 
There exists some function $k_0 \in L_2(A, U)$ such that
\begin{align}
Y = Y(A) &= k_0(A, U) + \varepsilon, & \E{\varepsilon | Z, U} &= 0. \label{eq:outcome}
\end{align}
\end{assumption}

In the model described by assumption \ref{a:outcome-model}, the outcome $Y$ is linearly separable as a counterfactual mean function $k_0(A, U)$ and disturbance $\varepsilon$. A conditional moment holds, which states the mean-independence of disturbances $\varepsilon$ from instruments $Z$ given common confounders $U$. The treatments $A$ may therefore be endogenous.

\subsubsection*{IV with Fully Exogenous Instruments}
First, suppose the confounders $U$ were observed. Conditional on $U$, the instruments $Z$ satisfy an exclusion restriction. 
Then, the counterfactual mean function $k_0$ is identified by the completeness condition for $Z$ with respect to $A$ conditional on $U$ (\ref{a:icc}.\ref{a:iv-complete}). $k_0$ can be estimated by a variety of learning methods \citep{dikkala2020} from the conditional moment restrictions
\begin{align*}
\E{\left. Y - k_0(A, U) \right| Z, U} &= 0.
\end{align*}
The counterfactuals $Y(a)$ are closely related to the counterfactual mean function $k_0$. Once $k_0$ is identified, the causal effect $J$ is also identified. First, the contrast function $\pi$ is used to integrate out $A$ in $k_0(A, U)$ while $U$ is held fixed, producing $\phi_{IV}(U; k_0)$ as defined in lemma \ref{l:cf-outcome}. Then, the common confounders $U$ are integrated out from $\phi_{IV}(U; k_0)$ without dependence on treatment $A$ to obtain causal effect $J$.
\begin{lemma} \label{l:cf-outcome}
If assumption \ref{a:outcome-model} holds, then 
\begin{align*}
J = &\E{\phi_{IV}(U; k_0)} \\
\text{where} & \ \ \ \ \phi_{IV}(u; k_0) \coloneqq \int_{\mathcal{A}} k_0(a^\prime, u) \pi(a^\prime) d\mu_A(a^\prime) 
\end{align*}
\end{lemma}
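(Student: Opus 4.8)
The plan is to read Assumption \ref{a:outcome-model} as a structural equation and substitute the resulting counterfactual representation directly into the definition of $J$ in \eqref{eq:causal-effect}. The decomposition \eqref{eq:outcome} specifies $\varepsilon$ as the exogenous, intervention-invariant part of the outcome. Combined with SUTVA (\ref{a:icc}.\ref{a:sutva}) and the exclusion restriction (\ref{a:icc}.\ref{a:iv-exog}), which guarantee $Y(a,z)=Y(a)$, this means the potential outcome under the intervention $A=a$ admits the representation $Y(a) = k_0(a, U) + \varepsilon$ for every $a \in \mathcal{A}$, with one and the same disturbance $\varepsilon$ across all counterfactuals.

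First I would substitute this representation into \eqref{eq:causal-effect}, giving
$$J = \E{\int_{\mathcal{A}} \bigl( k_0(a, U) + \varepsilon \bigr)\, \pi(a)\, d\mu_A(a)}.$$
By linearity I would split the right-hand side into a $k_0$-term and an $\varepsilon$-term, interchanging the inner integral with the expectation by Fubini (justified by the square-integrability imposed in Assumption \ref{a:outcome-model}). The first term is precisely $\E{\int_{\mathcal{A}} k_0(a, U)\,\pi(a)\,d\mu_A(a)} = \E{\phi_{IV}(U; k_0)}$ by the definition of $\phi_{IV}$.

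It then remains to show the second term vanishes. Since $\varepsilon$ does not depend on $a$, the inner integral factors as $\varepsilon \int_{\mathcal{A}} \pi(a)\,d\mu_A(a)$, where $\int_{\mathcal{A}} \pi(a)\,d\mu_A(a)$ is a fixed scalar. Taking expectations and applying the law of iterated expectations to the moment condition in \eqref{eq:outcome} yields $\E{\varepsilon} = \E{\E{\varepsilon \mid Z, U}} = 0$, so the $\varepsilon$-term is zero and $J = \E{\phi_{IV}(U; k_0)}$ as claimed.

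The step I expect to be the main obstacle is the first, conceptual one: justifying that the observed-data equation \eqref{eq:outcome} carries over to the counterfactual representation $Y(a) = k_0(a,U) + \varepsilon$ with a common disturbance across interventions. Everything downstream is linearity of the expectation plus the mean-zero property of $\varepsilon$; the substantive content lies entirely in reading \eqref{eq:outcome} structurally and invoking SUTVA together with exclusion to pin down the potential outcomes.
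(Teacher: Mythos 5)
Your proposal is correct and follows essentially the same route as the paper: substitute the structural representation $Y(a) = k_0(a,U) + \varepsilon$ with an intervention-invariant disturbance into the definition of $J$, isolate the $k_0$ term as $\E{\phi_{IV}(U;k_0)}$, and kill the $\varepsilon$ term via the law of iterated expectations applied to $\E{\varepsilon \mid Z, U} = 0$ (the paper routes this through $\E{\varepsilon \mid U} = \E{\E{\varepsilon \mid Z,U} \mid U} = 0$ rather than the unconditional mean, a cosmetic difference). Your identification of the structural reading of \eqref{eq:outcome} as the substantive step matches the paper's own justification that ``for any change of $a$ in $Y(a)$, $\varepsilon$ is unchanged by definition.''
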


When the common confounders $U$ are unobserved, conditioning on them is of course impossible. An alternative identification approach for $J$ is required. That alternative identification approach relies on bridge functions. First, we use that by assumption \ref{a:icc}.\ref{a:nc-relevance} there exists some bridge function $h_0 \in L_2(A, W)$, whose expectation conditional on $(A, U)$ equals the counterfactual mean function $k_0(A, U)$. In this sense, $h$ bridges the function spaces $L_2(A,U)$ and $L_2(A,W)$ for $k_0$. Let $\mathbb{H}_0$ be the nonempty set of valid outcome bridge functions $h$ defined by
\begin{align}
\mathbb{H}_0 &= \left\{ h \in L_2(A, W): \E{k_0(A, U) - h(A, W) | A, U} = 0 \right\} \neq \emptyset.
\end{align}
A sufficient condition for the existence of the outcome bridge function $h$ (assumption \ref{a:icc}.\ref{a:nc-relevance}) is that $W$ is complete with respect to $U$ conditional on $A$.
\begin{align*} \E{g(A,U)|A,W} = 0 \text{ only when } g(A,U)=0 \text{ for any } g \in L_2(A,U). \end{align*}
Completeness can be understood as a nonparametric relevance requirement. Intuitively, the richness of relevant variation in $W$ with respect to $U$ allows us to replicate any counterfactual mean $k_0(A, U)$ with the outcome bridge function $h_0(A, W)$ by taking the expectation of the latter conditional on $(A, U)$. Due to this replicability, the causal effect $J$ can be retrieved similarly as in lemma \ref{l:cf-outcome}. First, the contrast function $\pi$ is used to integrate out $A$ in a valid bridge function $h_0(A, W)$ while $W$ is held fixed, producing $\tilde{\phi}_{IV}(W; h_0)$ as defined in lemma \ref{l:h-outcome}. Then, the outcome-inducing proxies $W$ are integrated out from $\tilde{\phi}_{IV}(W; h_0)$ without dependence on treatment $A$ to obtain causal effect $J$. Lemma \ref{l:h-outcome} describes this identification mechanism formally.
\begin{lemma}  \label{l:h-outcome}
Suppose \ref{a:outcome-model} and \ref{a:icc}.\ref{a:nc} hold. For any $h_0 \in \mathbb{H}_0$, 
\begin{align*}
J = &\E{\tilde{\phi}_{IV}(w; h_0)} \\
\text{where} & \ \ \ \ \tilde{\phi}_{IV}(w; h_0) \coloneqq \int_{\mathcal{A}} h_0(a, w) \pi(a) d\mu_A(a) 
\end{align*}
\end{lemma}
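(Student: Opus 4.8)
The plan is to build directly on Lemma \ref{l:cf-outcome}, which already expresses the target through the counterfactual mean function $k_0$, and then to swap $k_0$ for the bridge function $h_0$ using the defining property of $\mathbb{H}_0$ together with the conditional exclusion restriction \ref{a:icc}.\ref{a:nc-exog}. Concretely, Lemma \ref{l:cf-outcome} gives
\[
J = \E{\phi_{IV}(U; k_0)} = \E{\int_{\mathcal{A}} k_0(a, U)\, \pi(a)\, d\mu_A(a)},
\]
so it suffices to show $\E{\phi_{IV}(U; k_0)} = \E{\tilde{\phi}_{IV}(W; h_0)}$ for every $h_0 \in \mathbb{H}_0$. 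The whole argument is thus a substitution inside an already-established representation of $J$.

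First I would rewrite the pointwise identity defining $\mathbb{H}_0$. By definition any $h_0 \in \mathbb{H}_0$ satisfies $\E{h_0(A, W) \mid A, U} = k_0(A, U)$ almost surely; fixing the first argument this reads $\E{h_0(a, W) \mid A = a, U} = k_0(a, U)$. The key step is then to drop the conditioning on $A = a$: since $W \CI (A, Z) \mid U$ by \ref{a:icc}.\ref{a:nc-exog}, the conditional law of $W$ given $(A, U)$ does not depend on $A$, so $\E{h_0(a, W) \mid A = a, U} = \E{h_0(a, W) \mid U}$. Hence $k_0(a, U) = \E{h_0(a, W) \mid U}$ for $\mu_A$-almost every $a$. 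This is the one place where the exclusion restriction on $W$ (rather than merely its relevance) is used.

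Next I would integrate against the contrast $\pi$ and interchange the $\mu_A$-integral with the conditional expectation, obtaining
\[
\phi_{IV}(U; k_0) = \int_{\mathcal{A}} k_0(a, U)\, \pi(a)\, d\mu_A(a) = \int_{\mathcal{A}} \E{h_0(a, W) \mid U}\, \pi(a)\, d\mu_A(a) = \E{\tilde{\phi}_{IV}(W; h_0) \mid U}.
\]
Taking expectations of both sides and applying the law of iterated expectations then yields $J = \E{\phi_{IV}(U; k_0)} = \E{\E{\tilde{\phi}_{IV}(W; h_0) \mid U}} = \E{\tilde{\phi}_{IV}(W; h_0)}$, which is the claim.

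The main obstacle is the interchange of the $\mu_A$-integral and the conditional expectation in the middle display, i.e. the appeal to Fubini--Tonelli. This is where I would spend the care: I would verify that $(a, w) \mapsto h_0(a, w)\, \pi(a)$ is jointly integrable with respect to the product of $\mu_A$ and the conditional law of $W$ given $U$, which follows from $h_0 \in L_2(A, W)$ together with the implicit integrability of the contrast $\pi$ (for instance $\pi \in L_2(\mu_A)$ with $\mu_A$ finite, or $\pi$ bounded with $\mu_A$-integrable support), so that the order of integration may legitimately be swapped. The only other point deserving attention is the measure-theoretic justification for removing the conditioning on $A = a$, which is exactly the content of \ref{a:icc}.\ref{a:nc-exog} and holds for $\mu_A$-almost every $a$; since the subsequent manipulations are $\mu_A$-integrals, this almost-everywhere qualifier is harmless. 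Notably, the argument holds for \emph{any} $h_0 \in \mathbb{H}_0$, so non-uniqueness of the bridge function does not affect identification of $J$.
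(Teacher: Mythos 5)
Your proof is correct and follows essentially the same route as the paper's: both rest on Lemma \ref{l:cf-outcome}, the defining property of $\mathbb{H}_0$, the conditional independence $W \CI (A,Z) \mid U$ from \ref{a:icc}.\ref{a:nc-exog}, and an interchange of the $\mu_A$-integral with conditional expectation, with yours merely running the chain of equalities in the reverse direction (from $J$ to $\E{\tilde{\phi}_{IV}(W;h_0)}$ rather than the other way). Your explicit attention to the Fubini interchange and the almost-everywhere qualifiers is a welcome refinement of steps the paper leaves implicit, but it does not change the substance of the argument.
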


While lemma \ref{l:h-outcome} shifts the focus from identifying a function of partly unobservable $k_0 \in L_2(A, U)$ to a function of only observables $h_0 \in L_2(A, W)$, the moment conditions defining the valid set of outcome bridge functions $\mathbb{H}_0$ are still conditional on partly unobservable $(A, U)$. Hence, observed data cannot immediately identify any $h_0 \in \mathbb{H}_0$. 
The next step in the direction of identification is lemma \ref{l:obs-npiv-outcome}. The lemma states that any bridge function in the set $\mathbb{H}_0$ also satisfies equation \ref{eq:obs-mom-outcome}, which is a conditional moment restriction involving only observed variables. In the conditional moment restriction \ref{eq:obs-mom-outcome} we condition on the observable instruments $Z$ instead of partly unobservable $(A, U)$. 
\begin{lemma} \label{l:obs-npiv-outcome}
Under assumptions \ref{a:outcome-model} and \ref{a:icc}.\ref{a:nc}, any $h_0 \in \mathbb{H}_0$ satisfies that
\begin{align}
\E{Y - h_0(A, W) | Z} &= 0 \label{eq:obs-mom-outcome} 
\end{align}
\end{lemma}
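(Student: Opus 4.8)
The plan is to substitute the structural outcome equation of assumption~\ref{a:outcome-model} into the left-hand side and reduce every resulting term to something that either vanishes or cancels. Writing $Y = k_0(A,U) + \varepsilon$ and using linearity of conditional expectation, I would split the target moment as
\[
\E{Y - h_0(A,W) | Z} = \E{k_0(A,U) | Z} + \E{\varepsilon | Z} - \E{h_0(A,W) | Z},
\]
and handle each piece in turn.

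First I would dispatch the disturbance term. By the conditional moment in assumption~\ref{a:outcome-model} we have $\E{\varepsilon | Z, U} = 0$, so the tower property yields $\E{\varepsilon | Z} = \E{\E{\varepsilon | Z, U} | Z} = 0$. This removes the middle term entirely.

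The substantive step is to show that the remaining two terms coincide. Conditioning additionally on $(A,U)$ and applying the law of iterated expectations gives $\E{h_0(A,W) | Z} = \E{\E{h_0(A,W) | A, U, Z} | Z}$. Here I would invoke the exclusion restriction for the outcome-inducing proxies (\ref{a:icc}.\ref{a:nc-exog}), namely $W \CI (A,Z) \mid U$: conditional on $U$, the law of $W$ does not depend on $(A,Z)$, so the inner expectation collapses to $\E{h_0(A,W) | A, U}$. Since $h_0 \in \mathbb{H}_0$, the defining moment condition gives $\E{h_0(A,W) | A, U} = k_0(A,U)$ almost surely, hence $\E{h_0(A,W) | Z} = \E{k_0(A,U) | Z}$. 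Combining the three pieces, the $k_0$ terms cancel and the disturbance term vanishes, so $\E{Y - h_0(A,W) | Z} = 0$.

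The main obstacle, though a mild one, is the careful use of the conditional independence $W \CI (A,Z) \mid U$ inside the inner expectation: one must note that $h_0$ depends on $A$, which is held fixed by the conditioning, so the exclusion restriction is only needed to govern the conditional law of $W$ given $(A,U,Z)$, reducing it to the law of $W$ given $U$ (equivalently given $(A,U)$, since $W \CI A \mid U$). Everything else is routine iterated-expectations bookkeeping, and no completeness or bridge-function existence beyond $\mathbb{H}_0 \neq \emptyset$ is required for this particular reduction.
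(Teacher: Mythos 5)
Your proposal is correct and follows essentially the same route as the paper's proof: substitute $Y = k_0(A,U)+\varepsilon$, eliminate the disturbance via $\E{\varepsilon \mid Z,U}=0$ and the tower property, then condition on $(A,U,Z)$, use $W \CI Z \mid (A,U)$ (the weak-union consequence of \ref{a:icc}.\ref{a:nc-exog}) to drop $Z$, and invoke the defining moment of $\mathbb{H}_0$. The only cosmetic difference is that you split the expectation into three separate terms while the paper carries the difference $k_0(A,U)-h_0(A,W)$ through a single chain of iterated expectations; the ingredients and logic are identical.
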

These conditional moment restrictions in lemma \ref{l:obs-npiv-outcome} resemble those of IV. A main difference to IV with a conditional moment restriction of this type is that the instruments $Z$ do not need to be relevant with respect to both $A$ and $W$, which would also imply the uniqueness of $h_0$. Instead, $h_0$ will often be non-unique (specifically if $W$ contains more information than $U$), yet the causal effect $J$ is still point-identified.
The conditional moment restriction \ref{eq:obs-mom-outcome} defines a different set of observable bridge functions $\mathbb{H}^\text{obs}_0$, defined below.
\begin{align}
\mathbb{H}^\text{obs}_0 &= \{ h \in L_2(A, W): \E{Y - h(A, W) | Z} = 0 \} \neq \emptyset.
\end{align}
To be able to use the observable bridge functions in the identification of $J$, the remaining task is to relate $\mathbb{H}_0^{\text{obs}}$ and $\mathbb{H}_0$. Only then, we can hope to identify $J$ with some $h \in \mathbb{H}_0^{\text{obs}}$. Fortunately, the completeness condition \ref{a:icc}.\ref{a:iv-complete} for $Z$ with respect to $A$ conditional on $U$ ensures the equivalence of $\mathbb{H}_0^{\text{obs}}$ and $\mathbb{H}_0$. Lemma \ref{l:h-equal-outcome} states this equivalence result formally. 

\begin{lemma} \label{l:h-equal-outcome}
Under assumption \ref{a:icc},
\begin{align}
\mathbb{H}_0 = \mathbb{H}_0^{\text{obs}}.
\end{align}
\end{lemma}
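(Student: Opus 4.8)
The plan is to prove the two inclusions separately. The inclusion $\mathbb{H}_0 \subseteq \mathbb{H}_0^{\text{obs}}$ is exactly the content of Lemma \ref{l:obs-npiv-outcome}: every valid bridge function automatically satisfies the observable moment restriction \ref{eq:obs-mom-outcome}. Hence all the work lies in the reverse inclusion $\mathbb{H}_0^{\text{obs}} \subseteq \mathbb{H}_0$, which is where the completeness condition \ref{a:icc}.\ref{a:iv-complete} must do its job.

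For the reverse inclusion I would fix an arbitrary $h \in \mathbb{H}_0^{\text{obs}}$ and define the candidate residual $g(A,U) \coloneqq k_0(A,U) - \E{h(A,W)\mid A,U}$, which lies in $L_2(A,U)$ since $k_0 \in L_2(A,U)$ and conditional expectation is a contraction. By the defining moment of $\mathbb{H}_0$, membership $h \in \mathbb{H}_0$ is precisely the statement $g \equiv 0$ almost surely. By the completeness assumption \ref{a:icc}.\ref{a:iv-complete} it therefore suffices to show $\E{g(A,U)\mid Z} = 0$.

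The key step is to collapse the nested conditional expectation $\E{\E{h(A,W)\mid A,U}\mid Z}$ back into $\E{h(A,W)\mid Z}$. Here I would invoke the exclusion restriction \ref{a:icc}.\ref{a:nc-exog}, namely $W \CI (A,Z)\mid U$, which implies that the conditional law of $W$ given $(A,U,Z)$ depends only on $U$, so that $\E{h(A,W)\mid A,U,Z} = \E{h(A,W)\mid A,U}$. Applying the tower property with the intermediate conditioning set $(A,U,Z)$ then gives $\E{\E{h(A,W)\mid A,U}\mid Z} = \E{h(A,W)\mid Z}$. I expect this to be the main obstacle: it is the one place where the structure of the model ($W$-exclusion) is genuinely used, and it is what lets the bridge moment, which conditions on the partly unobserved $(A,U)$, communicate with the observable moment, which conditions on $Z$.

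It then remains to substitute and simplify. From the outcome model \ref{a:outcome-model}, iterated expectations give $\E{\varepsilon\mid Z} = \E{\E{\varepsilon\mid Z,U}\mid Z} = 0$, hence $\E{Y\mid Z} = \E{k_0(A,U)\mid Z}$. Combining the pieces yields $\E{g(A,U)\mid Z} = \E{k_0(A,U)\mid Z} - \E{h(A,W)\mid Z} = \E{Y - h(A,W)\mid Z}$, which vanishes because $h \in \mathbb{H}_0^{\text{obs}}$. Completeness \ref{a:icc}.\ref{a:iv-complete} then forces $g \equiv 0$, i.e. $\E{k_0(A,U) - h(A,W)\mid A,U} = 0$, so $h \in \mathbb{H}_0$. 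This establishes $\mathbb{H}_0^{\text{obs}} \subseteq \mathbb{H}_0$ and, together with Lemma \ref{l:obs-npiv-outcome}, the claimed equality.
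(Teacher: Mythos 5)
Your proposal is correct and follows essentially the same route as the paper's proof: one inclusion via Lemma \ref{l:obs-npiv-outcome}, and the reverse inclusion by using the outcome model to replace $Y$ with $k_0(A,U)$, the $W$-exclusion \ref{a:icc}.\ref{a:nc-exog} to collapse $\E{\E{h(A,W)\mid A,U}\mid Z}$ into $\E{h(A,W)\mid Z}$, and completeness \ref{a:icc}.\ref{a:iv-complete} to conclude the conditional-on-$(A,U)$ residual vanishes. Your write-up is in fact slightly more explicit than the paper's (which leaves the use of $W \CI Z \mid (A,U)$ implicit in its middle equality), but the decomposition and the role of each assumption are identical.
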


Now that the equivalence of the set of original bridge functions $\mathbb{H}_0$ and observable bridge functions $\mathbb{H}_0^{\text{obs}}$ has been shown, our main theorem \ref{t:main-npiv-outcome} in the outcome model restrictions case follows straightforwardly. The causal effect $J$ is identifiable with any observable bridge function $h_0 \in \mathbb{H}_0^{\text{obs}}$. First, the contrast function $\pi$ is used to integrate out $A$ in an observable and valid bridge function $h_0(A, W)$, while $W$ is held fixed, producing $\tilde{\phi}_{IV}(W; h_0)$. Then, the outcome-inducing proxies $W$ are integrated out from $\tilde{\phi}_{IV}(W; h_0)$ without dependence on treatment $A$ to obtain causal effect $J$.


\begin{theorem}  \label{t:main-npiv-outcome}
Suppose \ref{a:outcome-model} and \ref{a:icc} hold. For any $h_0 \in \mathbb{H}_0^{\text{obs}}$, 
\begin{align*}
J = &\E{\tilde{\phi}_{IV}(w; h_0)} \\
\text{where} & \ \ \ \ \tilde{\phi}_{IV}(w; h_0) =  \int_{\mathcal{A}} h_0(a, w) \pi(a) d\mu_A(a) 
\end{align*}
\end{theorem}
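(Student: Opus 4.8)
The plan is to present the theorem as an immediate consequence of the two preceding lemmas, since the substantive work has already been distributed across the identification chain. Fix an arbitrary observable bridge function $h_0 \in \mathbb{H}_0^{\text{obs}}$; such a function exists because $\mathbb{H}_0 \neq \emptyset$ by Assumption \ref{a:icc}.\ref{a:nc-relevance}, and Lemma \ref{l:obs-npiv-outcome} gives the inclusion $\mathbb{H}_0 \subseteq \mathbb{H}_0^{\text{obs}}$, so the target set is nonempty as well.

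First I would invoke the set equivalence of Lemma \ref{l:h-equal-outcome}: under Assumption \ref{a:icc} we have $\mathbb{H}_0 = \mathbb{H}_0^{\text{obs}}$, so the chosen $h_0$ is also a valid original bridge function, that is $h_0 \in \mathbb{H}_0$. This is the step that transfers the moment condition stated in terms of the observable instruments $Z$ (which defines $\mathbb{H}_0^{\text{obs}}$) into the moment condition stated in terms of the partly unobservable $(A, U)$ (which defines $\mathbb{H}_0$), and it is where the instrument completeness condition \ref{a:icc}.\ref{a:iv-complete} does its work. Second, having placed $h_0$ in $\mathbb{H}_0$, I would apply Lemma \ref{l:h-outcome}, which states precisely that for any $h_0 \in \mathbb{H}_0$ one has $J = \E{\tilde{\phi}_{IV}(w; h_0)}$ with $\tilde{\phi}_{IV}(w; h_0) = \int_{\mathcal{A}} h_0(a, w) \pi(a) d\mu_A(a)$. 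Chaining these two statements yields the claim for the $h_0$ we started with, and since $h_0$ was arbitrary the conclusion holds for every element of $\mathbb{H}_0^{\text{obs}}$.

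The main obstacle here is essentially nil at the level of the theorem itself, which is really a corollary. If I were proving the result from scratch rather than assuming the lemmas, the genuine difficulty would sit entirely inside Lemma \ref{l:h-equal-outcome}, and specifically in its nontrivial inclusion $\mathbb{H}_0^{\text{obs}} \subseteq \mathbb{H}_0$. That direction must rule out the possibility that a function satisfying the observable moment restriction $\E{Y - h(A,W)\mid Z}=0$ nonetheless violates the original restriction $\E{k_0(A,U) - h(A,W)\mid A,U}=0$; closing that gap requires the completeness of $Z$ for $(A,U)$ to guarantee that a residual integrating to zero against $Z$ must already vanish as a function of $(A,U)$. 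Everything downstream of that --- the two integration steps collapsing $k_0$ or $h_0$ into $J$ --- is the routine bookkeeping already recorded in Lemmas \ref{l:cf-outcome}, \ref{l:h-outcome}, and \ref{l:obs-npiv-outcome}, so I would not reprove it here and would instead cite the chain directly.
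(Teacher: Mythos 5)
Your proposal is correct and takes essentially the same route as the paper: the paper's own proof likewise hinges on the set equality $\mathbb{H}_0 = \mathbb{H}_0^{\text{obs}}$ from Lemma \ref{l:h-equal-outcome}, and then, rather than citing Lemma \ref{l:h-outcome} as you do, simply re-derives its computation inline (conditioning on $(A,U)$ via \ref{a:icc}.\ref{a:nc-exog}, replacing $\E{h_0(a,W)\mid A=a, U}$ by $k_0(a,U)$ via the bridge property, and closing with Lemma \ref{l:cf-outcome}). Your identification of the nontrivial inclusion $\mathbb{H}_0^{\text{obs}} \subseteq \mathbb{H}_0$, driven by the completeness condition \ref{a:icc}.\ref{a:iv-complete}, as the only substantive step is accurate.
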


\subsubsection*{Comment on Non-uniqueness}
The non-uniqueness of $h_0$ and point identification of $J$ may at first seem at odds. The observable bridge function $h_0 \in \mathbb{H}_0^{\text{obs}}$ will not be unique whenever $W$ contains more information than necessary for completeness with respect to $U$. Hence, the non-uniqueness of $h_0$ will only be reflected in its second argument, the outcome-inducing proxies $W$. While the $\tilde{\phi}_{IV}(W; h_0)$ may be non-unique as a function of $W$, the causal effect $J$ is unique once we take the expectation over $W$ in $\E{\tilde{\phi}_{IV}(W; h_0)}$ without dependence on treatment $A$, similar to the arguments in standard proximal learning \citep{kallus2021}. 


\section{Identification with First Stage Restrictions} \label{sec:id-first}

To obtain point identification in IV, first stage monotonicity assumptions represent an alternative to linear separability in the outcome model disturbance. The same logic applies to the common confounding model, but some additional steps are needed, which we lay out in this section. Some form of first stage strict monotonicity assumption could suit identification problems with continuous treatments.

We distinguish three main cases of the control function approach: First, when the common confounders $U$ are observed. Second, when the common confounders $U$ do not enter the first stage reduced form. Third, when the common confounders $U$ enter the first stage reduced form, but in a way that restricts the complexity of interactions.

\subsection{All Confounders Observed} \label{ssec:cffei}
The causal effect $J$ is identified by a simple control function strategy when the common confounders $U$ are observed. 

\begin{assumption}[Strict Monotonicity with Observed Confounders] \label{a:strict-monotonicity-simple} 
\begin{align}
Y &= g(A, U, \varepsilon) \label{eq:outcome-form-simple} \\
A &= h(Z, U, \eta) \label{eq:reduced-form-simple}
\end{align}
\begin{enumerate}
\item The reduced form $h(Z, U, r)$ is strictly monotonic in $r$ with probability 1, \label{a:mon-1-simple}
\item and $\eta$ is a continuously distributed scalar with a CDF that is strictly increasing on the support of $\eta$ (conditional on $U$). \label{a:mon-con-simple}
\item $(\eta, \varepsilon) \CI Z \ | \ U$. \label{a:mon-ind-simple} 
\end{enumerate}
\end{assumption}

In case of observed $U$, assumption \ref{a:strict-monotonicity-simple}.\ref{a:mon-ind-simple} states that $\varepsilon$ and $Z$ are independent conditional on $U$. By theorem 1 in \cite{imbens2009}, $A$ and $\varepsilon$ are then independent conditional on the control variable 
\begin{align}
V \coloneqq F_{A | Z, U}(A, Z, U) = F_{\eta | U}(\eta). \label{eq:control-var}
\end{align}
$V$ is a one-to-one function of $\eta$ due to the strict monotonicity of $h$ in its third argument and the strictly increasing CDF $F_{\eta | U}$ (assumption \ref{a:strict-monotonicity-simple}.\ref{a:mon-1-simple}/\ref{a:mon-con-simple}). For this reason, conditioning on $(V, U)$ is equivalent to conditioning on $(\eta, U)$. Then, as conditional on $(V, U)$ all variation in treatment $A$ stems from the instruments $Z$, $A$ and $\varepsilon$ are independent conditional on the control variable $(V, U)$ due to assumption \ref{a:strict-monotonicity-simple}.\ref{a:mon-ind-simple}.

Instrument relevance is expressed as a common support assumption \ref{a:common-support-simple} in this model.
\begin{assumption}[Common Support with Observed Confounders]   \label{a:common-support-simple}
For all $A \in \mathcal{A}$ where $\pi(A) \neq 0$, the support of $V$ conditional on $(A, U)$ equals the support of $V$ conditional on $U$.
\end{assumption}
A typical function of interest is the average structural function 
\begin{align}
k_0(A, U) &= \int_{\mathcal{V}} g(A, U, e) \dif F_{\varepsilon | U}(e),  \label{eq:asf-au}
\end{align}
which is identified by the standard nonparametric control function approach when the common confounders $U$ are observed \citep{imbens2009}. However, moving forward $U$ will be unobserved, so we instead focus on identification of the causal effect $J$ as defined in \ref{eq:causal-effect}. While the average structural function in \ref{eq:asf-au} is a function of $U$, the causal effect $J$ in \ref{eq:causal-effect} averages out the common confounders $U$ without dependence on the treatment $A$. $J$ is a less informative version of the average structural function $k_0$, because in $J$ there is one more level of averaging compared to $k_0$.

Let $k_{0, v_{(j)}}(a, v_{(j)}, u)$ be the condtional expectation $k_{0, v_{(j)}}(a, v_{(j)}, u) \coloneqq \E{Y | A=a, V_{(j)}=v_{(j)}, U=u}$, where $(j)$ may index different control variables $V_{(j)}$.
The generalised propensity score is the conditional density of $A$ given $V$ and $U$, $f_{A | V, U}(a | v, u)$, relative to base measure $\mu_A$ \citep{hirano2004}. 
Lemma \ref{l:simple-cf} below states that the IPW estimator $\phi_{IPW}$, regression estimator $\phi_{REG}$ and doubly robust estimator $\phi_{DR}$ allow identification of the causal effect $J$ conditional on the control variable $V$.

\begin{lemma} \label{l:simple-cf}
If assumptions \ref{a:strict-monotonicity-simple} 
and \ref{a:common-support-simple} hold, then 
\begin{align*}
J = &\E{\phi_{IPW}(Y, A, V, U; f_{A | V, U})} = \E{\phi_{REG}(V, U; k_{0, v})} = \E{\phi_{DR}(Y, A, V, U; f_{A | V, U}, k_{0, v})} \\
\text{where} & \ \ \ \ \phi_{IPW}(y, a, v, u; f_{A | V, U}) = y \frac{\pi(a)}{f_{A | V, U}(a|v,u)} \\
& \ \ \ \ \phi_{REG}(v, u; k_{0, v}) = \int_{\mathcal{A}} k_{0, v}(a^\prime, v, u) \pi(a^\prime) d\mu_A(a^\prime) \\
& \ \ \ \ \phi_{DR}(y, a, v, u; f_{A | V, U}, k_{0, v}) = (y - k_{0, v}(a, v, u)) \frac{\pi(a)}{f_{A | V, U}(a|v,u)} + \int_{\mathcal{A}} k_{0, v}(a^\prime, v, u) \pi(a^\prime) d\mu_A(a^\prime) 
\end{align*}
\end{lemma}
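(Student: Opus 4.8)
\textit{Proof proposal.} The plan is to establish all three representations by reducing each to the regression form and then collapsing to $J$ via the law of iterated expectations, with the key input being the conditional independence $A \CI \varepsilon \mid (V, U)$ already derived from theorem~1 of \cite{imbens2009} in the discussion preceding assumption~\ref{a:common-support-simple}. Throughout I condition first on the full control pair $(V, U)$ and only afterwards average it out, exactly as $J$ in \ref{eq:causal-effect} averages over the confounders without reference to $A$.

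The crux is to show that the conditional regression recovers the counterfactual mean given the controls, i.e. $k_{0, v}(a, v, u) = \E{Y(a) \mid V=v, U=u}$ for every $a$ with $\pi(a) \neq 0$. Writing $Y = g(A, U, \varepsilon)$ and $Y(a) = g(a, U, \varepsilon)$ from \ref{eq:outcome-form-simple}, I would compute $k_{0,v}(a,v,u) = \E{g(a, u, \varepsilon) \mid A=a, V=v, U=u}$ and invoke $A \CI \varepsilon \mid (V, U)$ to replace the conditioning event $\{A=a, V=v, U=u\}$ by $\{V=v, U=u\}$ in the law of $\varepsilon$, yielding $\E{g(a, u, \varepsilon) \mid V=v, U=u} = \E{Y(a) \mid V=v, U=u}$. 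The common support assumption~\ref{a:common-support-simple} guarantees this conditional expectation is well-defined at every $(a, v, u)$ entering the subsequent integrals, since it forces $f_{A \mid V, U}(a \mid v, u) > 0$ whenever $\pi(a) \neq 0$ and $(v,u)$ lies in the support of $(V, U)$.

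Given this identity, the regression representation is immediate: integrating against $\pi$ gives $\phi_{REG}(v, u; k_{0,v}) = \E{\int_{\mathcal{A}} Y(a') \pi(a') d\mu_A(a') \mid V=v, U=u}$, and one further application of iterated expectations over $(V, U)$ collapses to $J$ by \ref{eq:causal-effect}. For the IPW form I would condition on $(V, U)$ and change measure through the generalised propensity score, using $\E{f(A)\mid V=v,U=u} = \int_{\mathcal{A}} f(a) f_{A\mid V,U}(a\mid v,u) d\mu_A(a)$ with the inner $Y$-expectation over $\{A=a,V=v,U=u\}$ producing $k_{0,v}$; the denominator $f_{A\mid V,U}$ then cancels the density introduced by $d\mu_A$, so $\E{Y \tfrac{\pi(A)}{f_{A\mid V,U}(A\mid V,U)} \mid V=v,U=u} = \phi_{REG}(v,u;k_{0,v})$ and averaging over $(V,U)$ returns $J$. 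For the doubly robust form I would decompose $\phi_{DR} = \phi_{IPW} - k_{0,v}(A,V,U)\tfrac{\pi(A)}{f_{A\mid V,U}} + \phi_{REG}$ and note that the middle term has expectation equal to $\E{\phi_{REG}}$ by the same change-of-measure step, so the correction cancels exactly; equivalently, conditioning the residual on $(A,V,U)$ gives $\E{Y - k_{0,v}(A,V,U) \mid A, V, U} = 0$, which annihilates the first term and leaves $\E{\phi_{DR}} = \E{\phi_{REG}} = J$.

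The main obstacle is the counterfactual identification step of the second paragraph, namely justifying rigorously that conditioning on $(V, U)$ renders $A$ ignorable for $\varepsilon$ so that $k_{0,v}$ genuinely recovers $\E{Y(a) \mid V, U}$; everything downstream is routine bookkeeping (Fubini/Tonelli for the $d\mu_A$ integrals, iterated expectations, and the propensity-score change of measure). Care is also needed to keep common support in force at each division by $f_{A\mid V,U}$ and at each evaluation of $k_{0,v}$ at $A = a$ for $a \neq A$, so that no term is evaluated outside the support of $(A, V, U)$.
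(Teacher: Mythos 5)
Your proposal is correct and matches the paper's own argument in substance: both hinge on the conditional independence $A \CI \varepsilon \mid (V,U)$ (the paper phrases it as conditioning on $(\eta, U)$, which is equivalent to $(V,U)$ by strict monotonicity), both use the propensity-density cancellation to handle the IPW form, and both dispose of the DR form by noting that the correction term has zero (conditional) mean. The only difference is organizational — the paper proves IPW and REG separately by factorizing the joint density $f_{A,\varepsilon \mid \eta, U}$, while you establish the counterfactual-mean identity $k_{0,v}(a,v,u) = \E{Y(a) \mid V=v, U=u}$ once and chain REG, IPW and DR through it — a cosmetic rather than substantive distinction.
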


This baseline result is helpful, as we will find equivalents to it when the common confounders $U$ are unobserved.

\subsection{Common Confounders Do Not Affect the First Stage}  \label{ssec:cfccnifs}
If the common confounders $U$ are absent from the first stage reduced form, including the independence of $Z$ and $\eta$ unconditional on $U$, identification is again possible without further conditional independence assumptions. This model is described in assumption \ref{a:strict-monotonicity-no-cf-fs}. 

\begin{assumption}[Strict Monotonicity without Common Confounders in First Stage] \label{a:strict-monotonicity-no-cf-fs} 
\begin{align}
Y &= g(A, U, \varepsilon) \label{eq:outcome-form-no-cf-fs} \\
A &= h(Z, \eta) \label{eq:reduced-form-no-cf-fs}
\end{align}
\begin{enumerate}
\item The reduced form $h(Z, r)$ is strictly monotonic in $r$ with probability 1, \label{a:mon-1-no-cf-fs}
\item and $\eta$ is a continuously distributed scalar with a CDF that is strictly increasing on the support of $\eta$. \label{a:mon-con-no-cf-fs}
\item $\eta \CI Z$, and $\varepsilon \CI Z \ | \ U$. \label{a:mon-ind-no-cf-fs} 
\end{enumerate}
\end{assumption}
Assumption \ref{a:strict-monotonicity-no-cf-fs} primarily restricts the complexity of the first stage reduced form. The absence of $U$ from the reduced form does not imply that $U$ cannot affect treatment $A$. Below we provide an example of a simple model in which the common confounder $U$ is linearly associated with treatment $A$ and instruments $Z$. 
\begin{align*}
A &= Z \zeta + U \gamma_A + \epsilon_A,  & U &= Z \tilde{\gamma}_Z + \epsilon_U,  & (\epsilon_A, \epsilon_U) &\CI Z \\ \implies A &= Z \left( \zeta + \tilde{\gamma}_Z \gamma_A \right) + \eta,  & \eta &= \epsilon_A + \epsilon_U \gamma_A,  & \eta &\CI Z
\end{align*}
Assuming independence of instruments $Z$ from the structural disturbances $(\epsilon_A, \epsilon_U)$ suffices to write a reduced form, in which the instruments are independent from the disturbance $\eta$. 



Due to the absence of $U$ from the first stage reduced form, the variation in treatment $A$ can be separated into variation induced by instruments $Z$ versus disturbance $\eta$. Subject to assumption \ref{a:strict-monotonicity-no-cf-fs}, the control function
\begin{align*}
V_{\ref{a:strict-monotonicity-no-cf-fs}} &\coloneqq F_{A | Z}(A, Z) = F_\eta(\eta)
\end{align*}
is identifiable. 
$V_{\ref{a:strict-monotonicity-no-cf-fs}}$ is a one-to-one function of $\eta$ due to the strict monotonicity of $h$ in its second argument and the strictly increasing CDF $F_{\eta}$ (assumption \ref{a:strict-monotonicity-no-cf-fs}.\ref{a:mon-1-no-cf-fs}/\ref{a:mon-con-no-cf-fs}). For this reason, conditioning on $(V_{\ref{a:strict-monotonicity-no-cf-fs}}, U)$ is equivalent to conditioning on $(\eta, U)$. Then, as conditional on $(V_{\ref{a:strict-monotonicity-no-cf-fs}}, U)$ all variation in treatment $A$ stems from the instruments $Z$, $A$ and $\varepsilon$ are independent conditional on the control variable $(V_{\ref{a:strict-monotonicity-no-cf-fs}}, U)$ due to assumption \ref{a:strict-monotonicity-no-cf-fs}.\ref{a:mon-ind-no-cf-fs}. A result of the simple one-to-one relation of $V_{\ref{a:strict-monotonicity-no-cf-fs}}$ and $\eta$, the results of section \ref{ssec:bridge-functions} apply regarding the identification of causal effect $J$ with bridge functions.

\subsection{Control Bridge Function}  \label{ssec:control-bridge}
Another, more complicated, option to identify the causal effect $J$ relies on identification of a surrogate control function, which would render treatment $A$ exogenous if the common confounders $U$ were observed. First, this model is formally described in assumption \ref{a:strict-monotonicity}.

\begin{assumption}[Strict Monotonicity] \label{a:strict-monotonicity} 
\begin{align}
Y &= g(A, U, \varepsilon) \label{eq:outcome-form} \\
A &= h(Z, m(U, \eta)) \label{eq:reduced-form}
\end{align}
\begin{enumerate}
\item The reduced form $h(Z, m)$ is strictly monotonic in $m$ with probability 1, \label{a:mon-1}
\item $m(U, t)$ for $m: \mathcal{U} \times \mathcal{H} \rightarrow \mathbb{R}$ is strictly monotonic in $t$ with probability 1, \label{a:mon-2}
\item and $\eta$ is a continuously distributed scalar with a CDF that is strictly increasing on the support of $\eta$ (conditional on $U_0$). \label{a:mon-con}
\item $(\eta, \varepsilon) \CI Z \ | \ U$. \label{a:mon-ind} 
\end{enumerate}
\end{assumption}

The above strict monotonicity assumption \ref{a:strict-monotonicity} reduces the amount of interactions between $Z$ and $U$ in the reduced form for $A$ (\ref{a:strict-monotonicity}.\ref{a:mon-2}) on top of imposing strict monotonicity in the scalar disturbance $\eta \in \mathcal{H}$ (\ref{a:strict-monotonicity}.\ref{a:mon-1}/\ref{a:mon-2}). The scalar $\eta$ is continuously distributed and independent from $Z$ conditional on $U$, as is $\varepsilon$. The reduction in the dimension of interactions in the first stage reduced form is needed for the identifiability of some surrogate version of scalar $\eta$ when the common confounders are unobserved.

To identify some surrogate version of $\eta$ in the first stage reduced form, we need additional information in form of conditional independence assumptions. The requirements formulated in assumption \ref{a:icc-first-add} enable identification of the first stage to the degree that some useful control function for the endogenous variation $\eta$ becomes obtainable. These assumptions go beyond the basic common confounding model \ref{a:icc}. This approach can accommodate the fact that only a subset of common confounders $U_0 \in \mathcal{U}_0 \subseteq \mathcal{U}$ may enter the first stage reduced form in a complexity-restricted way. In fact, the relevance requirements in assumption \ref{a:icc-first-add} then are with respect $U_0$ instead of the more complex $U$. Our notation in this section sticks with $U$ throughout, but a researcher may consider less complex $U_0$ in the first stage to motivate the relevance requirements in assumption \ref{a:icc-first-add}.

\begin{assumption}[Additional First Stage Requirements] \label{a:icc-first-add}
\begin{enumerate}
\item SUTVA: $A(Z, W_0) = A$, $W_1(A, Z) = W_1$.
\item First stage instrument-aligned proxy: $A(z, w_0) = A(z) \CI W_0 \ | \ U$ \label{a:f-nci}
\item First stage action-inducing proxy: $W_1(a, z) = W_1 \CI Z \ | \ U$ \label{a:f-nca}
\end{enumerate}
\end{assumption}

Assumption \ref{a:icc-first-add}.\ref{a:f-nci} and \ref{a:icc-first-add}.\ref{a:f-nca} are typical proximal learning conditional independence assumptions applied to the first stage reduced form instead of the outcome equation. $W_0$ and $W_1$ are variables, which satisfy conditional independence assumptions with respect to $Z$ and $A$. For example, $W_0$ and $W_1$ could be different parts of a vector-valued outcome-inducing proxy $W$. 

Neither the conditional mean $k_{0, v}(a, v, u)$ nor the propensity score $f(a | v, u)$ are directly identifiable without $U$.
We use variation in the first stage proxies $W_0$ and $W_1$ to identify a \emph{control bridge function}.
We define the control quantity of interest $V_{\ref{a:strict-monotonicity}}$ as
\begin{align}
V_{\ref{a:strict-monotonicity}} &\coloneqq \int_{\mathcal{U}} F_{A | Z, U}(A | Z, u) \dif F(u).   \label{eq:tildev}
\end{align}
With the control quantity $V_{\ref{a:strict-monotonicity}}$ we can identify the causal effect of interest conditional on $U$. In section \ref{sssec:control-bridge-properties}, we prove and discuss this claim. In section \ref{sssec:control-bridge-id}, we show how to identify the control quantity from observed data.

\subsubsection{Properties of the Control Bridge} \label{sssec:control-bridge-properties}
The control quantity $V_{\ref{a:strict-monotonicity}}$ is useful, because it captures information about $\eta$, which would allow the exact identification of $\eta$ conditional on $U$. The strict monotonicity properties of the reduced form for treatment $A$ (\ref{a:strict-monotonicity}) allow us to write the control quantity as a function of $\eta$ and $U$. Let $m^{-1}(., U)$ be the inverse of $m(U, t)$ in $t$ and $h^{-1}(.,Z)$ the inverse of $h(Z, m)$ in $m$. Below, we show that the control quantity is equal to the expected conditional density of $\eta$, integrating out $U$ without dependence on instruments $Z$.
\begin{align*}
V_{\ref{a:strict-monotonicity}} &= \int_{\mathcal{U}} F_{A | Z, U}(A, Z, u) \dif F(u) \\
&= \int_{\mathcal{U}} \Pr{ \left( h(Z, m(u, \eta)) \leq A \right)} \dif F(u) \\
&= \int_{\mathcal{U}} \Pr{ \left( \eta \leq m^{-1} \left(h^{-1}(A, Z), u \right) \right) } \dif F(u) \\
&= \int_{\mathcal{U}} F_{\eta | U} \left( m^{-1} \left(h^{-1}(A, Z), u \right) \right)  \dif F(u) \\
&= \int_{\mathcal{U}} F_{\eta | U} \left( \eta(u) \right)  \dif F(u)
\end{align*}
The last line follows from monotonicity of the reduced form $h(Z, m(U, \eta))$ for treatment $A$ in $\eta$ and the continuous conditional distribution of $\eta$ (assumption \ref{a:strict-monotonicity}). The continuous conditional distribution of $\eta$ (\ref{a:strict-monotonicity}) implies that $\eta$ is exactly identified from $(A, Z, U)$, because it corresponds to a unique $F_{A | Z, U}$. As $\eta$ is identified only up to $U$ when $U$ is unobserved, we write $\eta(U) \coloneqq m^{-1} \left(h^{-1}(A, Z), U \right)$.

With the complexity restriction on the interactions of $Z$ and $(U, \eta)$ imposed by assumption \ref{a:strict-monotonicity}, $(V_{\ref{a:strict-monotonicity}}, U)$ contains the same information as $(V, U)$ (and hence $(\eta, U)$), which is formulated in lemma \ref{l:same-info} using sigma algebras.
\begin{lemma} \label{l:same-info}
Suppose assumption \ref{a:strict-monotonicity} holds. Then, the sigma algebras associated with the following three vectors of random variables are identical:
$$(V_{\ref{a:strict-monotonicity}}, U), \ (V, U), \ (\eta, U).$$
\end{lemma}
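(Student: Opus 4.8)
The plan is to prove that the three sigma algebras coincide by chaining two pairwise equalities, $\sigma(V,U)=\sigma(\eta,U)$ and $\sigma(V_{\ref{a:strict-monotonicity}},U)=\sigma(\eta,U)$, each established by showing that, conditional on $U$, one generating variable is an invertible (strictly monotone) deterministic transformation of the other. The central object organising the argument is $s \coloneqq h^{-1}(A,Z)$, which by strict monotonicity of $h$ in its second argument (\ref{a:strict-monotonicity}.\ref{a:mon-1}) equals $m(U,\eta)$; everything downstream reduces to recovering $\eta$ from $s$ together with $U$.

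For the first equality I would use $V = F_{\eta|U}(\eta)$ directly. Since $V$ is a measurable function of $(\eta,U)$, the inclusion $\sigma(V,U)\subseteq\sigma(\eta,U)$ is immediate. For the reverse inclusion, assumption \ref{a:strict-monotonicity}.\ref{a:mon-con} guarantees that $F_{\eta|U}$ is strictly increasing on the support of $\eta$, so it admits a conditional inverse and $\eta = F_{\eta|U}^{-1}(V)$ is $\sigma(V,U)$-measurable. This formalises the informal claim, already invoked in the text, that conditioning on $(V,U)$ is equivalent to conditioning on $(\eta,U)$.

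The second equality is where the real work lies. Starting from the display preceding the lemma, $V_{\ref{a:strict-monotonicity}} = \int_{\mathcal U} F_{\eta|U=u'}\big(m^{-1}(s,u')\big)\,\dif F(u') \eqqcolon G(s)$, so $V_{\ref{a:strict-monotonicity}}$ is a fixed, non-random ($U$-free) function $G$ of $s$ alone; since $s = m(U,\eta)$, it is a function of $(\eta,U)$, and the inclusion $\sigma(V_{\ref{a:strict-monotonicity}},U)\subseteq\sigma(\eta,U)$ again follows by measurability. For the reverse inclusion I would first recover $s = G^{-1}(V_{\ref{a:strict-monotonicity}})$ by inverting $G$, and then recover $\eta = m^{-1}(s,U)$ using strict monotonicity of $m(U,\cdot)$ (\ref{a:strict-monotonicity}.\ref{a:mon-2}); composing these two inversions exhibits $\eta$ as $\sigma(V_{\ref{a:strict-monotonicity}},U)$-measurable.

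The main obstacle is establishing invertibility of $G$. For each fixed $u'$ the integrand $F_{\eta|U=u'}\big(m^{-1}(s,u')\big)$ is a composition of the strictly monotone map $s\mapsto m^{-1}(s,u')$ (from \ref{a:strict-monotonicity}.\ref{a:mon-2}) with the strictly increasing CDF $F_{\eta|U=u'}$ (\ref{a:strict-monotonicity}.\ref{a:mon-con}), hence strictly monotone in $s$; integrating against $F(u')$ then yields a strictly monotone, and therefore invertible, $G$. The delicate point I would be careful about is sign-consistency: strict monotonicity of $m(u',\cdot)$ must point in the same direction for $F(u')$-almost every $u'$, since otherwise the integrand terms could partly cancel and break monotonicity of the integral. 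I would dispatch this by observing that the direction of $m$ in its disturbance argument is a harmless normalisation that can be fixed globally (say, increasing) without loss of generality, so that the terms move together and $G$ is strictly monotone. With $G$ invertible, the two inversions combine to deliver the reverse inclusion, and chaining the two equalities gives $\sigma(V_{\ref{a:strict-monotonicity}},U)=\sigma(V,U)=\sigma(\eta,U)$, as claimed.
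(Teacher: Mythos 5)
Your proposal is correct and takes essentially the same route as the paper's proof: you write $V_{\ref{a:strict-monotonicity}}$ as a $U$-free, strictly monotone function $G$ of $s=h^{-1}(A,Z)$ and invert it, which is precisely the paper's ``no-crossing'' argument showing that $\eta(u)=m^{-1}(h^{-1}(A,Z),u)$ is uniquely identified by its mean over $u$, and you then recover $\eta$ from $V$ via the strictly increasing conditional CDF $F_{\eta\mid U}$, exactly as the paper does in its final step. The direction-consistency caveat you flag is handled the same way the paper handles it implicitly, namely by reading assumption \ref{a:strict-monotonicity}.\ref{a:mon-2} as monotonicity in a common direction for all $u$ (the paper's displayed inequality $m^{-1}(m_1,U)>m^{-1}(m_0,U)$ for all $m_1>m_0$ and all $U\in\mathcal{U}$), so your proof is no less rigorous than the original on this point.
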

We include graphical intuition for the result that $\eta$ is identified exactly from the mean of its conditional density, $V_{\ref{a:strict-monotonicity}}$, and $U$, in the proof of lemma \ref{l:same-info} in the appendix section of proofs \ref{sec:proofs}. As a consequence of lemma \ref{l:same-info}, conditioning on the feasible control quantity $V_{\ref{a:strict-monotonicity}}$ jointly with $U$ is as good as conditioning on $V = F_{A | Z, U}(A | Z, U)$ (which corresponds to a unique $\eta$) with $U$. Hence, if $U$ were observed, conditioning on $V_{\ref{a:strict-monotonicity}}$ would still identify the causal effect $J$. The conditional exogeneity of $Z$ renders $V_{\ref{a:strict-monotonicity}}$ a valid control function conditional on $U$. We formulate this idea in lemma \ref{t:still-valid}, which formally states that conditioning the IPW, REG and DR estimators on $V_{\ref{a:strict-monotonicity}}$ and $U$ instead of $V$ and $U$ leads to a similarly unbiased estimator. Let the IPW, REG and DR estimators be defined as in lemma \ref{l:simple-cf}.

\begin{theorem} \label{t:still-valid}
Under assumptions \ref{a:icc}, \ref{a:strict-monotonicity} and \ref{a:icc-first-add},
\begin{align*}
\E{\phi_{IPW}(Y, A, V_{\ref{a:strict-monotonicity}}, U; f_{A | V_{\ref{a:strict-monotonicity}}, U})} &= \E{\phi_{IPW}(Y, A, V, U; f_{A | V, U})}, \\
\E{\phi_{REG}(V_{\ref{a:strict-monotonicity}}, U; k_{0, V_{\ref{a:strict-monotonicity}}})} &= \E{\phi_{REG}(V, U; k_{0, v})},  \\
\E{\phi_{DR}(Y, A, V_{\ref{a:strict-monotonicity}}, U; f_{A | V_{\ref{a:strict-monotonicity}}, U}, k_{0, V_{\ref{a:strict-monotonicity}}})} &= \E{\phi_{DR}(Y, A, V, U; f_{A | V, U}, k_{0, v})}.
\end{align*}
\end{theorem}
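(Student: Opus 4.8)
The plan is to reduce all three displayed equalities to the single information-theoretic fact supplied by Lemma \ref{l:same-info}, namely $\sigma(V_{\ref{a:strict-monotonicity}}, U) = \sigma(V, U) = \sigma(\eta, U)$, and then to observe that each of the three functionals is a deterministic map applied to $(Y, A)$ together with conditional objects that are measurable with respect to, and determined by, this common sigma algebra. Since those conditional objects are consequently identical as random variables, the integrands coincide almost surely and their expectations must agree. Assumptions \ref{a:icc}, \ref{a:strict-monotonicity} and \ref{a:icc-first-add} enter here only through Lemma \ref{l:same-info} and through guaranteeing that $V_{\ref{a:strict-monotonicity}}$ is well defined with the positivity/common-support needed for the IPW and DR functionals to be finite; the measure-theoretic core is otherwise self-contained.

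First I would note that adjoining $A$ preserves the equality of generated sigma algebras, so $\sigma(A, V_{\ref{a:strict-monotonicity}}, U) = \sigma(A, V, U)$. From $\sigma(V_{\ref{a:strict-monotonicity}}, U) = \sigma(V, U)$ the regular conditional distribution of $A$ given $(V_{\ref{a:strict-monotonicity}}, U)$ coincides with that given $(V, U)$; taking Radon--Nikodym derivatives with respect to $\mu_A$ yields the almost-sure identity of the generalised propensity scores,
\begin{align*}
f_{A \mid V_{\ref{a:strict-monotonicity}}, U}(A \mid V_{\ref{a:strict-monotonicity}}, U) = f_{A \mid V, U}(A \mid V, U).
\end{align*}
Analogously, because $\E{Y \mid A, V_{\ref{a:strict-monotonicity}}, U}$ and $\E{Y \mid A, V, U}$ are conditional expectations with respect to the same sigma algebra $\sigma(A, V_{\ref{a:strict-monotonicity}}, U) = \sigma(A, V, U)$, their versions agree, so that $k_{0, V_{\ref{a:strict-monotonicity}}}(a', V_{\ref{a:strict-monotonicity}}, U) = k_{0, v}(a', V, U)$ almost surely for every $a' \in \mathcal{A}$.

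With these two identities in hand the three equalities follow by direct substitution. For the IPW functional the integrand is $Y\,\pi(A)/f_{A \mid V, U}(A \mid V, U)$, which the propensity-score identity leaves unchanged, so $\phi_{IPW}$ agrees almost surely and the expectations match. For the regression functional the integral $\int_{\mathcal{A}} k_{0, v}(a', V, U)\,\pi(a')\,\dif\mu_A(a')$ depends on the control variable only through $k_{0, v}(a', V, U)$, which the regression identity fixes, so $\phi_{REG}$ agrees almost surely. The doubly robust functional is the sum of an IPW-type term and the regression integral and therefore inherits both identities, agreeing almost surely as well. Taking expectations of each almost-sure equality delivers the claim.

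The main obstacle is the propensity-score step: passing from equality of the conditioning sigma algebras to pointwise equality of the \emph{evaluated} densities requires a disintegration / regular-conditional-distribution argument, and when $A$ is continuous one must interpret $k_{0, v}(a', \cdot)$ as a conditional expectation given the event $\{A = a'\}$ rather than merely as some measurable version. I would make this rigorous by fixing jointly measurable versions of the conditional density and of the regression function on the common sigma algebra and invoking their essential uniqueness up to $\mu_A \otimes \mathbb{P}$-null sets, so that the evaluated objects coincide almost everywhere. Everything else is bookkeeping once Lemma \ref{l:same-info} is granted.
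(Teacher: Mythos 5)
Your proposal is correct and hinges on the same pivotal input as the paper's proof, Lemma \ref{l:same-info}, but the mechanism is genuinely different. The paper never compares the two generalised propensity scores at all: for $\phi_{IPW}$ it conditions on $(V_{\ref{a:strict-monotonicity}}, U)$ and integrates $A$ out against its conditional density, so the weight $\pi(A)/f_{A \mid V_{\ref{a:strict-monotonicity}}, U}$ cancels and the expression collapses to the regression form $\E{\int_{\mathcal{A}} \E{Y \mid A=a, V_{\ref{a:strict-monotonicity}}, U} \pi(a) \dif \mu_A(a)}$; Lemma \ref{l:same-info} is invoked only to swap the inner conditional expectation to $\E{Y \mid A=a, V, U}$, after which the steps are reversed (via the computations in Lemma \ref{l:simple-cf}) to refold into $\phi_{IPW}$ with $(V, U)$, and the DR claim is reduced to the REG claim because the correction term has zero mean under the true regression. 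You instead upgrade the sigma-algebra equality into almost-sure identities of the conditional objects themselves --- both the regression function and the conditional density of $A$, each being determined by the common sigma algebra $\sigma(V_{\ref{a:strict-monotonicity}}, U) = \sigma(V, U)$ --- and conclude that all three integrands coincide almost surely. That is a strictly stronger conclusion than the paper's equality of expectations, and it makes the DR case immediate with no zero-mean argument. What it costs you is the regular-conditional-distribution/disintegration machinery and the version-fixing you flag at the end; note that the delicate point you identify (evaluating $k_{0,v}(a', \cdot)$ at counterfactual arguments $a'$, where agreement of versions holds a priori only up to null sets of the conditional law of $A$ given the common sigma algebra) is equally present in the paper's proof, whose swap is likewise performed at fixed $a$ inside $\int_{\mathcal{A}} \cdot \; \pi(a) \dif \mu_A(a)$; to get $\mu_A$-a.e. agreement on $\{\pi \neq 0\}$ both arguments implicitly need positivity of the generalised propensity score there, i.e. the common-support condition \ref{a:common-support-simple}, which the theorem statement omits but which Lemma \ref{l:simple-cf} (cited in the paper's proof) assumes. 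In short: your route is sound, measure-theoretically heavier, and delivers a slightly stronger (pathwise) statement; the paper's route is more elementary, needing only the tower property and uniqueness of conditional expectations, at the price of the unfold--swap--refold bookkeeping.
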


Theorem \ref{t:still-valid} is one of our main results, but not immediately useful. Neither are $U$ ever observed, nor have we shown how to identify $V_{\ref{a:strict-monotonicity}}$ from observable data. We address the identifiability of control quantity $V_{\ref{a:strict-monotonicity}}$ in section \ref{sssec:control-bridge-id} with bridge functions. 

\subsubsection{Identification of the Control Quantity} \label{sssec:control-bridge-id}
We assume the existence of a control bridge function $\tau_{A,0}$ in assumption \ref{a:control-bridge}. For its definition, we use variables $W_0$ and $W_1$, where $W_0 \in \mathcal{W}_0$ and $W_1 \in \mathcal{W}_1$. $W_0$ and $W_1$ need to have sufficient independent variation conditional on $U$, but they must not be fully conditionally independent. That conditionally independent part of variation in $W_0$ and $W_1$ must be sufficiently relevant for $U$. $W_0$ and $W_1$ may be elements of the outcome-inducing proxy $W \in \mathcal{W}$, or not.
\begin{assumption} \label{a:control-bridge}
Let $(W_0, W_1) \in \mathcal{W}_0 \times \mathcal{W}_1$. There exists a bridge function $\tau_{A, 0} \in L_2(Z, W_1)$ for all $A \in \mathcal{A}$ such that
\begin{align}
\E{ \tau_{A, 0}(Z, W_1)   | Z, W_0, U } &=  F(A | Z, U). \label{eq:control-bridge-1}
\end{align}
and $\kappa_{0} \in L_2(Z, W_0)$ such that
\begin{align}
\E{ \kappa_{0}(Z, W_0) | Z, W_1, U} &= \frac{f(U)}{f(U | Z)}. \label{eq:control-bridge-2}
\end{align}
\end{assumption}
Assumption \ref{a:control-bridge} is a requirement on the amount of independent variation in $W_0$ and $W_1$ conditional on $U$. $W_0$ and $W_1$ must not be fully independent conditional on $U$. However, some independent variation must be contained in $W_0$ and $W_1$, and this independent variation must be sufficiently rich with respect to $U$.

Let $\mathbb{T}_0$ and $\mathbb{K}_0$ 
be the nonempty (assumption \ref{a:control-bridge}) sets of valid control bridge functions defined by
\begin{align}
\mathbb{T}_0 &= \left\{ \tau_A \in L_2(Z, W_1): \E{ \tau_A(Z, W_1)   | Z, U, W_0 } = F(A | Z, U) \right\} \neq \emptyset, \\ 
\mathbb{K}_0 &= \left\{ \kappa \in L_2(Z, W_0): \E{ \kappa(Z, W_0) | Z, U, W_1 } = \frac{f(U)}{f(U| Z)} \right\} \neq \emptyset.
\end{align}
So far, the existence of such bridge functions does not immediately help with identification, because their defining conditional moments involve the unobserved $U$. Using the same proximal learning logic as in section \ref{sec:id-outcome}, lemma \ref{l:control-obs} states that all valid bridge functions above also satisfy a different set of observed conditional moments.
\begin{lemma} \label{l:control-obs}
Under assumption 
\ref{a:icc-first-add}, 
any $\tau_{A,0} \in \mathbb{T}_0$ and $\kappa_{0} \in \mathbb{K}_0$ 
satisfy that
\begin{align}
\E{ \tau_{A,0}(Z, W_1) | Z, W_0}  &= F(A | Z, W_0). \label{eq:control-obs-mom-1} \\
\E{\kappa_{0}(Z, W_0) | Z, W_1} &= \frac{f(W_1)}{f(W_1 | Z)}. \label{eq:control-obs-mom-2}
\end{align}
\end{lemma}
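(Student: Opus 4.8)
The plan is to prove both identities by the same proximal-learning device used in section \ref{sec:id-outcome}: start from the $U$-conditional moment that defines each bridge-function set, apply the tower property to integrate $U$ out against a purely observable conditioning set, and then collapse the remaining expectation using the appropriate conditional independence in assumption \ref{a:icc-first-add}.

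For equation \ref{eq:control-obs-mom-1}, I would take any $\tau_{A,0} \in \mathbb{T}_0$ and, since $\sigma(Z, W_0) \subseteq \sigma(Z, U, W_0)$, write $\E{\tau_{A,0}(Z, W_1) \mid Z, W_0} = \E{\E{\tau_{A,0}(Z, W_1) \mid Z, U, W_0} \mid Z, W_0} = \E{F(A \mid Z, U) \mid Z, W_0}$, where the last equality substitutes the defining moment of $\mathbb{T}_0$. It then remains to show $\E{F(A \mid Z, U) \mid Z, W_0} = F(A \mid Z, W_0)$. Expressing $F(A \mid Z, U)$ as a conditional expectation of a treatment indicator and invoking assumption \ref{a:icc-first-add}.\ref{a:f-nci}, which delivers $A \CI W_0 \mid (Z, U)$, the indicator's conditional expectation is unchanged by adjoining $W_0$ to the conditioning set; a second use of the tower property then collapses $\E{F(A \mid Z, U) \mid Z, W_0}$ to $F(A \mid Z, W_0)$, giving the claim.

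For equation \ref{eq:control-obs-mom-2}, the same first move gives $\E{\kappa_0(Z, W_0) \mid Z, W_1} = \E{f(U)/f(U \mid Z) \mid Z, W_1}$. Here I would first rewrite the integrand by Bayes' rule as $f(U)/f(U \mid Z) = f(Z)/f(Z \mid U)$. Assumption \ref{a:icc-first-add}.\ref{a:f-nca}, namely $W_1 \CI Z \mid U$, yields $f(Z \mid U, W_1) = f(Z \mid U)$, so the conditional density of $U$ given $(Z, W_1)$ factorises as $f(U \mid Z, W_1) = f(Z \mid U)\, f(U, W_1)/f(Z, W_1)$. Integrating $f(Z)/f(Z \mid U)$ against this density cancels the $f(Z \mid U)$ factor and leaves $f(Z)\, f(W_1)/f(Z, W_1) = f(W_1)/f(W_1 \mid Z)$, which is exactly the right-hand side.

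The tower-property manipulations are routine, so the main obstacle I anticipate is the density bookkeeping in equation \ref{eq:control-obs-mom-2}: the Bayes identity $f(U)/f(U \mid Z) = f(Z)/f(Z \mid U)$ is precisely what converts the target into a form where $W_1 \CI Z \mid U$ can act, and care is needed to ensure the relevant conditional densities are well defined (e.g. positivity of $f(Z \mid U)$ on the support) so that the cancellation is legitimate.
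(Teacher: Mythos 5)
Your proposal is correct and follows essentially the same route as the paper's proof: both identities are obtained by conditioning the $U$-level defining moment on the observable set via the tower property, then using \ref{a:icc-first-add}.\ref{a:f-nci} (i.e.\ $A \CI W_0 \mid Z, U$) to collapse $\E{F(A \mid Z, U) \mid Z, W_0}$ to $F(A \mid Z, W_0)$, and the Bayes identity $f(U)/f(U \mid Z) = f(Z)/f(Z \mid U)$ together with \ref{a:icc-first-add}.\ref{a:f-nca} ($W_1 \CI Z \mid U$) to cancel $f(Z \mid U)$ and arrive at $f(W_1)/f(W_1 \mid Z)$. Your indicator-function justification of the first collapse and your remark on positivity of the conditional densities merely make explicit steps the paper leaves implicit; the argument is otherwise identical.
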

The above result suggests that we may learn the bridge function $\tau_{A,0}$ and $\kappa_0$ 
from the set of observed bridge functions $\mathbb{T}_0^{\text{obs}}$ and $\mathbb{K}_0^{\text{obs}}$, defined below.
\begin{align}
\mathbb{T}_0^{\text{obs}} &= \left\{ \tau_A \in L_2(Z, W_1): \E{ \tau_A(Z, W_1)   | Z, W_0 } = F(A | Z, W_0) \right\} \neq \emptyset, \\
\mathbb{K}_0^{\text{obs}} &= \left\{ \kappa \in L_2(Z, W_0): \E{ \kappa(Z, W_0) | Z, W_1 } = \frac{f(W_1)}{f(W_1 | Z)} \right\} \neq \emptyset.
\end{align}
As no assumptions are imposed on the uniqueness of the observed bridge functions, it remains uncertain how to identify the control quantity $V_{\ref{a:strict-monotonicity}}$ from them. In this regard, lemma \ref{l:control-obs-id} states that any $\tau_A \in \mathbb{T}_0^{\text{obs}}$ can identify $V_{\ref{a:strict-monotonicity}}$.
\begin{lemma} \label{l:control-obs-id}
Suppose assumptions 
\ref{a:icc-first-add} and \ref{a:control-bridge} hold. 
It follows that for any $\tau_A \in L_2(Z, W_1)$ and $\kappa_0 \in \mathbb{K}_0^{\text{obs}}$,
\begin{align*}
\int_{\mathcal{W}_1} \tau_A(Z, W_1) \dif F(W_1) - V_{\ref{a:strict-monotonicity}} &= \E{ \kappa_0(Z, W_0) \E{ \tau_A(Z, W_1) - F(A | Z, W_0) | Z, W_0} | Z }.
\end{align*}
Hence, for any $\tau_A \in \mathbb{T}_0^{\text{obs}}$ as long as $\mathbb{K}_0^{\text{obs}} \neq \emptyset$,
\begin{align*}
V_{\ref{a:strict-monotonicity}} &= \int_{\mathcal{W}_1} \tau_A(Z, W_1) \dif F(W_1).
\end{align*}
\end{lemma}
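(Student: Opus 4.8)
The plan is to read $A$ as a fixed value in $\mathcal{A}$ throughout — the index of the bridge family $\tau_A$ and the evaluation point of the conditional c.d.f.'s — so that $F(A | Z,W_0)$ and $V_{\ref{a:strict-monotonicity}}$ are deterministic functions of $(A,Z)$ and every conditional expectation below is taken over $(W_0,W_1)$ given $Z$ with this value held fixed. The displayed identity is then an algebraic consequence of the defining observed moment of $\mathbb{K}_0^{\text{obs}}$ together with the existence of a valid control bridge $\tau_{A,0}\in\mathbb{T}_0$. First I would expand the right-hand side: since $\kappa_0(Z,W_0)$ is measurable with respect to $(Z,W_0)$, the tower property collapses the nested conditional expectations to $\E{\kappa_0(Z,W_0)\bigl(\tau_A(Z,W_1)-F(A | Z,W_0)\bigr)|Z}$, which I would split into two terms and treat separately.

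The first term is handled by a reweighting identity. Using that $\kappa_0\in\mathbb{K}_0^{\text{obs}}$ satisfies $\E{\kappa_0(Z,W_0)|Z,W_1}=f(W_1)/f(W_1|Z)$, I condition on $(Z,W_1)$ and pull out the $(Z,W_1)$-measurable factor to obtain, for any $\psi\in L_2(Z,W_1)$,
\[ \E{\kappa_0(Z,W_0)\,\psi(Z,W_1)|Z}=\E{\psi(Z,W_1)\,\frac{f(W_1)}{f(W_1 | Z)}| Z}=\int_{\mathcal{W}_1}\psi(Z,w_1)\dif F(w_1), \]
because the weight $f(W_1)/f(W_1|Z)$ exactly converts the conditional law $f(w_1|Z)$ into the marginal $f(w_1)$. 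Taking $\psi=\tau_A$ matches the first part of the target left-hand side, namely $\int_{\mathcal{W}_1}\tau_A(Z,w_1)\dif F(w_1)$.

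The second term, showing $\E{\kappa_0(Z,W_0)F(A | Z,W_0)|Z}=V_{\ref{a:strict-monotonicity}}$, is the main obstacle, since the defining property of $\kappa_0$ is only an observed moment and does not by itself pin down $\E{\kappa_0|Z,U}$. I would circumvent this by invoking the valid control bridge $\tau_{A,0}\in\mathbb{T}_0$, which exists by assumption \ref{a:control-bridge} and, by lemma \ref{l:control-obs}, also lies in $\mathbb{T}_0^{\text{obs}}$, so that $F(A | Z,W_0)=\E{\tau_{A,0}(Z,W_1)|Z,W_0}$. Substituting this, pulling the $(Z,W_0)$-measurable $\kappa_0$ back inside, and reusing the reweighting identity with $\psi=\tau_{A,0}$ gives $\E{\kappa_0(Z,W_0)F(A | Z,W_0)|Z}=\int_{\mathcal{W}_1}\tau_{A,0}(Z,w_1)\dif F(w_1)$. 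It then remains to show this marginal integral equals $V_{\ref{a:strict-monotonicity}}=\int_{\mathcal{U}}F_{A|Z,U}(A | Z,u)\dif F(u)$: averaging the original bridge moment $\E{\tau_{A,0}(Z,W_1)|Z,W_0,U}=F(A | Z,U)$ over $W_0$ gives $\E{\tau_{A,0}(Z,W_1)|Z,U}=F(A | Z,U)$; substituting this into the definition of $V_{\ref{a:strict-monotonicity}}$, exchanging the order of integration, and applying $W_1\CI Z | U$ (assumption \ref{a:icc-first-add}.\ref{a:f-nca}) to replace $f(w_1|Z,u)$ by $f(w_1|u)$ collapses $\int_{\mathcal{U}}f(w_1|u)\dif F(u)$ to the marginal $f(w_1)$. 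This density-marginalization, where the conditional exogeneity $W_1\CI Z | U$ does the essential work, is the delicate step.

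Combining the two terms establishes the displayed identity for arbitrary $\tau_A\in L_2(Z,W_1)$. The concluding claim is then immediate: if $\tau_A\in\mathbb{T}_0^{\text{obs}}$ then $\E{\tau_A(Z,W_1)-F(A | Z,W_0)|Z,W_0}=0$ by definition, so the right-hand side vanishes and $V_{\ref{a:strict-monotonicity}}=\int_{\mathcal{W}_1}\tau_A(Z,w_1)\dif F(w_1)$, provided $\mathbb{K}_0^{\text{obs}}\neq\emptyset$ so that at least one weighting function $\kappa_0$ is available to run the argument. A cleaner but less self-contained route for the second term would first establish the equivalence $\mathbb{K}_0=\mathbb{K}_0^{\text{obs}}$ (paralleling lemma \ref{l:h-equal-outcome}) and then use the original moment $\E{\kappa_0|Z,U}=f(U)/f(U|Z)$ together with $A\CI W_0 | (Z,U)$ (assumption \ref{a:icc-first-add}.\ref{a:f-nci}) directly; I expect the former, self-contained argument, which needs only $\kappa_0\in\mathbb{K}_0^{\text{obs}}$ as stated, to be the one intended.
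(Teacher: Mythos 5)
Your proposal is correct and follows essentially the same route as the paper's own proof: the same reweighting identity $\E{\kappa_0(Z,W_0)\,\psi(Z,W_1)\,|\,Z}=\int_{\mathcal{W}_1}\psi(Z,w_1)\dif F(w_1)$ derived from the defining moment of $\mathbb{K}_0^{\text{obs}}$, the same substitution $F(A\,|\,Z,W_0)=\E{\tau_{A,0}(Z,W_1)\,|\,Z,W_0}$ via a valid bridge $\tau_{A,0}\in\mathbb{T}_0$ and lemma \ref{l:control-obs}, and the same use of $W_1\CI Z\,|\,U$ to collapse $\int_{\mathcal{U}}\E{\tau_{A,0}(Z,W_1)\,|\,Z,U=u}\dif F(u)$ to $\int_{\mathcal{W}_1}\tau_{A,0}(Z,w_1)\dif F(w_1)=V_{\ref{a:strict-monotonicity}}$. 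The only difference is organizational (you decompose the right-hand side into two terms, whereas the paper establishes the two identities separately and subtracts), which is immaterial.
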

The above result means that a control function exists and can be estimated from $Z$, $W_0$ and $W_1$ under richness and (some, not full) independence requirements for $W_0$ and $W_1$. Importantly, once the control function is identified, the causal effect $J$ can be estimated with action and outcome bridge functions similarly to standard proximal learning \citep{cui2020}.

\subsection{Action and Outcome Bridge Functions} \label{ssec:bridge-functions}
In subsections \ref{ssec:cfccnifs} and \ref{ssec:control-bridge}, we established that valid control functions exist for the endogenous variation $\eta$ in treatment $A$ when the unobserved common confounders $U$ are absent from the first stage (assumption \ref{a:strict-monotonicity-no-cf-fs}), and when $U$ enter the first-stage in a complexity-restricted way (assumptions \ref{a:strict-monotonicity}, \ref{a:icc-first-add}, \ref{a:control-bridge}). Identification of causal effect $J$ with unobserved common confounders $U$ was not yet explained. In this section, we use ideas from standard proximal learning \citep{cui2020} to address the remaining confounding problem stemming from unobserved $U$ as a source of variation in instruments $Z$. 
Properties of the action and outcome bridge function properties are explained in section \ref{sssec:bridge-functions-properties}, their identification is discussed in section \ref{sssec:bridge-functions-id}.

\subsubsection{Properties of the Action and Outcome Bridge Functions} \label{sssec:bridge-functions-properties}
First, we use assumption \ref{a:bridge-functions} to (re-)specify the exclusion and relevance for $U$ of outcome-inducing proxies $W$. The outcome bridge \ref{eq:outcome-bridge} for the regression $k_{0, \tilde{V}}(A, \tilde{V}, U)$ sufficiently describes the relevance requirement for $W$, and the action bridge \ref{eq:action-bridge} for $Z$ with the generalised propensity score (with contrast function) $\frac{\pi(A)}{f(A | \tilde{V}, U)}$. The existence of the bridge functions is the key identification assumption. For example, the action bridge \ref{eq:action-bridge} existence poses a stronger richness requirement on $Z$ compared to standard proximal learning \citep{kallus2021}. Intuitively, the instruments $Z$ must be sufficiently relevant for $A$ and $U$, where the latter is satisfied by definition when $U$ captures all unobserved variables conditional on which $Z$ would satisfy an exclusion restriction. 
\begin{assumption}[Bridge functions] \label{a:bridge-functions}
$W(a, z) = W \CI (A, Z) \ | \ U$. 
Let $\tilde{V} = V_{\ref{a:strict-monotonicity-no-cf-fs}}$ if \ref{a:strict-monotonicity-no-cf-fs} holds, and $\tilde{V} = V_{\ref{a:strict-monotonicity}}$ if \ref{a:strict-monotonicity} holds.
There exist some function $h_0 \in L_2(A, W)$ and $q_0 \in \pi L \in L_2(A, Z)$ such that 
\begin{align}
\E{h_0(A, \tilde{V}, W) | A, \tilde{V}, U} &= k_{0, \tilde{V}}(A, \tilde{V}, U), \label{eq:outcome-bridge} \\
\E{\pi(A) q_0(A, \tilde{V}, Z) | A, \tilde{V}, U} &= \frac{\pi(A)}{f(A | \tilde{V}, U)} \label{eq:action-bridge}
\end{align}
almost surely.
\end{assumption}
Equation \ref{eq:outcome-bridge} is the outcome bridge function and \ref{eq:action-bridge} the action bridge function. While their existence is assumed, their uniqueness is not. 

Let $\mathbb{H}_0$ and $\mathbb{Q}_0$ be the nonempty (assumption \ref{a:bridge-functions}) sets of valid outcome and action bridge functions defined by
\begin{align}
\mathbb{H}_0 &= \left\{ h \in L_2(A, \tilde{V}, W): \E{Y - h(A, \tilde{V}, W) | A, \tilde{V}, U} = 0 \right\} \neq \emptyset, \\
\mathbb{Q}_0 &= \left\{ q \text{ with } \pi q \in L_2(A, \tilde{V}, Z): \E{\pi(A) (q(A, \tilde{V}, Z) - 1/f(A|\tilde{V},U)) | A, \tilde{V}, U} = 0 \right\} \neq \emptyset.
\end{align}
Any valid bridge function can identify $J$. We adapt lemma 2 from \cite{kallus2021} to our common confounding model for lemma \ref{l:main}. The causal effect $J$ is identified in the common confounding model with any valid bridge function.
\begin{lemma} \label{l:main}
Let $\mathcal{T}: L_2(A, \tilde{V}, W) \rightarrow L_2(\tilde{V}, W)$ be the linear operator defined by $(\mathcal{T}h)(\tilde{v}, w) = \int_{\mathcal{A}} h(a, \tilde{v}, w) \pi(a) d\mu_A(a)$. Suppose \ref{a:common-support-simple} and \ref{a:bridge-functions} hold. 
Suppose either \ref{a:strict-monotonicity-no-cf-fs} ($\tilde{V} = V_{\ref{a:strict-monotonicity-no-cf-fs}}$), or (\ref{a:strict-monotonicity}, \ref{a:icc-first-add}, \ref{a:control-bridge}) ($\tilde{V} = V_{\ref{a:strict-monotonicity}}$) hold.
Then, for any $h_0 \in \mathbb{H}_0$ and $q_0 \in \mathbb{Q}_0$,
\begin{align*}
J = &\E{\tilde{\phi}_{IPW}(Y, A, \tilde{V}, Z; q_0)} = \E{\tilde{\phi}_{REG}(\tilde{V}, W; h_0)} = \E{\tilde{\phi}_{DR}(Y, A, \tilde{V}, W; h_0, q_0)} \\
\text{where} & \ \ \ \ \tilde{\phi}_{IPW}(y, a, \tilde{v}, z; q_0) = y \pi(a) q_0(a,\tilde{v},z) \\
& \ \ \ \ \tilde{\phi}_{REG}(\tilde{v}, w; h_0) = (\mathcal{T} h_0)(\tilde{v}, w) \\
& \ \ \ \ \tilde{\phi}_{DR}(y, a, \tilde{v}, z, w; h_0, q_0) = (y - h_0(a, \tilde{v}, w)) \pi(a) q_0(a, \tilde{v}, z) + (\mathcal{T} h_0)(\tilde{v}, w) 
\end{align*}
\end{lemma}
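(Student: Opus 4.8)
The plan is to start from the oracle identification that would hold if $U$ were observed and then replace the two unobserved-$U$ objects---the conditional mean $k_{0,\tilde V}$ and the reciprocal generalised propensity score $1/f(A\mid \tilde V, U)$---by the bridge functions $h_0$ and $q_0$, which depend only on observables. Combining lemma \ref{l:simple-cf} with theorem \ref{t:still-valid} gives the two oracle identities $J = \E{Y \tfrac{\pi(A)}{f(A\mid \tilde V, U)}} = \E{\int_{\mathcal A} k_{0,\tilde V}(a, \tilde V, U)\,\pi(a)\,d\mu_A(a)}$, so it suffices to match each of $\E{\tilde\phi_{IPW}}$, $\E{\tilde\phi_{REG}}$, $\E{\tilde\phi_{DR}}$ to one of them. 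First I would record two structural facts used throughout: (i) because $\tilde V$ (either $V_{\ref{a:strict-monotonicity-no-cf-fs}}$ or $V_{\ref{a:strict-monotonicity}}$) is a deterministic function of $(A,Z)$, the exclusion $W \CI (A,Z)\mid U$ in assumption \ref{a:bridge-functions} upgrades to $W \CI (A,\tilde V)\mid U$; and (ii) by the one-to-one map in case \ref{a:strict-monotonicity-no-cf-fs} and by lemma \ref{l:same-info} in case \ref{a:strict-monotonicity}, conditioning on $(\tilde V, U)$ is equivalent to conditioning on $(\eta, U)$. The argument uses only the defining moments of $\mathbb{H}_0$ and $\mathbb{Q}_0$, so it holds for \emph{any} valid $h_0, q_0$, consistent with the non-uniqueness allowed by the statement.

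For the REG identity I would condition the integrand on $(\tilde V, U)$ and integrate out $W$. By (i) the conditional law of $W$ given $(\tilde V, U)$ equals its law given $U$, so $\E{(\mathcal{T} h_0)(\tilde V, W) \mid \tilde V, U} = \int_{\mathcal A}\bar h_0(a, \tilde V, U)\pi(a)\,d\mu_A(a)$, where $\bar h_0(a, \tilde v, u) := \int h_0(a, \tilde v, w)\,dF_{W\mid U}(w\mid u)$. The defining moment of $\mathbb{H}_0$ gives $\bar h_0 = k_{0,\tilde V}$ at the realized triple $(A, \tilde V, U)$; the remaining step---and the main obstacle here---is to promote this almost-sure equality to every $a$ with $\pi(a)\neq 0$ appearing inside the integral. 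This is exactly what the common support assumption \ref{a:common-support-simple} delivers: for every $(\tilde v, u)$ in the support of $(\tilde V, U)$ and every $a$ with $\pi(a)\neq 0$, the point $(a,\tilde v,u)$ lies in the support of $(A, \tilde V, U)$, so the bridge equation is valid there. Substituting yields $\E{\tilde\phi_{REG}} = \E{\int_{\mathcal A} k_{0,\tilde V}(a,\tilde V, U)\pi(a)\,d\mu_A(a)} = J$.

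For the IPW identity the key is the conditional independence $Y \CI Z \mid (A, \tilde V, U)$. I would obtain it by writing $Y = g(A, U, \varepsilon)$, so that given $(A,U)$ the outcome is a function of $\varepsilon$, and then combining (ii) with $(\eta,\varepsilon)\CI Z\mid U$ to get $\varepsilon \CI Z \mid (\eta, U)$; since $A$ is a deterministic function of $Z$ once $(\eta, U)$ is fixed, adjoining $A$ to the conditioning set preserves this, giving $\varepsilon \CI Z \mid (A, \tilde V, U)$ and hence $Y \CI Z \mid (A, \tilde V, U)$. Conditioning on $(Y, A, \tilde V, U)$ and invoking this independence lets me replace $\pi(A)q_0(A,\tilde V, Z)$ by its conditional expectation, which by the action-bridge moment of $\mathbb{Q}_0$ equals $\pi(A)/f(A\mid \tilde V, U)$; therefore $\E{\tilde\phi_{IPW}} = \E{Y\pi(A)/f(A\mid \tilde V, U)} = J$.

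Finally, for the DR identity I would split $\E{\tilde\phi_{DR}} = \E{(Y - h_0(A,\tilde V, W))\pi(A)q_0(A,\tilde V, Z)} + \E{(\mathcal{T} h_0)(\tilde V, W)}$. The second term equals $J$ by the REG step, so it remains to show the correction term vanishes. Strengthening the previous argument to the joint statement $(Y, W) \CI Z \mid (A, \tilde V, U)$---which follows by the same deterministic-function reasoning, now also using $W \CI (A,Z)\mid U$ from (i)---I would condition on $(A, \tilde V, U, Z)$, so that the inner expectation of $Y - h_0(A,\tilde V, W)$ does not depend on $Z$ and equals $\E{Y - h_0 \mid A, \tilde V, U} = 0$ by the definition of $\mathbb{H}_0$; hence the correction term is zero and $\E{\tilde\phi_{DR}} = J$. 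I expect the conditional-independence bookkeeping in the IPW and DR steps---in particular justifying that conditioning on the collider-like $A$ (a function of $Z$ and $\eta$) preserves the needed independences---to be the most delicate point, alongside the common-support argument used to extend the bridge equality across the full range of $a$ in the REG step.
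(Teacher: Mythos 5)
Your proof is correct and follows essentially the same route as the paper's: in each of the three cases you replace the bridge function by its conditional mean given $(A,\tilde V,U)$ using the conditional-independence structure and the defining moments of $\mathbb{H}_0$ and $\mathbb{Q}_0$, and then conclude with the oracle identification via theorem \ref{t:still-valid} (or, in the \ref{a:strict-monotonicity-no-cf-fs} case, directly via the one-to-one correspondence with $\eta$ and lemma \ref{l:simple-cf}). If anything you are more careful than the paper: you derive $Y \CI Z \mid (A,\tilde V,U)$ and $(Y,W) \CI Z \mid (A,\tilde V,U)$ explicitly from the structural equations, lemma \ref{l:same-info}, and $(\eta,\varepsilon)\CI Z\mid U$ (where the paper loosely attributes these to assumption \ref{a:icc-first-add}), and you make explicit the role of the common support assumption \ref{a:common-support-simple} in extending the bridge equality from the realized $(A,\tilde V,U)$ to every $a$ with $\pi(a)\neq 0$ inside the integral, a point the paper's REG step passes over silently.
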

Notably, all moment functions in lemma \ref{l:main} are functions of observed variables only. By simple replacement of the infeasible regression $k_0(A, V, U)$ and generalised propensity score $f^{-1}(A | V, U)$ with their feasible counterpart bridge functions $h_0(A, \tilde{V}, W)$ and $q_0(A, \tilde{V}, Z)$, we can identify causal effect $J$. Again, this is only useful if we can identify the bridge functions.

\subsubsection{Identification of the Action and Outcome Bridge Functions} \label{sssec:bridge-functions-id}

The existence of bridge functions as defined in assumption \ref{a:bridge-functions} is not immediately useful, because these bridge functions are defined conditional on the unobserved common confounders $U$. Fortunately, these bridge functions also satisfy some conditional moment equations involving only observable data. This result is stated in lemma \ref{l:obs} below, where the bridge functions defined in assumption \ref{a:bridge-functions} are solutions to feasible conditional moment equations.
\begin{lemma} \label{l:obs}
Suppose \ref{a:common-support-simple} and \ref{a:bridge-functions} hold ($\mathbb{H}_0 \neq \emptyset$ and $\mathbb{Q}_0 \neq \emptyset$). 
Suppose either \ref{a:strict-monotonicity-no-cf-fs} ($\tilde{V} = V_{\ref{a:strict-monotonicity-no-cf-fs}}$), or (\ref{a:strict-monotonicity}, \ref{a:icc-first-add}, \ref{a:control-bridge}) ($\tilde{V} = V_{\ref{a:strict-monotonicity}}$) hold.
Then, any $h_0 \in \mathbb{H}_0$ and $q_0 \in \mathbb{Q}_0$ satisfy that
\begin{align}
\E{\left. Y - h_0(A, \tilde{V}, W) \right| A, \tilde{V}, Z} &= 0 \label{eq:obs-mom-1} \\
\E{\left. \pi(A) \left(q_0(A, \tilde{V}, Z) - \frac{1}{f(A|\tilde{V},W)}\right) \right| A, \tilde{V}, W} &= 0 \label{eq:obs-mom-2}
\end{align}
\end{lemma}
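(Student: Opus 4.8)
The plan is to prove both displayed moment equations by the same proximal-learning device that was already used for Lemma \ref{l:obs-npiv-outcome}: introduce the unobserved confounder $U$ by the law of iterated expectations, collapse the inner conditional expectation onto the \emph{defining} moment of the relevant bridge-function set (either $\mathbb{H}_0$ or $\mathbb{Q}_0$), and then take the outer expectation. The conditional-independence facts I will harvest are all consequences of the maintained assumptions: from $W \CI (A,Z)\mid U$ in \ref{a:bridge-functions} (together with the fact that $\tilde V$ is a measurable function of $(A,Z)$) I get $W \CI (A,\tilde V,Z)\mid U$, hence both $W \CI Z \mid (A,\tilde V,U)$ and $W \CI A \mid (\tilde V, U)$; and from $Y=g(A,U,\varepsilon)$ with $(\eta,\varepsilon)\CI Z\mid U$ and the information equivalence of $(\tilde V,U)$ and $(\eta,U)$ established in Lemma \ref{l:same-info}, I get $Y \CI Z \mid (A,\tilde V,U)$.

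For the outcome-bridge equation \ref{eq:obs-mom-1}, note first that conditioning on $(A,\tilde V, Z)$ is the same as conditioning on $(A,Z)$ because $\tilde V$ is $\sigma(A,Z)$-measurable. Inserting $U$ gives $\E{Y-h_0(A,\tilde V,W)\mid A,\tilde V,Z}=\E{\E{Y-h_0(A,\tilde V,W)\mid A,\tilde V,Z,U}\mid A,\tilde V,Z}$. In the inner expectation $A$ and $\tilde V$ are fixed, so the only random inputs are $\varepsilon$ (through $Y$) and $W$ (through $h_0$); by $Y\CI Z\mid(A,\tilde V,U)$ and $W\CI Z\mid(A,\tilde V,U)$, each term's conditional mean is unchanged when $Z$ is dropped, so the inner expectation equals $\E{Y-h_0(A,\tilde V,W)\mid A,\tilde V,U}$, which is zero by the definition of $\mathbb{H}_0$. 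The outer expectation of zero is zero.

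For the action-bridge equation \ref{eq:obs-mom-2}, I would split the moment into the $q_0$ term and the term $\E{\pi(A)/f(A\mid\tilde V,W)\mid A,\tilde V,W}$, the latter being deterministic given $(A,\tilde V,W)$ and hence equal to $\pi(A)/f(A\mid\tilde V,W)$. For the $q_0$ term I insert $U$ and use $Z\CI W\mid(A,\tilde V,U)$ to reduce $\E{\pi(A)q_0(A,\tilde V,Z)\mid A,\tilde V,W,U}$ to $\E{\pi(A)q_0(A,\tilde V,Z)\mid A,\tilde V,U}=\pi(A)/f(A\mid\tilde V,U)$ by the definition of $\mathbb{Q}_0$, leaving $\pi(A)\,\E{1/f(A\mid\tilde V,U)\mid A,\tilde V,W}$. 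The crux is then the density identity $\E{1/f(A\mid\tilde V,U)\mid A,\tilde V,W}=1/f(A\mid\tilde V,W)$, which follows from Bayes' rule applied to $f(u\mid a,\tilde v,w)=f(a\mid\tilde v,u)f(u\mid\tilde v,w)/f(a\mid\tilde v,w)$, where the key simplification $f(a\mid\tilde v,w,u)=f(a\mid\tilde v,u)$ uses $A\CI W\mid(\tilde V,U)$; the factor $f(a\mid\tilde v,u)$ cancels and $\int f(u\mid\tilde v,w)\,d\mu_U(u)=1$. Substituting back makes the two terms cancel, giving zero.

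I expect the main obstacle to be bookkeeping around $\tilde V$ rather than any single hard computation: because $\tilde V$ is constructed from $(A,Z)$, I must verify that conditioning on it does not destroy the conditional independences inherited from $W\CI(A,Z)\mid U$ and $(\eta,\varepsilon)\CI Z\mid U$. This is precisely where the strict-monotonicity structure of \ref{a:strict-monotonicity-no-cf-fs}/\ref{a:strict-monotonicity} and the sigma-algebra equivalence in Lemma \ref{l:same-info} do the work, letting me treat $(A,\tilde V,U)$ as carrying the same information as $(A,\eta,U)$. The Bayes/density-ratio identity is the computational heart of the $q$-side and is the one step worth writing out in full.
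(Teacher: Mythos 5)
Your proposal is correct and follows essentially the same route as the paper's proof: condition on $U$ via iterated expectations, use $W \CI (A,Z) \mid U$ and the exogeneity/monotonicity structure to drop $Z$ (resp.\ $W$) from the inner expectation, invoke the defining moments of $\mathbb{H}_0$ and $\mathbb{Q}_0$, and close the action side with the identical Bayes-rule identity $\E{1/f(A\mid\tilde V,U) \mid A,\tilde V,W} = 1/f(A\mid\tilde V,W)$. Your explicit derivation of the needed conditional independences (via the $\sigma(A,Z)$-measurability of $\tilde V$ and Lemma \ref{l:same-info}) is in fact slightly more careful than the paper, which simply asserts $(W,Y)\CI Z \mid A,U$ and $Z \CI W \mid A,U$.
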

With these conditional moments of observable data we may identify a different set of observable bridge functions.
\begin{align}
\mathbb{H}^\text{obs}_0 &= \{ h \in L_2(A, \tilde{V}, W): \E{\left. Y - h(A, \tilde{V}, W) \right| A, \tilde{V}, Z} = 0 \} \neq \emptyset \\
\mathbb{Q}^\text{obs}_0 &= \{ q \text{ with } \pi q \in L_2(A, \tilde{V}, Z): \E{\left. \pi(A) (q(A, \tilde{V}, Z) - 1/f(A|\tilde{V},W)) \right| A, \tilde{V}, W} = 0 \} \neq \emptyset
\end{align}
From lemma \ref{l:obs} we know that all proper bridge functions according to assumption \ref{a:bridge-functions}, which are in the sets $\mathbb{H}_0$ and $\mathbb{Q}_0$, are also in $\mathbb{H}^\text{obs}_0$ and $\mathbb{Q}^\text{obs}_0$: $\mathbb{H}_0 \subseteq \mathbb{H}^\text{obs}_0$ and $\mathbb{Q}_0 \subseteq \mathbb{Q}^\text{obs}_0$. Assumption \ref{a:bridge-functions} for the existence of bridge functions ensures $\mathbb{H}^\text{obs}_0 \neq \emptyset$ and $\mathbb{Q}^\text{obs}_0 \neq \emptyset$. We follow the logic in \cite{kallus2021} for the common confounding model to show that any $h_0 \in \mathbb{H}^\text{obs}_0$ and $q_0 \in \mathbb{Q}^\text{obs}_0$, which satisfy the observed conditional moments \ref{eq:obs-mom-1} and \ref{eq:obs-mom-2}, also identify the causal effect $J$.
\begin{lemma} \label{l:5}
Suppose \ref{a:common-support-simple} holds, and $W \CI (A, Z) \ | \ U$.
Suppose either \ref{a:strict-monotonicity-no-cf-fs} ($\tilde{V} = V_{\ref{a:strict-monotonicity-no-cf-fs}}$), or (\ref{a:strict-monotonicity}, \ref{a:icc-first-add}, \ref{a:control-bridge}) ($\tilde{V} = V_{\ref{a:strict-monotonicity}}$) hold.
Take any $h_0 \in \mathbb{H}^\text{obs}_0$, assuming $\mathbb{H}^\text{obs}_0 \neq 0$ and $\mathbb{Q}_0 \neq \emptyset$. Then, 
\begin{align}
\E{\tilde{\phi}_{IPW}(Y, A, \tilde{V}, Z; q) - J} &= \E{h_0(A, \tilde{V}, W) \E{\pi(A) (q(A, \tilde{V}, Z) - 1/f(A | \tilde{V}, W)) | A, \tilde{V}, W}}. \label{eq:l5-ipw}
\end{align}
Take any $q_0 \in \mathbb{Q}^\text{obs}_0$, assuming $\mathbb{Q}^\text{obs}_0 \neq 0$ and $\mathbb{H}_0 \neq \emptyset$. Then, 
\begin{align}
\E{\tilde{\phi}_{REG}(\tilde{V}, W; h) - J} &= \E{\pi(A) q_0(A, \tilde{V}, Z) \E{h(A, \tilde{V}, W) - Y | A, \tilde{V}, Z}}. \label{eq:l5-reg}
\end{align}
\end{lemma}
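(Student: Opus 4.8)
The plan is to read both displays of Lemma \ref{l:5} as \emph{bias decompositions} and to exploit the symmetry between the outcome bridge $h$ and the action bridge $q$. I would prove \ref{eq:l5-ipw} in full and then obtain \ref{eq:l5-reg} by running the identical argument with the roles of the pairs $(\mathbb{H}_0,\mathbb{H}_0^{\text{obs}})$ and $(\mathbb{Q}_0,\mathbb{Q}_0^{\text{obs}})$ interchanged. In each case the backbone is the same three moves: expand the estimator using the observable conditional moment satisfied by the bridge it already contains; produce a representation of $J$ in the \emph{same} algebraic form by routing through a \emph{true} bridge of the \emph{opposite} family (whose existence is guaranteed by the nonemptiness assumed in the hypotheses); and collapse the resulting difference into a single conditional moment by the tower property.

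For \ref{eq:l5-ipw}, fix $h_0\in\mathbb{H}_0^{\text{obs}}$, which by definition satisfies $\E{Y\mid A,\tilde V,Z}=\E{h_0(A,\tilde V,W)\mid A,\tilde V,Z}$. First I would show that for \emph{any} admissible $q$, conditioning $\tilde\phi_{IPW}$ on $(A,\tilde V,Z)$ and substituting this identity yields $\E{\tilde\phi_{IPW}(Y,A,\tilde V,Z;q)}=\E{\pi(A)\,q(A,\tilde V,Z)\,h_0(A,\tilde V,W)}$. Next, since $\mathbb{Q}_0\neq\emptyset$ I would pick a true action bridge $q^\ast\in\mathbb{Q}_0$; Lemma \ref{l:main} gives $\E{\tilde\phi_{IPW}(Y,A,\tilde V,Z;q^\ast)}=J$, while Lemma \ref{l:obs} gives $q^\ast\in\mathbb{Q}_0^{\text{obs}}$, i.e.\ $\E{\pi(A)q^\ast(A,\tilde V,Z)\mid A,\tilde V,W}=\pi(A)/f(A\mid\tilde V,W)$. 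Applying the displayed identity with $q=q^\ast$, then conditioning on $(A,\tilde V,W)$ and integrating over $A$ against $f(\cdot\mid\tilde V,W)$, I recover $J=\E{(\mathcal{T}h_0)(\tilde V,W)}=\E{\tilde\phi_{REG}(\tilde V,W;h_0)}$. Subtracting this from the expansion of $\E{\tilde\phi_{IPW}(\cdots;q)}$ and pulling the $(A,\tilde V,W)$-measurable factors $h_0$, $\pi(A)$ and $1/f(A\mid\tilde V,W)$ inside a conditional expectation delivers exactly the right-hand side of \ref{eq:l5-ipw}.

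For \ref{eq:l5-reg}, fix $q_0\in\mathbb{Q}_0^{\text{obs}}$, so that $\E{\pi(A)q_0(A,\tilde V,Z)\mid A,\tilde V,W}=\pi(A)/f(A\mid\tilde V,W)$. Writing $\tilde\phi_{REG}(\tilde V,W;h)=(\mathcal{T}h)(\tilde V,W)$, integrating over $A$ against $f(\cdot\mid\tilde V,W)$ and then using this moment gives $\E{\tilde\phi_{REG}(\tilde V,W;h)}=\E{\pi(A)q_0(A,\tilde V,Z)\,h(A,\tilde V,W)}$ for any admissible $h$. Now since $\mathbb{H}_0\neq\emptyset$ I would pick a true outcome bridge $h^\ast\in\mathbb{H}_0\subseteq\mathbb{H}_0^{\text{obs}}$ (the inclusion is Lemma \ref{l:obs}); Lemma \ref{l:main} gives $\E{\tilde\phi_{REG}(\tilde V,W;h^\ast)}=J$, and feeding $h=h^\ast$ into the previous display and then conditioning on $(A,\tilde V,Z)$ (using $h^\ast\in\mathbb{H}_0^{\text{obs}}$) yields $J=\E{\pi(A)q_0(A,\tilde V,Z)\,Y}=\E{\tilde\phi_{IPW}(Y,A,\tilde V,Z;q_0)}$. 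Subtracting this from the expansion of $\E{\tilde\phi_{REG}(\cdots;h)}$ and conditioning the common factor $\pi(A)q_0(A,\tilde V,Z)$ on $(A,\tilde V,Z)$ produces the right-hand side of \ref{eq:l5-reg}.

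I expect the main obstacle to be conceptual rather than computational: one must establish $\E{\tilde\phi_{REG}(\tilde V,W;h_0)}=J$ for an \emph{observable} bridge $h_0\in\mathbb{H}_0^{\text{obs}}$, and symmetrically $\E{\tilde\phi_{IPW}(\cdots;q_0)}=J$ for $q_0\in\mathbb{Q}_0^{\text{obs}}$. This does not follow from Lemma \ref{l:main} alone, because in this first-stage section the sets $\mathbb{H}_0^{\text{obs}}$ and $\mathbb{Q}_0^{\text{obs}}$ are only known to \emph{contain} $\mathbb{H}_0$ and $\mathbb{Q}_0$ (Lemma \ref{l:obs}) and have not been shown to coincide with them, unlike the equivalence $\mathbb{H}_0=\mathbb{H}_0^{\text{obs}}$ obtained in the outcome-model case (Lemma \ref{l:h-equal-outcome}). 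The device that closes this gap is the cross-substitution above: an observable bridge of one family is paired with a \emph{true} bridge of the opposite family, and the true bridge's observable moment (via Lemma \ref{l:obs}) is precisely what converts the expectation into the required $J$-representation. The remaining care is purely bookkeeping, namely checking that every conditioning set contains the arguments of the factors being pulled out, so that the tower property and the density integration over $A$ are legitimate.
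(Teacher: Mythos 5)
Your proposal is correct and follows essentially the same route as the paper's proof: the same expansion of each estimator via the observable conditional moment of the bridge it already contains, the same cross-substitution of a \emph{true} bridge of the opposite family (through Lemmas \ref{l:obs} and \ref{l:main}), and the same final subtraction to produce the bias term. If anything, your derivation of the intermediate identity $J=\E{h_0(A,\tilde V,W)\,\pi(A)/f(A\mid\tilde V,W)}$ --- re-applying your step-one expansion at $q=q^\ast\in\mathbb{Q}_0$ and invoking Lemma \ref{l:main} --- is cleaner than the paper's literal chain, which instead replaces $h_0(A,\tilde V,W)$ by $\E{Y\mid A,\tilde V,W}$ and factors a product inside a conditional expectation, steps your route avoids while reaching the same endpoints.
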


From lemma \ref{l:5}, it follows that for any $h_0 \in \mathbb{H}^\text{obs}_0$ and $q_0 \in \mathbb{Q}^\text{obs}_0$, all three estimators identify causal effect $J$. We state this result formally in theorem \ref{t:4}.
\begin{theorem} \label{t:4}
Suppose \ref{a:common-support-simple} holds, and $W \CI (A, Z) \ | \ U$.
Suppose either \ref{a:strict-monotonicity-no-cf-fs} ($\tilde{V} = V_{\ref{a:strict-monotonicity-no-cf-fs}}$), or (\ref{a:strict-monotonicity}, \ref{a:icc-first-add}, \ref{a:control-bridge}) ($\tilde{V} = V_{\ref{a:strict-monotonicity}}$) hold.

Suppose $\mathbb{Q}_0 \neq \emptyset$ and $\mathbb{H}_0^{\text{obs}} \neq \emptyset$. Then, for any $q_0 \in \mathbb{Q}_0^{\text{obs}}$
\begin{align}
\E{\tilde{\phi}_{IPW}(Y, A, \tilde{V}, Z; q_0)} &= J. \label{eq:l4-ipw}
\end{align}

Suppose $\mathbb{H}_0 \neq \emptyset$ 
and $\mathbb{Q}_0^{\text{obs}} \neq \emptyset$. Then, for any $h_0 \in \mathbb{H}_0^{\text{obs}}$
\begin{align}
\E{\tilde{\phi}_{REG}(\tilde{V}, W; h_0)} &= J. \label{eq:l4-reg}
\end{align}

Suppose either $\{ \mathbb{Q}_0 \neq \emptyset$ and $\mathbb{H}_0^{\text{obs}} \neq \emptyset \}$ or $\{ \mathbb{H}_0 \neq \emptyset$ and $\mathbb{Q}_0^{\text{obs}} \neq \emptyset \}$.
Then, for any $h_0 \in \mathbb{H}_0^\text{obs}$ and $q_0 \in \mathbb{Q}_0^\text{obs}$,
\begin{align}
\E{\tilde{\phi}_{DR}(Y, A, \tilde{V}, Z, W; h_0, q_0)} &= J. \label{eq:l4-dr}
\end{align}
\end{theorem}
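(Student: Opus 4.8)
The plan is to obtain all three identities as near-immediate corollaries of Lemma \ref{l:5}, whose two displayed identities already express each estimator's bias as the expectation of one bridge function multiplied against the other estimator's defining conditional moment. The one preliminary observation I need is that the structural non-emptiness hypotheses transfer to the observable sets: by Lemma \ref{l:obs} we have $\mathbb{H}_0 \subseteq \mathbb{H}^{\text{obs}}_0$ and $\mathbb{Q}_0 \subseteq \mathbb{Q}^{\text{obs}}_0$, so $\mathbb{H}_0 \neq \emptyset$ guarantees some $h_0 \in \mathbb{H}^{\text{obs}}_0$ and $\mathbb{Q}_0 \neq \emptyset$ guarantees some $q_0 \in \mathbb{Q}^{\text{obs}}_0$. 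This is what makes the objects in the statement well-defined under each hypothesis.

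For the IPW identity \ref{eq:l4-ipw}, under $\mathbb{Q}_0 \neq \emptyset$ and $\mathbb{H}^{\text{obs}}_0 \neq \emptyset$ I fix any $h_0 \in \mathbb{H}^{\text{obs}}_0$ and invoke the IPW line of Lemma \ref{l:5} with the chosen $q_0 \in \mathbb{Q}^{\text{obs}}_0$. Its right-hand side is $\E{h_0(A,\tilde{V},W)\,\E{\pi(A)(q_0(A,\tilde{V},Z) - 1/f(A|\tilde{V},W))\mid A,\tilde{V},W}}$, and the inner conditional expectation is precisely the defining moment \ref{eq:obs-mom-2} of $\mathbb{Q}^{\text{obs}}_0$, hence identically zero; the whole right-hand side vanishes, leaving $\E{\tilde{\phi}_{IPW}(\cdot;q_0)}=J$. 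The REG identity \ref{eq:l4-reg} is symmetric: under $\mathbb{H}_0 \neq \emptyset$ and $\mathbb{Q}^{\text{obs}}_0 \neq \emptyset$ I fix $q_0 \in \mathbb{Q}^{\text{obs}}_0$ and apply the REG line of Lemma \ref{l:5} with $h=h_0 \in \mathbb{H}^{\text{obs}}_0$; now the right-hand side carries the factor $\E{h_0-Y\mid A,\tilde{V},Z}$, which is the negative of moment \ref{eq:obs-mom-1} defining $\mathbb{H}^{\text{obs}}_0$ and therefore zero.

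The doubly robust identity \ref{eq:l4-dr} I would handle by two parallel decompositions, one tailored to each hypothesis set, using that $\tilde{\phi}_{DR} = (Y-h_0)\pi(A)q_0 + (\mathcal{T}h_0)$. Under $\{\mathbb{H}_0 \neq \emptyset,\ \mathbb{Q}^{\text{obs}}_0 \neq \emptyset\}$ I write $\tilde{\phi}_{DR} = \tilde{\phi}_{REG}(\cdot;h_0) + (Y-h_0)\pi(A)q_0$; taking expectations, the last term conditions down to $\E{\pi(A)q_0\,\E{Y-h_0\mid A,\tilde{V},Z}}$, which is exactly the negative of the right-hand side produced by the REG line of Lemma \ref{l:5}, so the two cancel. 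Under $\{\mathbb{Q}_0 \neq \emptyset,\ \mathbb{H}^{\text{obs}}_0 \neq \emptyset\}$ I instead write $\tilde{\phi}_{DR} = \tilde{\phi}_{IPW}(\cdot;q_0) + [(\mathcal{T}h_0) - h_0\pi(A)q_0]$ and use the importance-weighting identity $\E{(\mathcal{T}h_0)} = \E{h_0(A,\tilde{V},W)\pi(A)/f(A|\tilde{V},W)}$, which holds because integrating $h_0\pi/f$ against the conditional law of $A$ given $(\tilde{V},W)$ reproduces $\mathcal{T}h_0$; the bracketed correction then equals the negative of the right-hand side from the IPW line of Lemma \ref{l:5}, and again cancels.

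I expect the only genuine friction to be bookkeeping in the DR step: matching each decomposition to the hypothesis set under which the corresponding Lemma \ref{l:5} identity is licensed, and verifying the importance-weighting identity $\E{(\mathcal{T}h_0)} = \E{h_0\pi(A)/f(A|\tilde{V},W)}$, which is the one place a genuine (though routine) computation with the generalized propensity score enters. Everything else is substitution of a defining conditional moment into an already-established bias formula, so the theorem should follow in a few lines per case.
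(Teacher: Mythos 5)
Your proposal is correct and follows essentially the same route as the paper's own proof: the IPW and REG identities are obtained exactly as in the paper by substituting the defining observable moments \ref{eq:obs-mom-2} and \ref{eq:obs-mom-1} into the bias formulas of Lemma \ref{l:5}, and your DR argument uses the same two case-specific decompositions and the same importance-weighting identity $\E{(\mathcal{T}h_0)(\tilde{V},W)} = \E{h_0(A,\tilde{V},W)\,\pi(A)/f(A\,|\,\tilde{V},W)}$ that the paper employs. The only (cosmetic) difference is that you phrase the final step as a cancellation against Lemma \ref{l:5}'s bias formulas, whereas the paper notes that each correction term vanishes outright by the definition of $\mathbb{H}_0^{\text{obs}}$ or $\mathbb{Q}_0^{\text{obs}}$ and then cites the already-established identities \ref{eq:l4-reg} and \ref{eq:l4-ipw} --- when $h_0 \in \mathbb{H}_0^{\text{obs}}$ and $q_0 \in \mathbb{Q}_0^{\text{obs}}$ both terms are individually zero, so the two phrasings coincide.
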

This key result in theorem \ref{t:4} establishes the identifiability of causal effect $J$ from only observable data in the common confounding model with some first stage reduced form monotonicity assumption.

As a side note, only a relaxed version of assumption \ref{a:bridge-functions} is needed, because e.g. unbiasedness of the regression estimator $\tilde{\phi}_{REG}$ only requires $\mathbb{Q}^{\text{obs}}_0 \neq \emptyset$ instead of $\mathbb{Q}_0 \neq \emptyset$. With the doubly robust estimator, the relaxation is even more explicit in theorem \ref{t:4}. Despite these technically feasible relaxations, completeness assumptions on the structural form, i.e. the completeness of $Z$ and $W$ with respect to $U$, likely are easier to comprehend and already imply $\mathbb{H}_0 \neq \emptyset$ and $\mathbb{Q}_0 \neq \emptyset$. 
Corollary \ref{cor:l4-stronger} states a simplified version of theorem \ref{t:4} with these slightly stronger assumptions ($\mathbb{H}_0 \neq \emptyset$ and $\mathbb{Q}_0 \neq \emptyset$).

\begin{corollary} \label{cor:l4-stronger}
Suppose \ref{a:common-support-simple} and \ref{a:bridge-functions} hold ($\mathbb{H}_0 \neq \emptyset$ and $\mathbb{Q}_0 \neq \emptyset$). 
Suppose either \ref{a:strict-monotonicity-no-cf-fs} ($\tilde{V} = V_{\ref{a:strict-monotonicity-no-cf-fs}}$), or (\ref{a:strict-monotonicity}, \ref{a:icc-first-add}, \ref{a:control-bridge}) ($\tilde{V} = V_{\ref{a:strict-monotonicity}}$) hold. Then, for any $h_0 \in \mathbb{H}_0^\text{obs}$ and $q_0 \in \mathbb{Q}_0^\text{obs}$,
\begin{align}
J = &\E{\tilde{\phi}_{IPW}(Y, A, \tilde{V}, Z; q_0)} = \E{\tilde{\phi}_{REG}(Y, A, \tilde{V}, W; h_0)} = \E{\tilde{\phi}_{DR}(Y, A, \tilde{V}, Z, W; h_0, q_0)}.
\end{align}
\end{corollary}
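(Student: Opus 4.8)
The plan is to treat this as a direct corollary of theorem \ref{t:4}, whose three parts already deliver the IPW, REG and DR identities, each under its own non-emptiness hypotheses on the four bridge-function sets $\mathbb{H}_0, \mathbb{Q}_0, \mathbb{H}_0^{\text{obs}}, \mathbb{Q}_0^{\text{obs}}$. All that remains is to verify that the stronger hypothesis of the corollary—assumption \ref{a:bridge-functions}, i.e. $\mathbb{H}_0 \neq \emptyset$ and $\mathbb{Q}_0 \neq \emptyset$—implies every non-emptiness condition required by the separate parts of theorem \ref{t:4}, and then to read off the three equalities and chain them.

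First I would record the four non-emptiness facts. By assumption \ref{a:bridge-functions} we have $\mathbb{H}_0 \neq \emptyset$ and $\mathbb{Q}_0 \neq \emptyset$ directly. Next, invoking lemma \ref{l:obs}—whose own hypotheses (assumption \ref{a:common-support-simple}, assumption \ref{a:bridge-functions}, and one of the two monotonicity cases) are precisely those assumed in the corollary—every $h_0 \in \mathbb{H}_0$ satisfies the observable conditional moment \ref{eq:obs-mom-1} and every $q_0 \in \mathbb{Q}_0$ satisfies \ref{eq:obs-mom-2}; hence $\mathbb{H}_0 \subseteq \mathbb{H}_0^{\text{obs}}$ and $\mathbb{Q}_0 \subseteq \mathbb{Q}_0^{\text{obs}}$. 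Combining the two steps yields $\mathbb{H}_0^{\text{obs}} \neq \emptyset$ and $\mathbb{Q}_0^{\text{obs}} \neq \emptyset$ as well. The selected monotonicity case fixes $\tilde{V}$ to either $V_{\ref{a:strict-monotonicity-no-cf-fs}}$ or $V_{\ref{a:strict-monotonicity}}$, used consistently in all three estimators.

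With all four sets nonempty, I would apply the three parts of theorem \ref{t:4} in turn. The IPW identity \ref{eq:l4-ipw} requires $\mathbb{Q}_0 \neq \emptyset$ and $\mathbb{H}_0^{\text{obs}} \neq \emptyset$, both now in hand, giving $\E{\tilde{\phi}_{IPW}(Y,A,\tilde{V},Z;q_0)} = J$ for any $q_0 \in \mathbb{Q}_0^{\text{obs}}$. The REG identity \ref{eq:l4-reg} requires $\mathbb{H}_0 \neq \emptyset$ and $\mathbb{Q}_0^{\text{obs}} \neq \emptyset$, again satisfied, giving $\E{\tilde{\phi}_{REG}(\tilde{V},W;h_0)} = J$ for any $h_0 \in \mathbb{H}_0^{\text{obs}}$. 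Finally the DR identity \ref{eq:l4-dr} requires only that one of the two bracketed pairs of conditions hold; since both hold here, the disjunction is met trivially and $\E{\tilde{\phi}_{DR}(Y,A,\tilde{V},Z,W;h_0,q_0)} = J$ for any admissible pair. Chaining the three equalities produces the single display of the corollary.

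I do not expect a substantive obstacle, since the mathematical content is carried entirely by theorem \ref{t:4} and lemma \ref{l:obs}. The only points demanding care are bookkeeping: matching each part's non-emptiness hypothesis to the facts established above, confirming that the common $\tilde{V}$ is selected once and shared across the three estimators, and noting the purely cosmetic discrepancy whereby the corollary writes the regression estimator as $\tilde{\phi}_{REG}(Y,A,\tilde{V},W;h_0)$ while theorem \ref{t:4} and lemma \ref{l:main} define it as $\tilde{\phi}_{REG}(\tilde{V},W;h_0) = (\mathcal{T}h_0)(\tilde{V},W)$, which depends on neither $Y$ nor $A$; these extra arguments are inert and leave the value unchanged.
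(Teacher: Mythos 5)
Your proposal is correct and follows exactly the route the paper intends: the paper offers no separate proof of corollary \ref{cor:l4-stronger}, treating it as an immediate consequence of theorem \ref{t:4} once lemma \ref{l:obs} delivers $\mathbb{H}_0 \subseteq \mathbb{H}_0^{\text{obs}}$ and $\mathbb{Q}_0 \subseteq \mathbb{Q}_0^{\text{obs}}$, so that assumption \ref{a:bridge-functions} simultaneously satisfies every non-emptiness hypothesis of the three parts. Your bookkeeping of the hypotheses (including noting that assumption \ref{a:bridge-functions} supplies the conditional independence $W \CI (A,Z) \mid U$ required by theorem \ref{t:4}, and that the extra arguments $(Y,A)$ in the corollary's $\tilde{\phi}_{REG}$ are a harmless notational slip) is accurate and complete.
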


\section{Numerical Examples} \label{sec:num-examples}
We provide examples of bridge functions in linear, discrete and nonparametric models in this subsection.

\subsection{Linear model}  \label{ssec:example-linear}
Let $Y_i \in \mathbb{R}$, $A_i \in \mathbb{R}$, $U_i \in \mathbb{R}^{d_U}$, $Z_i \in \mathbb{R}^{d_Z}$ and $W_i \in \mathbb{R}^{d_W}$, for observation $i \in \{ 1, 2, \hdots, n \}$. The parameter of interest is $\beta$, the linear effect of treatment $A$ on outcome $Y$.

\begin{align*}
Y_i &= A_i {\beta} + U_i \underset{d_U \times 1}{\gamma_Y} + W_i \underset{W \times 1}{\zeta} + \varepsilon_{Y, i} & \E{\varepsilon_{Y, i} | Z_i} &= 0,   \\
A_i &= Z_i \underset{d_Z \times d_A}{\pi} + U_i \underset{d_U \times d_A}{\gamma_A} + \varepsilon_{A, i} & \E{\varepsilon_{A, i} | Z_i} &= 0, \ \pi \neq 0, \ d_Z \geq (d_A + d_U) \\
W_i &= U_i \underset{d_U \times d_W}{\gamma_W} + \varepsilon_{W, i} & \E{\varepsilon_{W, i} | Z_i} &= 0
\end{align*}
Any bias in the IV estimator stems from changes in $\E{U_i | Z_i}$ as $Z_i$ changes. Suppose the parameter matrix $\gamma_W$ has full rank. Then, if $d_W \geq d_U$, holding $\E{W_i | Z_i}$ constant is equivalent to holding $\E{U_i | Z_i}$ constant. Consequently, conditioning on $\E{W_i | Z_i}$, which itself has at most column rank $d_U \leq d_W$, eliminates all confounding bias from the common confounders $U$. 
We could write the parameter of interest, $\beta$, in the ratio form
\begin{align*}
\beta &= \frac{ \Cov{ Y_i, Z_i \pi \left| \E{W_i | Z_i} \right. } }{ \Cov{A_i, Z_i \pi \left| \E{W_i | Z_i} \right.} }.
\end{align*}
From the above formula follows a consistent estimator $\hat{\beta}_{ICC}$, which we can write in matrix notation. Let $(Y, A, Z, W, U)$ be the $n$-rowed vectors and matrices of $n$ stacked sample observations. We let $P_Z \coloneqq Z ( Z^\intercal Z)^{-1} Z^\intercal$. Then, the estimator of the $n \times d_W$ matrix $\E{W | Z}$ is $\hat{W} \coloneqq P_Z W$. For simplicity, suppose that $d_W = d_U$, which implies that $\E{W | Z}$ has full column rank. If $d_W > d_U$, we would use exactly $d_U$ linearly independent columns of the $\E{W | Z}$ matrix. Let
\begin{align*}
P_{\hat{W}} &\coloneqq \hat{W} \left( \hat{W}^\intercal \hat{W} \right)^{-1} \hat{W}^\intercal \\
 &= P_Z W \left( W^\intercal P_Z W \right)^{-1} W^\intercal P_Z, \\
M_{\hat{W}} &\coloneqq I_n - P_{\hat{W}} = I_n - P_Z W \left( W^\intercal P_Z W \right)^{-1} W^\intercal P_Z
\end{align*}
The reduced form of the outcome model and the ICC estimator can be written as
\begin{align*}
Y &= A \beta + \E{W | Z} \delta + \varepsilon_{ICC}, \ \ \ \ \delta \coloneqq \gamma_W^\intercal \left( \gamma_W \gamma_W^\intercal \right)^{-1} \left( \gamma_Y + \gamma_W \zeta \right), \\ 
\varepsilon_{ICC} &\coloneqq \varepsilon_Y + \varepsilon_W \zeta + \left( U - \E{U | Z} \right)  \left( \gamma_Y + \gamma_W \zeta \right), \\ 
\hat{\beta}_{ICC} &= \left( A^\intercal M_{\hat{W}} P_Z A \right)^{-1} \left( A^\intercal M_{\hat{W}} P_Z Y \right).
\end{align*}
The asymptotics of the estimator then follow from a few simple steps of algebra using a central limit theorem under standard assumptions.
\begin{align*}
\sqrt{n} \left( \hat{\beta}_{ICC} - \beta \right)
&= \left( \frac{1}{n} A^\intercal  P_Z M_{\hat{W}} A \right)^{-1} \left( \frac{1}{\sqrt{n}} A^\intercal M_{\hat{W}} P_Z \varepsilon_{ICC} \right) \\
&\overset{d}{\rightarrow} N \left( 0, \left( A^\intercal  P_Z M_{\hat{W}} A \right)^{-1} \left( A^\intercal  M_{\hat{W}} P_Z \E{  \varepsilon_{ICC} \varepsilon_{ICC}^\intercal } P_Z M_{\hat{W}} A \right) \left( A^\intercal  P_Z M_{\hat{W}} A \right)^{-1} \right).
\end{align*}
Another interpretation is to separate the estimator into stages, like 2SLS. Here, we get the predicted values $\hat{A} \coloneqq P_Z A$ and $\hat{W} = P_Z W$ in the first stage. Then we regress $Y$ on the predicted values $\hat{A}$ and $\hat{W}$. The OLS estimates in the second stage are the point estimates of the ICC estimator $\hat{\beta}_{ICC}$ and the auxiliary OLS slope parameter $\hat{\delta}_{ICC}$. The correct standard errors are easy to calculate from the above asymptotic distribution once we realise that $\hat{\varepsilon}_{ICC} = Y - A \hat{\beta}_{ICC} - P_Z W \hat{\delta}_{ICC}$. Estimating $\pi$ via OLS in the first stage conditional on $\hat{W}$, we could have set up a numerically equivalent control function estimator using $A - Z \hat{\pi}$ and $\hat{W}$.

Importantly, the variation in instruments $Z$ must still be relevant for treatment $A$ after conditioning on the predicted values of $W$ given $Z$, $P_Z W$. This requirement is generally asymptotically satisfied when $d_Z \geq (d_A + d_U)$ with individually relevant instruments for a $d_A$-dimensional treatment $A$ (except for some numerical special cases where $Z$ correlates with $A$ identically as with $U$). Despite this asymptotic guarantee, sample variation implies that the estimator $P_Z W$ always has the largest possible column rank $\min \{d_W, d_Z\}$, even when the true column rank of its estimand $\E{W | Z}$ is $d_U < \min \{d_W, d_Z \}$. Thus, whenever $d_W \geq d_Z$, the rank of the estimator $P_Z W$ would be $d_Z$, and therefore must be reduced to some rank $r$, which satisfies $d_U \geq r < d_Z$. As $d_U$. This necessary in-sample rank reduction is justified by the true rank $d_U$ of estimand $\E{W | Z}$, to which the estimator $P_Z W$ converges.

\subsection{Discrete model}  \label{ssec:example-discrete}
\subsubsection{Discrete model with outcome model restrictions}  \label{sssec:discrete-outcome}
In the case of outcome model restrictions, recall that 
\begin{align*}
Y = Y(A) &= {k_0(A, U)} + \varepsilon, & \E{\varepsilon | Z, U} &= 0.
\end{align*}
Let $A, Z, W, U$ be discrete variables, which take values $a_{j_a}$, $z_{j_z}$, $w_{j_w}$, $u_{j_U}$ for $j_a = 1, \hdots, d_A$, $j_z = 1, \hdots, d_Z$, $j_w = 1, \hdots, d_W$ and $j_U = 1, \hdots, d_U$.
To define the control bridge function \ref{eq:npiv-bridge-outcome}, we first define some useful matrices of conditional probabilities:
\begin{align*}
\underset{1 \times d_U}{K_{0}(A=a_{j_A}, \pmb{U})} &\text{ has $(j_{U})$th element } {k_0(A=a_{j_{A}}, U=u_{j_{U}})}, \\
\underset{d_W \times d_U}{P \left( \pmb{W} | \pmb{U} \right)} &\text{ has $(j_{W}, j_{U})$th element } \Prp{W=w_{j_{W}} \left| U=u_{j_{U}} \right.} \\ 
\underset{(d_A \times d_W) \times d_Z}{P \left( \pmb{A}, \pmb{W} | \pmb{Z} \right)} &\text{ has $((j_{A} \times j_{W}), j_{U})$th element } \Prp{A=a_{j_{A}}, W=w_{j_{W}} \left| U=u_{j_{U}} \right.}, \\
\underset{1 \times d_Z}{\E{Y | \pmb{Z}}} &\text{ has $j_{Z}$th element } \E{Y | Z=z_{j_{Z}}}.
\end{align*}

The control bridge function \ref{eq:npiv-bridge-outcome} here corresponds to the linear system
\begin{align*}
\underset{1 \times d_W}{H_0\left(A=a_{j_A}, \pmb{W}\right)} \underset{d_W \times d_U}{P\left(\pmb{W} | \pmb{U} \right)}  &= \underset{1 \times d_U}{K_{0}\left(A=a_{j_A}, \pmb{U}\right)}, 
\end{align*}
Now if $P\left(\pmb{W} | \pmb{U} \right)$ has full column rank, which requires $d_W \geq d_U$, the linear equations system above has a solution. Unless $d_W = d_U$, there will be more than one solution and the bridge function $H_0\left(A=a_{j_A}, \pmb{W}\right)$ will be non-unique. As the common confounder $U$ is not observed, we can at best identify some other bridge function which satisfies
\begin{align*}
\underset{1 \times (d_A \times d_W)}{ H_0^{\text{obs}}(\pmb{A}, \pmb{W}) } \underset{(d_A \times d_W) \times d_Z}{P \left( \pmb{A}, \pmb{W} | \pmb{Z} \right)} &= \underset{1 \times d_Z}{\E{Y | \pmb{Z}}}.
\end{align*}
The existence of this bridge function is guaranteed if $d_W \geq d_U$. Our results imply that any solution $H_0^{\text{obs}}(\pmb{A}, \pmb{W})$ results in a unique 
$$H_0^{\text{obs}}(A=a_{j_A}, \pmb{W}) P(\pmb{W}) = K_{0}\left(A=a_{j_A}, \pmb{U}\right) P \left( \pmb{U} \right),$$ 
which can be interpreted as an average structural function where the confounders $U$ are integrated out without dependence on treatment $A$. Required for this result is the sufficient relevance of instruments $Z$ for treatment $A$ conditional on common confounders $U$, i.e. $d_Z \geq (d_U \times d_A)$.

\subsubsection{Discrete model with first stage restrictions} \label{sssec:discrete-first}
Let $Z, W, U$ be discrete variables, which take values $z_{j_z}$, $w_{j_w}$, $u_{j_U}$ for $j_z = 1, \hdots, d_Z$, $j_w = 1, \hdots, d_W$ and $j_U = 1, \hdots, d_U$. 
The matrices of conditional probabilities, vectors of conditional expectations and densities are
\begin{align*}
\underset{d_{W_k} \times d_U}{P(\pmb{W}_k | \pmb{U}, w_i)} &\text{ has $(j_{W_k}, j_U)$th element } \Prp{W=w_{j_{W_k}} | U=u_{j_U}, W_0=w_0} \text{ for } k \neq i \in \{0, 1\}, \\
\underset{d_W \times d_U}{P(\pmb{W} | \pmb{U}, a, \tilde{v})} &\text{ has $(j_W, j_U)$th element } \Prp{W=w_{j_W} | U=u_{j_U}, A=a, \tilde{V}=v}, \\
\underset{d_Z \times d_U}{P(\pmb{Z} | \pmb{U}, a, \tilde{v})} &\text{ has $(j_Z, j_U)$th element } \Prp{Z=w_{j_Z} | U=u_{j_U}, A=a, \tilde{V}=v}, \\
\underset{1 \times d_U}{\E{Y | \pmb{U}, a, \tilde{v}}} &\text{ has $j_U$th element } \E{Y | U=u_{j_U}, A=a, \tilde{V}=\tilde{v}}, \\
\underset{1 \times d_U}{F^*(a | \pmb{U}, z)} &\text{ is a vector with $j_U$th entry } f(a | u_{j_U}, z), \\
\underset{d_U \times 1}{F^*(a | \pmb{U}, \tilde{v})} &\text{ is a vector with $j_U$th entry } f(a | u_{j_U}, \tilde{v}), \\
\underset{1 \times d_U}{F(\pmb{U}, z)} &\text{ is a vector with $j_U$th entry } \frac{f(u_{j_U})}{f(u_{j_U} | z)}, \\
\text{and } \underset{d_U \times 1}{\pmb{e}} &\text{ is a vector of ones.}
\end{align*}
First, the control bridge functions \ref{eq:control-bridge-1} and \ref{eq:control-bridge-2} correspond to the linear system
\begin{align*}
\tau_{A,0}(z, \pmb{W}_1) \Prp{ \pmb{W}_1 | \pmb{U}, w_0} &= F^*(a | \pmb{U}, z) \text{ for all } w_0 \in \mathcal{W}_0, \\
\kappa_0(z, \pmb{W}_0) \Prp{ \pmb{W}_0 | \pmb{U}, w_1} &= F(\pmb{U}, z) \text{ for all } w_1 \in \mathcal{W}_1.
\end{align*}
A sufficient condition for the solutions to exist is that $\Prp{ \pmb{W}_1 | \pmb{U}, w_0}$ and $\Prp{ \pmb{W}_0 | \pmb{U}, w_1}$ have full rank for all $(w_0, w_1) \in \mathcal{W}$. This requires $W_0$ and $W_1$ to contain a minimum amount of conditionally independent categories. Let $\mathcal{W}^{\CI}$ be the union of all largest-possible non-overlapping subsets $\mathcal{W}^{\CI}_i \in \mathcal{W}$ of values for which $(W_0, W_1) \in \mathcal{W}^{\CI}_i$ are conditionally independent. $i \in \{1, 2, \hdots, \mathbf{card}(\mathcal{W}^{\CI})\}$ indexes each subset of $\mathcal{W}^{\CI}_i \in \mathcal{W}$.
A sufficient condition for the existence of the bridge functions is that for each subset $i$, there are at least as many categories $d_{W_k, i}$ for both $k \in \{ 0, 1 \}$ as categories of the common confounder $d_U$: $d_{W_k, i} \geq d_U$ for all subsets $i$ and both $k \in \{ 0, 1 \}$. If $W_1 \CI W_0$ everywhere, this condition is satisfied if $d_{W_k} > d_U$ for both $k \in \{ 0, 1 \}$. However, such a full conditional independence assumption often is too strong in observational data. 


The bridge function equations for outcome \ref{eq:outcome-bridge} and action \ref{eq:action-bridge} correspond to the linear system
\begin{align*}
h_0^\intercal(\pmb{W}, a, \tilde{v}) P(\pmb{W} | \pmb{U}, a, \tilde{v}) &= \E{Y | \pmb{U}, a, \tilde{v}}, \\
q_0^\intercal(\pmb{Z}, a, \tilde{v}) P(\pmb{Z} | \pmb{U}, a, \tilde{v}) F^*(a | \pmb{U}, \tilde{v}) &= \pmb{e}^\intercal,
\end{align*}
in this model with discrete $Z, W, U$. Suppose that $P(\pmb{W} | \pmb{U}, a, \tilde{v})$ and $P(\pmb{Z} | \pmb{U}, a, \tilde{v})$ are full rank, which requires $d_W \geq d_U$ and $d_Z \geq (d_A \times d_U)$. Also suppose that $f(a | u, \tilde{v}) \neq 0$ for any $u \in \mathcal{U}$ and $\tilde{v} \in (0, 1)$. Then, the linear system has solutions. Bridge functions exist. The solutions must not be unique unless $d_W=d_U$ and $d_Z = (d_A \times d_U)$. 

Nonsingularity of $P(\pmb{Z} | \pmb{U}, a, \tilde{v})$ enforces a strong richness requirement for instruments $Z$. A necessary condition is that $\Prp{z_{j_Z} | a, \tilde{v}} \in (0, 1)$ for at least $d_U$ different $z_{j_Z} \in \mathcal{Z}$ for each $(a, \tilde{v}) \in \mathcal{A} \times (0, 1)$. 
In simple terms, there must be sufficient relevant information in $Z$ for $A$ conditional on $U$. 

\subsection{Nonparametric model} \label{ssec:example-nonparametric}
\subsubsection{Nonparametric model with outcome model restrictions} \label{sssec:nonparametric-first}
Once we assume some outcome model restrictions (\ref{a:outcome-model}), identification of the causal effect $J$ relies on richness requirements for $W$ with respect to $U$ (\ref{a:icc}.\ref{a:nc-relevance}) and for $Z$ with respect to $U$ and $A$ (\ref{a:icc}.\ref{a:iv-complete}). The existence of bridge function \ref{eq:npiv-bridge-outcome} ensures that outcome-inducing proxies $W$ are sufficiently rich with respect to $U$ to obtain unbiased estimates of the causal effect $J$. As this bridge function is a solution to a Fredholm integral equation of the first kind, its existence is given by Picard's theorem \citep{polyanin2008}. Sufficient for the existence of this bridge function is a completeness condition: For any $g \in L_2(A, U)$,
\begin{align*}
\E{g(A, U) | A, W} &= 0 \text{ only when } g(A, U) = 0.
\end{align*}
This relevance requirement for the outcome-inducing proxies $W$ is identifical to standard proximal learning \citep{kallus2021}. Different is the completeness condition on $Z$, which we included as one of our main assumptions \ref{a:icc}.\ref{eq:iv-complete}. For any $g \in L_2(A, U)$,
\begin{align*}
\E{g(A, U) | Z} &= 0 \text{ only when } g(A, U) = 0.
\end{align*}
As $U$ captures all endogeneity in the instruments $Z$, we require all exogenous variation in the instruments $Z$ to be complete with respect to treatment $A$. In ICC, both conditions must hold simultaneously. The instruments we can consider with these different idenifying assumptions are significantly different from to traditional instruments. Our examples in section \ref{sec:examples} will provide further intuition.

\subsubsection{Nonparametric model with first stage restrictions} \label{sssec:nonparametric-first}
Identification of the conditional moment equations for control (\ref{eq:control-bridge-1}, \ref{eq:control-bridge-2}), outcome (\ref{eq:outcome-bridge}) and action (\ref{eq:action-bridge}) bridge functions does not require parametric assumptions other than the strict monotonicity assumption for the first stage reduced form (\ref{a:strict-monotonicity}).
In nonparametric models, richness requirements for $Z$, $W$, and possibly ($W_0$, $W_1$) with respect to $U$ can be completeness conditions. The bridge functions are Fredholm integral equations of the first kind. The existence of solutions to these inverse learning problems are given by Picard's theorem \citep{polyanin2008}. 
Solutions to the control bridge functions (\ref{eq:control-bridge-1} and \ref{eq:control-bridge-2}) exist if the below completeness conditions are satisfied. 
For any $g \in L_2(U, Z)$,
\begin{align*}
\E{ g(U,Z) | W_{1i}, Z } &= 0 \text{ for } W_{1i} \in \mathcal{W}_{1i} \text{ and any } i \in \{1, 2, \hdots, \mathbf{card}(\mathcal{W}^{\CI})\} \text{ only when } g(U,Z) = 0, \\
\E{ g(U,Z) | W_{0i}, Z } &= 0 \text{ for } W_{0i} \in \mathcal{W}_{0i} \text{ and any } i \in \{1, 2, \hdots, \mathbf{card}(\mathcal{W}^{\CI})\}  \text{ only when } g(U,Z) = 0.
\end{align*}
These completeness conditions require that there is rich enough variation in $W_0$ (and $W_1$) compared to $U$ after conditioning on $W_1$ (or $W_0$).

Similarly, outcome (\ref{eq:outcome-bridge}) and action (\ref{eq:action-bridge}) bridge functions exist if the below completeness conditions are satisfied. For any $g_{U,A,\tilde{V}} \in L_2(U, A, \tilde{V})$,
\begin{align*}
\E{g_{U,A,\tilde{V}}(U, A, \tilde{V}) | Z, A, \tilde{V}} &= 0 \text{ only when } g(U, A, \tilde{V}) = 0, \\
\E{g_{U,A,\tilde{V}}(U, A, \tilde{V}) | W, A, \tilde{V}} &= 0 \text{ only when } g(U, A, \tilde{V}) = 0.
\end{align*}
Only when $Z$ and $W$ vary in a sufficiently rich way with $U$ conditional on control function $\tilde{V}$, the above conditions can hold. Intuitively, the first completeness condition means that after using some variation in $Z$ for the construction of $\tilde{V}$, the remaining variation in $Z$ varies with $U$ in a rich manner. In the linear model above we saw this implies that the dimension of the instrument $d_Z$ must be at least as large as the dimension of treatment and unobserved confounder ($d_A + d_U$). 
The completeness assumptions may be stronger than necessary for the existence of bridge functions. In this paper, we follow the approach by \cite{kallus2021} and instead use the weaker, yet sufficient assumption of existence of bridge functions as much as possible.

\section{ICC in Practice} \label{sec:examples}
First, we present a practical algorithm to construct an ICC model. 
Then, the algorithm is applied to an economic and a medical causal inference problem.
In the economic example, we provide evidence of the benefit of instrumented common confounding in the estimation of returns to education. The medical example attempts to identify the causal effect of a health treatment in observational data.

\subsection{Constructing an ICC Model}  \label{ssec:icc-algo-theory}
In comparison to IV, the ICC assumptions may seem daunting at first. The validity of any identification approach ultimately relies on the theoretical arguments about its untestable assumptions. To facilitate the theoretical discussion about the validity of ICC assumptions, we provide a simple algorithm to check whether an identification problem fits the ICC framework.

\begin{enumerate}
  \item $Z$ exogeneity (\ref{a:icc}.\ref{a:iv-exog}): Define common confounders $U$ such that \begin{align*} Y(a, z) = Y(a) \CI Z \ | \ U \ \forall a \in \mathcal{A}. \end{align*}
  \item $W$ exogeneity (\ref{a:icc}.\ref{a:nc-exog}): Include in the common confounders $U$ any unobserved variables necessary to justify \begin{align*} W(a, z) = W \CI (A, Z) \ | \ U. \end{align*}
  \item $W$ relevance (\ref{a:icc}.\ref{a:nc-relevance}): Check whether $W$ is complete with respect to $U$ given $A$: \begin{align*} \E{g(A,U)|A,W} = 0 \text{ only when } g(A,U)=0 \text{ for any } g \in L_2(A,U). \end{align*}
  \item $Z$ relevance (\ref{a:icc}.\ref{a:iv-complete}): Check whether $Z$ is complete with respect to $(A,U)$: \begin{align*} \E{g(A,U)|Z} = 0 \text{ only when } g(A,U)=0 \text{ for any } g \in L_2(A,U). \end{align*}
\end{enumerate}

The above algorithm requires completeness of $W$ with respect to common confounder $U$ (\ref{a:icc}.\ref{a:nc-relevance}). This requirement can hold even if some variable in $W$ directly causes $(Z,A)$, where such variables in $W$ would be included in $U$, as long as $W$ remains complete for $U$. Vice-versa, some instruments $Z$ might be endogenous with a direct impact on outcome $Y$, where these instruments would be included in $U$. Exogenous variation in the instruments must remain complete with respect to the treatment $A$, and $W$ complete with respect to the endogenous instruments included in $U$. 
Clearly, the completeness assumptions are more likely to be satisfied the richer the information in $Z$ and $W$. Put simply, the more measurements of the unobserved confounders there are, the better we can justify the required completeness conditions. In our subsequent examples, we hope to convince the reader of the relevance of the common confounding assumptions for interesting causal inference problems. Our logic will be based on the above algorithm.

\subsection{Returns to Education} \label{ssec:ex-educ}
The education production function has been a key function of interest for applied microeconometricians since the 1950s, with at least 1,120 estimates in 139 countries \citep{psacharopoulos2018}. The formal modelling of human capital formation is grounded in microeconomic theory, which nearly any undergraduate Economics student will have come across \citep{schultz1960, schultz1961, mincer1958, becker1964}. Early contributions to this literature emphasised the heterogeneity of returns to education across levels of education and across individuals with the same level of education \citep{becker1966, chiswick1974, mincer1974}. A nonlinear model naturally accommodates this type of causal effect heterogeneity. A well-known confounder of the effect of education on earnings is ability \citep{griliches1977}. The dominant solution to the ability bias problem is the use of instruments for education \citep{heckman2006}. Parental education and number of siblings \citep{willis1979, taber2001} are poor instruments, as family background strongly determines ability \citep{cunha2006ea}. Another popular type of instrument is geographic location, e.g. distance to college \citep{card1993, kling2001, cameron2004}. However, the correlation of distance to college and an ability proxy casts doubt on its exogeneity \citep{carneiro2002}. Tuition cost \citep{kane1993} may be an invalid instrument due to its correlation with college quality, which may have a direct effect on earnings \citep{carneiro2002}. Local business cycle fluctuations were used as instruments in other studies \citep{cameron1998, carneiro2003, cameron2004}, but require that all permanent labour market effects are conditioned out, which may be difficult. The well-known quarter-of-birth instrument \citep{ak1991} is likely exogenous, but is known to be weak \citep{staiger1997}. Finding exogenous instruments is hard. Once an exogenous instrument is found, it is often weak or affects the treatment only in a subpopulation. In this case, the IV estimate can at best be interpreted as a marginal treatment effect in a subpopulation \citep{heckman2006}.

Economic theory states that ability is a likely confounder of the effect of education on earnings. Despite the previous long list of instruments, which intend to circumvent ability bias, ability proxies tend to explain little about the residuals in a regression of earnings on education and observable characteristics \citep{cawley1995}. Other unobserved characteristics like (non-academic) attitude \citep{green1998} and communication skills \citep{national1995} may be important determinants of earnings. 
Better communication skills serve applicants in admissions interviews and are highly valued in the labour market.

More generally, selection bias is inevitable when treatment is the result of heterogeneous individuals' utility-maximising choice.
Education is at least in part the result of individuals' expected utility $u_i$ optimisation, where the individuals' information set is naturally larger than that of the researcher. 
\begin{align}
Y_i &= g(A_i, U_i, \varepsilon_i) \\
A_i &= \max_{a \in \mathcal{A} \text{ s.t. constraints}} \left\{ \E{u_i(Y_i, A_i) | \text{ information set pre-college}} \right\} = h(Z_i, U_i, \eta_i)
\end{align}
Mechanically, individuals' optimising behaviour implies that in this model the disturbances $\varepsilon_i$ and $\eta_i$ are closely related to each other. Even conditional on the common confounder ability $U_i$, this dependence persists. However, there may be some instruments $Z$ relevant for education $A$ that satisfy an exclusion restriction conditional on ability $U_i$.


To demonstrate how instrumented common confounding suits the returns to education identification problems, we present a directed acyclic graph (DAG) to describe the model's conditional independence structures in figure \ref{f:ex2}. By going through the ICC construction algorithm, we demonstrate how the returns to education identification problem fits into the ICC framework. In terms of conditional independence assumptions we attempt to be as conservative as possible, which at worst leads to stronger than necessary richness requirements for the instruments $Z$ and proximal learning outcomes $W$.

\begin{figure} 
\caption{DAG of the Education Production Function}
\label{f:ex2}
\centering
\begin{minipage}{0.45 \textwidth}
\begin{tikzpicture}[node distance=1.5cm and 0.75cm]
    \node [state, dashed] (u) {$U$};
    \node [state, below left=1.5cm and 1.25cm of u] (a) {$A$};
    \node [state, below right=1.5cm and 1.25cm of u] (y) {$Y$};
    \node [draw, rounded rectangle, left=of a,  minimum width = 0.7 cm,  minimum height = 0.7 cm] (z) {$ \ \ \ Z \ \ \ $};
    \node [draw, rounded rectangle, right=of y, minimum width = 0.7 cm, minimum height = 0.7 cm] (w) {$W$};
    \node [state, dashed, below right=1.5cm and 1.25cm of a] (ut) {$\tilde{U}$};
    \node [cross=8pt, line width=4pt, a2red, below right=0.4cm and -0.4cm of z] (cross) {};
    \draw[line, style=-latex] (a) edge (y);
    \draw[line, style=-latex, a3sand, line width=2] (z) edge (a);
    \draw[line, style=-latex] (w) edge (y);
    \draw[line, a2red, densely dotted, out=270, in=225] (z) edge (y);
    \draw[line, a2red, densely dotted, out=270, in=180] (z) edge (ut);
    \draw[line, style=-latex, dashed] (u) edge (a);
    \draw[line, style=-latex, dashed] (u) edge (y);
    \draw[line, style=-latex, dashed] (ut) edge (a);
    \draw[line, style=-latex, dashed] (ut) edge (y);
    \draw[line, style=-latex, dashed, out=180, in=90] (u) edge (z);
    \draw[line, style=-latex, a4green, line width=2, dashed, out=0, in=90] (u) edge (w);
\end{tikzpicture}
\end{minipage}
\hfill
\begin{minipage}{0.5\textwidth}
\begin{itemize}
  \item[$Y$: ] Income 5 years post-college
  \item[$A$: ] College GPA
  \item[$Z$: ] GPA (high school), SAT, ACT
  \item[$U$: ] Ability, attitude, communication skills, general health, addictive tendency
  \item[$\tilde{U}$: ] General selection bias
  \item[$W$: ] Pre-college physical and behavioural characteristics
  \item[$X$: ] Family characteristics: parental education, income, occupation
\end{itemize}
\end{minipage}
\end{figure}
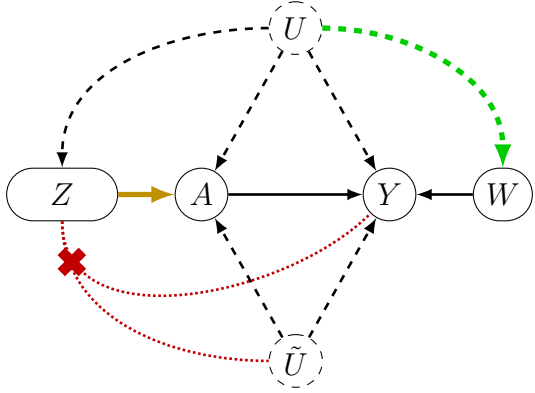

\begin{enumerate}
  \item $Z$ exogeneity (\ref{a:icc}.\ref{a:iv-exog}): Define common confounders $U$ such that \begin{align*} Y(a, z) = Y(a) \CI Z \ | \ U \ \forall a \in \mathcal{A}. \end{align*}
  Having considered the forms of confounding from ability $U$ and general selection bias $\tilde{U}$ in this model, the first task is to motivate the exclusion restriction conditional on ability $U$ for instruments $Z$. 
  As instrument $Z$ we use pre-college test scores and GPAs. Clearly, pre-college scores $Z$ are strongly associated with ability $U$. The main identifying assumption is that pre-college test scores and GPA measures do not affect post-college earnings through anything else than ability $U$ and college GPA $A$. This is a relatively sensible assumption. Once college GPA $A$ is determined, there is no reason why pre-college GPA $Z$ would matter regarding an individual's realised earnings post-college $Y$, other than through ability or other (family) background characteristics, which we assume to observe. 
  \item $W$ exogeneity (\ref{a:icc}.\ref{a:nc-exog}): Include in the common confounders $U$ any unobserved variables necessary to justify \begin{align*} W(a, z) = W \CI (A, Z) \ | \ U. \end{align*}
  As outcome-inducing proxies $W$, we use a range of pre-college physical and behavioural characteristics. These can include health status and behaviours, height, fitness, and addictive behaviours like alcoholism and smoking. While there may be some direct effects between pre-college health/behaviours and test scores, we can let $U$ capture all relevant information to ensure the required conditional independence. 
  \item $W$ relevance (\ref{a:icc}.\ref{a:nc-relevance}): Check whether $W$ is complete with respect to $U$ given $A$: \begin{align*} \E{g(A,U)|A,W} = 0 \text{ only when } g(A,U)=0 \text{ for any } g \in L_2(A,U). \end{align*}
There is substantial evidence for the relevance of the outcome-inducing proxies $W$ including physical characteristics and behaviours for ability \citep{gottfredson2004, case2008, tabriz2015, greengross2011}. Even allowing for some direct effects between academic performance and health/behavioural characteristics, their relevance for ability $U$ arguably remains likely.
Ultimately, it depends on the richness of available data whether completeness with respect to $U$ is satisfied. In some datasets, like the NLS97 \citep{nls97}, there are i.a. rich observations of health and addictive behaviour.
  \item $Z$ relevance (\ref{a:icc}.\ref{a:iv-complete}): Check whether $Z$ is complete with respect to $(A,U)$: \begin{align*} \E{g(A,U)|Z} = 0 \text{ only when } g(A,U)=0 \text{ for any } g \in L_2(A,U). \end{align*}
Pre-college scores $Z$ are strong determinants of college GPA $A$, even once ability $U$ is held fixed. Pre-college test scores can never measure ability perfectly, and also reflect some exam performance skills, which are likely carried on to college. The variation in $Z$ used to infer the effect of college GPA $A$ on post-college earnings $Y$ would be the excluded, non-ability determined variation in pre-college test scores.
Again, satisfaction of the relevance requirement for the instruments with respect to the treatment relies on the available data. In some datasets, like the NLS97 \citep{nls97}, there are rich observations of pre-college test scores $Z$.
\end{enumerate}

As long as the observational data is sufficiently rich, ICC can nonparametrically identify causal effects without too limiting exclusion restrictions. Any argument against the model's conditional exclusion restriction can be countered by including the hypothesised unobserved confounder into $U$. Clearly, at some point completeness of $Z$ with respect to $(A,U)$ and of $W$ with respect to $U$ is no longer a reasonable assumption as the number of hypothesised confounders keeps growing. In this sense, ICC has to trade off relevance against exclusion requirements, similar to standard IV. Unlike the rigid exclusion restriction in standard IV, ICC provides a way to utilise rich observational data when instruments are at best excluded conditional on some common confounders $U$. A good theoretical argument for ICC will consist of a sensible set of hypothesised common confounders $U$ and a convincing argument about the completeness of $Z$ and $W$ with respect to those (and of $Z$ with respect to $A$ given this set $U$). 

Ability or personality traits are optimal examples of such common confounders. Hence, ICC is suitable for the identification of various causal effects involving individuals' choices in rich observational data. Returns to education is, in our opinion, a good example of common confounding.

\subsection{Causal Effect of a Health Treatment} \label{ssec:ex-health}
Our second example involves the causal effect of a health treatment, which could be a drug, on a health outcome. For drugs, randomised controlled trials (RCTs) are the most common identification approach. Their advantages in overcoming potential unobserved confounding are obvious. However, as RCTs are costly, their sample size is mostly small and rather often effects of a drug cannot be tested rigorously for different subgroups of the population. After the introduction of a drug, the potential sample size hugely increases with treated individuals. With the ICC approach, we could estimate the causal effect of a treatment on a health outcome in observational data despite the presence of unobserved confounders. Again, we provide a DAG and explain the observed and unobserved variables we consider in this model in figure \ref{f:ex-health}. Then, we go through the ICC construction algorithm to explain how the health treatment problem fits the ICC model.

\begin{figure} 
\caption{DAG of a Health Treatment}
\label{f:ex-health}
\centering
\begin{minipage}{0.45 \textwidth}
\begin{tikzpicture}[node distance=1.5cm and 0.75cm]
    \node [state, dashed] (u) {$U$};
    \node [state, below left=1.5cm and 1.25cm of u] (a) {$A$};
    \node [state, below right=1.5cm and 1.25cm of u] (y) {$Y$};
    \node [draw, rounded rectangle, left=of a,  minimum width = 0.7 cm,  minimum height = 0.7 cm] (z) {$ \ \ \ Z \ \ \ $};
    \node [draw, rounded rectangle, right=of y, minimum width = 0.7 cm, minimum height = 0.7 cm] (w) {$W$};
    \node [state, dashed, below right=1.5cm and 1.25cm of a] (ut) {$\tilde{U}$};
    \node [cross=8pt, line width=4pt, a2red, below right=0.4cm and -0.4cm of z] (cross) {};
    \draw[line, style=-latex] (a) edge (y);
    \draw[line, style=-latex, a3sand, line width=2] (z) edge (a);
    \draw[line, style=-latex] (w) edge (y);
    \draw[line, a2red, densely dotted, out=270, in=225] (z) edge (y);
    \draw[line, a2red, densely dotted, out=270, in=180] (z) edge (ut);
    \draw[line, style=-latex, dashed] (u) edge (a);
    \draw[line, style=-latex, dashed] (u) edge (y);
    \draw[line, style=-latex, dashed] (ut) edge (a);
    \draw[line, style=-latex, dashed] (ut) edge (y);
    \draw[line, style=-latex, dashed, out=180, in=90] (u) edge (z);
    \draw[line, style=-latex, a4green, line width=2, dashed, out=0, in=90] (u) edge (w);
\end{tikzpicture}
\end{minipage}
\hfill
\begin{minipage}{0.5\textwidth}
\begin{itemize}
  \item[$Y$: ] Post-treatment health outcome
  \item[$A$: ] Health treatment
  \item[$Z$: ] Health measures pre-treatment \emph{with} effect on treatment choice/assignment
  \item[$U$: ] General health status
  \item[$\tilde{U}$: ] Belief in treatment sucess (placebo effect)
  \item[$W$: ] Health measures pre-treatment \emph{without} effect on treatment choice/assignment
\end{itemize}
\end{minipage}
\end{figure}
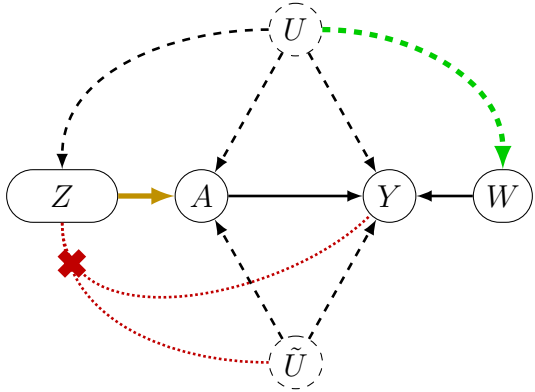

\begin{enumerate}
  \item $Z$ exogeneity (\ref{a:icc}.\ref{a:iv-exog}): Define common confounders $U$ such that \begin{align*} Y(a, z) = Y(a) \CI Z \ | \ U \ \forall a \in \mathcal{A}. \end{align*}
  Instruments $Z$ are pre-treatment health measures with an effect on treatment $A$ choice/assignment. Some of these instruments may even affect the post-treatment health outcome $Y$ directly. These instruments would form part of the unobserved common confounder $U$, which otherwise contains an individual's general health status. 
  Individuals with generally better health status and knowledge are more likely to choose treatment and might benefit more (or less) from the treatment. E.g. if treatment is a protein supplement and muscle growth the outcome variable, people with generally better health are likely to exercise more and hence likely to experience more muscle growth than their peers. 
  Another obvious confounder is the placebo effect, where individuals with greater belief in treatment success always benefit more from the treatment. This inherent bias problem is captured by the unobserved confounder $\tilde{U}$.
  \item $W$ exogeneity (\ref{a:icc}.\ref{a:nc-exog}): Include in the common confounders $U$ any unobserved variables necessary to justify \begin{align*} W(a, z) = W \CI (A, Z) \ | \ U. \end{align*}
$W$ contains pre-treatment health measures without effect on treatment choice/assignment. $W$ will be associated with other pre-treatment health measures via the general health status and some specific health issues which are measures by similar variables in $W$ and $Z$, which are consequently contained in $U$. 
  \item $W$ relevance (\ref{a:icc}.\ref{a:nc-relevance}): Check whether $W$ is complete with respect to $U$ given $A$: \begin{align*} \E{g(A,U)|A,W} = 0 \text{ only when } g(A,U)=0 \text{ for any } g \in L_2(A,U). \end{align*}
$W$ is complete with respect to $U$ as long as the information in its health measures is sufficiently informative with respect to the general status of health, as well as those specific pre-treatment health measures in $Z$ which affect $Y$. When health records of each individual are rich, this may well be the case.
  \item $Z$ relevance (\ref{a:icc}.\ref{a:iv-complete}): Check whether $Z$ is complete with respect to $(A,U)$: \begin{align*} \E{g(A,U)|Z} = 0 \text{ only when } g(A,U)=0 \text{ for any } g \in L_2(A,U). \end{align*}
$Z$ is complete with respect to $U$ by construction. Conditional on $U$, these pre-treatment health measures must be sufficiently relevant for treatment $A$. This may be quite likely when health records of each individual are rich.
\end{enumerate}

Importantly, the above example again shows that instruments $Z$ can affect outcome $Y$ to a significant degree without violating the ICC assumptions, simply by including those variables in $U$, and imposing richness requirements on the outcome-inducing proxies $W$. For example, let us consider a $J$-dimensional vector of relevant instruments $Z$ for the one-dimensional treatment $A$. Then, $J-1$ instruments may have a direct effect on outcome $Y$, as long as a $(J-1)$-dimensional vector of outcome-inducing proxies $W$ is complete for the $J-1$ endogenous instruments and excluded with respect to the one exogenous instrument and treatment $A$ conditional on the $J-1$ endogenous instruments. Hence, ICC can be used to obtain unbiased estimates even when some instruments are endogenous. It does not matter which instruments these are exactly, but completeness of $W$ with respect to the endogenous instruments is required, as well as the exclusion of $W$ with respect to the treatment $A$ and any exogenous instruments conditional on the endogenous ones. This flexibility of the ICC approach is one of its main strengths in rich observational data.

\section{Concluding Remarks} \label{sec:conclusion}
Instrumented common confounding (ICC) bridges identification theory between IV and proximal learning. The approach is suitable for causal inference in the social sciences, where ability or other character traits are common confounders. However, ICC is versatile and can be a viable solution to identification problems where instrument exclusion fails for a variety of different reasons.

Our two examples, one economic and one medical, demonstrate how we may observe instruments $Z$, which satisfy an exclusion restriction conditional on an unobserved common confounder $U$, for which other relevant and conditionally independent outcome-inducing proxies $W$ are available. In both examples, a theoretical argument in favour of traditional IV exclusion is too difficult. In ICC, we relax the exclusion restriction to exclusion conditional on an unobserved common confounder $U$. The proposed ICC model construction algorithm allows researchers to argue about the validity of ICC assumptions straightforwardly. Exclusion conditional on an unobserved common confounder $U$ is traded off with relevance of the observed variables $Z$ and $W$ for the unobserved common confounder $U$ and treatment $A$. 

While we are convinced that instrumented common confounding can be a useful tool for causal inference, it is no panacea. It replaces some strong, untestable identifying assumptions by other such assumptions, which are more likely to be satisfied in rich observational data. The validity of its assumptions must be justified rigorously by (economic) theory. 


In this paper, we dealt only with the identification of ICC models, not their estimation. While estimation is similar to standard IV, we intend to provide a dedicated package for the simple estimation of  ICC models using different statistical learning tools and look forward to applying the ICC method to various interesting empirical problems. We currently extend the ICC approach to allow for the conditional dependence of outcome-aligned proxies $W$ and treatment $A$.


\clearpage

\bibliographystyle{plainnat}
\bibliography{../references}

\clearpage

\section{Proofs} \label{sec:proofs}

\subsection{Proofs with outcome model restrictions} \label{ssec:proofs-outcome}

\begin{proof}[Proof of lemma \ref{l:cf-outcome}]


Let $T_A(x) \coloneqq \int_{\mathcal{A}} x(a^\prime) \pi(a^\prime) \dif \mu_A(a^\prime)$.
\begin{align*}
J &= \E{ \int_{\mathcal{A}} Y(a^\prime) \pi(a^\prime) \dif \mu_A(a^\prime)} = \E{ \int_{\mathcal{A}} \left( k_0(a^\prime, U) + \varepsilon_Y \right) \pi(a^\prime) \dif \mu_A(a^\prime)} \\
&= \E{ T_A \left( k_0(a^\prime, U) + \E{\varepsilon_Y | A=a^\prime, U} \right) } = \E{ T_A \left( k_0(a^\prime, U) + \E{\varepsilon_Y | U} \right)} \\
&= \E{ T_A \left( k_0(a^\prime, U) + \E{\E{\varepsilon_Y | Z, U} | U} \right) } =  \E{ T_A \left( k_0(a^\prime, U) \right) } \\
&= \E{ \left( \int_{\mathcal{A}} k_0(a^\prime, U) \pi(a^\prime) \dif \mu_A(a^\prime) \right)} = \E{\phi_{IV}(U; k_0)}
\end{align*}
The second line follows as for any change of $a$ in $Y(a)$, $\epsilon_Y$ is unchanged by definition. The third line follows as $\E{\varepsilon_Y | Z, U} = 0$.
\end{proof}

\begin{proof}[Proof of lemma \ref{l:h-outcome}]
\begin{align*}
\E{ \int_{\mathcal{A}} h_0(a, W) \pi(a) \dif \mu_A(a) } &=\E{ \E{ \int_{\mathcal{A}} h_0(a, W) \pi(a) \dif \mu_A(a) | U } } \\
&=\E{ \E{ \int_{\mathcal{A}} h_0(a, W) \pi(a) \dif \mu_A(a) | A=a, U } } \\
&=\E{ \int_{\mathcal{A}} \E{h_0(a, W) | A=a, U } \pi(a) \dif \mu_A(a) } \\
&=\E{ \int_{\mathcal{A}} k_0(a, U) \pi(a) \dif \mu_A(a) } \\
&=\E{\phi_{IV}(U; k_0)} = J
\end{align*}
From line one to two we used $W \CI A \ | \ U$ (A\ref{a:icc}.\ref{a:nc-exog}). From line three to four we used the definition of $\mathbb{H}_0$, where $h_0 \in \mathbb{H}_0$. On the last line we used lemma \ref{l:cf-outcome}.
\end{proof}

\begin{proof}[Proof of lemma \ref{l:obs-npiv-outcome}]

Any $h_0 \in \mathbb{H}_0$ satisfies
\begin{align*}
\E{k_0(A, U) - h_0(A, W) | A, U, Z} &= \E{k_0(A, U) - h_0(A, W) | A, U} = 0.
\end{align*}
The first equality holds 
by $W \CI Z \ | \ (A, U)$ (assumption \ref{a:icc}.\ref{a:nc-exog}). Consequently,
\begin{align*}
\E{Y - h_0(A, W) | Z} &= \E{ \E{k_0(A, U) - h_0(A, W) | U, Z} | Z} + \E{ \E{\varepsilon | U, Z} | Z} \\
&= \E{ \E{k_0(A, U) - h_0(A, W) | U, Z} | Z} \\
&= \E{k_0(A, U)  - h_0(A, W) | Z} \\
&= \E{ \E{k_0(A, U) - h_0(A, W) | A, U, Z} | Z} \\
&= \E{ \E{k_0(A, U) - h_0(A, W) | A, U} | Z} \\
&= \E{ 0 | Z} = 0
\end{align*}
This proves that equation \ref{eq:obs-mom-outcome} of lemma \ref{l:obs-npiv-outcome} holds.

\end{proof}

\begin{proof}[Proof of lemma \ref{l:h-equal-outcome}]
For any $h_0 \in \mathbb{H}_0^{\text{obs}}$,
\begin{align*}
\E{Y - h_0(A, W) | Z} = \E{k_0(A, U) - h_0(A, W) | Z} = \E{ \E{ k_0(A, U) - h_0(A, W) | A, U } | Z } = 0.
\end{align*}
Under completeness assumption \ref{a:icc}.\ref{a:iv-complete}, the above can only be true if $\E{ k_0(A, U) - h_0(A, W) | A, U } = 0$. Hence, any $h_0 \in \mathbb{H}_0^{\text{obs}}$ also satisfies $h_0 \in \mathbb{H}_0$, which implies $\mathbb{H}_0^{\text{obs}} \subseteq \mathbb{H}_0$. From lemma \ref{l:obs-npiv-outcome} it is known that $\mathbb{H}_0 \subseteq \mathbb{H}_0^{\text{obs}}$. Consequently, $\mathbb{H}_0^{\text{obs}} = \mathbb{H}_0$.
\end{proof}

\begin{proof}[Proof of lemma \ref{t:main-npiv-outcome}]

Identification of $\tilde{\phi}_{IV}$. For any $h_0 \in \mathbb{H}_0^{\text{obs}} = \mathbb{H}_0$,
\begin{align*}
\E{\tilde{\phi}_{IV}(W; h_0)} &= \E{\int_{\mathcal{A}} h_0(a, W) \pi(a) \dif \mu_A(a)} \\
&= \E{\int_{\mathcal{A}} \E{h(a, W) | A=a, U} \pi(a) \dif \mu_A(a)} \\
&= \E{\int_{\mathcal{A}} k_0(a, U) \pi(a) \dif \mu_A(a)} \\
&= \E{\phi_{IV}(U; k_0)} = J
\end{align*}
We move from the first to the second equation by assumption $W \CI (A, Z) \ | \ U$ of \ref{a:icc}.\ref{a:nc-exog}. The step from the second to the third line is by lemma \ref{l:h-equal-outcome} and \ref{a:icc}.\ref{a:nc-relevance}. The last line holds by lemma \ref{l:cf-outcome}.
\end{proof}

\subsection{Proofs with first stage restrictions} \label{ssec:proofs-first}

\begin{proof}[Proof of lemma \ref{l:simple-cf}]

\textbf{IPW}.
Identification of $\phi_{IPW}$.
\begin{align*}
\E{\phi_{IPW}(Y, A, V, U; f_{A | V, U})} &= \E{Y \frac{\pi(A)}{f_{A | V, U}(A | V, U)}} \\
 &= \E{Y \frac{\pi(A)}{f_{A | \eta, U}(A | \eta, U)}} \\
  &= \E{g(A, U, \varepsilon) \frac{\pi(A)}{f_{A | \eta, U}(A | \eta, U)}} \\
 &= \E{\int_{\mathcal{E}} \int_{\mathcal{A}} g(a, U, e) \frac{\pi(a)}{f_{A | \eta, U}(a | \eta, U)} f_{A, \varepsilon | \eta, U}(a, e | \eta, U) \dif \mu_A(a) \dif \mu_{\varepsilon}(e)} \\
 &= \E{\int_{\mathcal{E}} \int_{\mathcal{A}} g(a, U, e) \frac{\pi(a)}{f_{A | \eta, U}(a | \eta, U)} f_{A | \eta, U}(a | \eta, U) \dif \mu_A(a) f_{\varepsilon | \eta, U}(e | \eta, U) \dif \mu_{\varepsilon}(e)} \\
 &= \E{\int_{\mathcal{A}} g(a, U, \varepsilon) \pi(a) \dif \mu_A(a)} \\
&= \E{ \int_{\mathcal{A}} Y(a) \pi(a) \dif \mu_A(a)} = J
\end{align*}
On line one, we use assumption \ref{a:common-support-simple}. Line two uses the one-to-one correspondence of $V$ and $\eta$ conditional on $U$, which follows from assumption \ref{a:strict-monotonicity-simple}. Line five then uses assumption \ref{a:strict-monotonicity-simple}.\ref{a:mon-ind-simple}, which implies the independence of $A$ and $\varepsilon$ conditional on $(\eta, U)$.

\textbf{REG}.
Identification of $\phi_{REG}$.
\begin{align*}
\E{\phi_{REG}(V, U; k_{0, v})} &= \E{\int_{\mathcal{A}} k_{0, v}(a, V, U) \pi(a) \dif \mu_A(a)} \\
&= \E{\int_{\mathcal{A}} k_{0, \eta}(a, \eta, U) \pi(a) \dif \mu_A(a)} \\
&= \E{ \int_{\mathcal{A}} \E{g(A, U, \varepsilon) | A=a, \eta, U} \pi(a) \dif \mu_A(a)} \\
&= \E{ \int_{\mathcal{A}} \int_\mathcal{E} g(A, U, e) f_{\varepsilon | A, \eta, U}(e | a, \eta, U) \dif \mu_{\varepsilon}(e) \pi(a) \dif \mu_A(a)} \\
&= \E{ \int_{\mathcal{A}} \int_\mathcal{E} g(A, U, e) f_{\varepsilon | \eta, U}(e | \eta, U) \dif \mu_{\varepsilon}(e) \pi(a) \dif \mu_A(a)} \\
&= \E{ \int_\mathcal{E} \int_{\mathcal{A}} g(A, U, e) \dif \mu_\varepsilon(e) \pi(a) \dif \mu_A(a) f_{\varepsilon | \eta, U}(e | \eta, U) } \\
&= \E{ \int_{\mathcal{A}} g(a, U, \varepsilon) \pi(a) \dif \mu_A(a)} \\
&= \E{ \int_{\mathcal{A}} Y(a) \pi(a) \dif \mu_A(a)} = J
\end{align*}
On line one, we use assumption \ref{a:common-support-simple}. Line two uses the one-to-one correspondence of $V$ and $\eta$ conditional on $U$, which follows from assumption \ref{a:strict-monotonicity-simple}. Line five then uses assumption \ref{a:strict-monotonicity-simple}.\ref{a:mon-ind-simple}, which implies the independence of $A$ and $\varepsilon$ conditional on $(\eta, U)$.

\textbf{DR}.
Identification of $\phi_{DR}$.
\begin{align*}
\E{\phi_{DR}(Y, A, V, U; f_{A | V, U}, k_{0, v})} = \E{\phi_{REG}(V, U; k_{0, v})} = J
\end{align*}
\end{proof}

\begin{proof}[Proof of lemma \ref{l:same-info}]
Key to this proof is the strict monotonicity assumption \ref{a:strict-monotonicity}. Let $h^{-1}(., Z)$ be the inverse of $h(Z, m)$ in its second argument and $m^{-1}(., U)$ be the inverse of $m(U, \eta)$ in its second argument. Importantly, \ref{a:strict-monotonicity}.\ref{a:mon-2} implies that $m^{-1}(m, U)$ is strictly monotonous in its first argument. As in the proof of lemma 1 in \cite{matzkin2003}, we write the control function as
\begin{align*}
V_{\ref{a:strict-monotonicity}} &= \int_{\mathcal{U}} F_{A | Z, U}(A, Z, u) \dif F(u) \\
&= \int_{\mathcal{U}} \Pr{ \left( h(Z, m(u, \eta)) \leq A \right)} \dif F(u) \\
&= \int_{\mathcal{U}} \Pr{ \left( \eta \leq m^{-1} \left(h^{-1}(A, Z), u \right) \right) } \dif F(u) \\
&= \int_{\mathcal{U}} F_{\eta | U} \left( m^{-1} \left(h^{-1}(A, Z), u \right) \right)  \dif F(u) 
\end{align*}
Strict monotonocity of $m^{-1}(., U)$ in its first argument,
\begin{align*}
m^{-1}(m_1, U) > m^{-1}(m_0, U) \ \forall m_1 > m_0, \forall U \in \mathcal{U},
\end{align*}
implies no-crossing:

\begin{center}
\begin{tikzpicture}
  \begin{axis}[ 
    axis lines = left,
    xlabel = $U$,
    xlabel style={at={(axis description cs:0.85,0)}},
    ylabel = $\eta$,
    ylabel style={at={(axis description cs:0.05,0.85)},rotate=-90},
    ymin=0,
    ticks=none,
  ] 
    \addplot [ 
    line,
    domain=-1:1, 
    samples=100, 
    color=red, ]
    {0.5*exp(-x)}; 
\addlegendentry{$m_0$}
    \addplot [ 
    line,
    domain=-1:1, 
    samples=100, 
    color=blue, ]
    {exp(-x)};
\addlegendentry{$m_1$}
    \addplot [ 
    line,
    domain=-1:1, 
    samples=100, 
    color=ForestGreen, ]
    {2*exp(-x)};
\addlegendentry{$m_2$}
  \end{axis}
  \draw [line, dashed] (2,0) -- (2,5);
  \draw [line, dashed] (0, 3.17) -- (2, 3.17);
  \node at (-1.6, 3.17) {$m^{-1} ( m_2, \bar{u} )$};
  \draw [line, dashed] (0, 1.58) -- (2, 1.58);
  \node at (-1.6, 1.58) {$m^{-1} ( m_1, \bar{u} )$};
  \draw [line, dashed] (0, 0.79) -- (2, 0.79);
  \node at (-1.6, 0.79) {$m^{-1} ( m_0, \bar{u} )$};
  \draw [line, solid] (2,-0.05) -- (2,0);
  \node at (2, -0.35) {$\bar{u}$};
\end{tikzpicture}
\end{center}
Hence, the function $\eta(u) = m^{-1} \left(h^{-1}(a, z), u \right)$ is uniquely identified by its mean
\begin{align*}
\bar{M} = \int_{u} m^{-1} \left(h^{-1}(A, Z), u \right) dF(u).
\end{align*}
This implies the sigma algebras associated with the following three vectors of random variables are identical: $(\bar{M}, U), (\eta(U), U), (\eta, U)$. 

$F_{\eta | U}(\eta)$ is strictly monotonous in $\eta$ on its support due to the continuous conditional distribution of $\eta$ (assumption \ref{a:strict-monotonicity} 3). In combination with strict monotonicity of $m^{-1}(., U)$ in its first argument, this implies that $F_{\eta | U}\left(m^{-1} \left(h^{-1}(a, z), u \right) \right)$ is also strictly monotonous in $h^{-1}(a, z)$. 

\begin{center}
\begin{tikzpicture}
  \begin{axis}[ 
    axis lines = left,
    xlabel = $U$,
    xlabel style={at={(axis description cs:0.85,0)}},
    ylabel = $F_{\eta | U}(\eta)$,
    ylabel style={at={(axis description cs:0,0.85)},rotate=-90},
    ymin = 0,
    ymax=1,
    ytick={0, 1},
    xtick={0},
    xticklabels={$\bar{u}$}
  ] 
    \addplot [ 
    line,
    samples=100, 
    color=red, ]
    {1-1/(1+exp(-x-1.82))}; 
\addlegendentry{$m_0$}
    \addplot [ 
    line,
    samples=100, 
    color=blue, ]
    {1-1/(1+exp(-x-0.95))};
\addlegendentry{$m_1$}
    \addplot [ 
    line,
    samples=100, 
    color=ForestGreen, ]
    {1-1/(1+exp(-x+0.23))};
\addlegendentry{$m_2$}
  \end{axis}
  \draw [line, dashed] (3.427,0) -- (3.427,5.5);
  \draw [line, dashed] (0, 3.17) -- (3.427, 3.17);
  \node at (-1.6, 3.17) {$ F_{\eta | U}\left(m^{-1} ( m_2, \bar{u} ) \right) $};
  \draw [line, dashed] (0, 1.58) -- (3.427, 1.58);
  \node at (-1.6, 1.58) {$ F_{\eta | U}\left(m^{-1} ( m_1, \bar{u} ) \right) $};
  \draw [line, dashed] (0, 0.79) -- (3.427, 0.79);
  \node at (-1.6, 0.79) {$ F_{\eta | U}\left(m^{-1} ( m_0, \bar{u} ) \right) $};
\end{tikzpicture}
\end{center}
Hence, the function $F_{\eta | U}(\eta(u)) = F_{\eta | U} \left(m^{-1} \left(h^{-1}(A, Z), u \right) \right)$ is uniquely identified by its mean
\begin{align*}
\bar{V} = \int_{u} F_{\eta | U} \left(m^{-1} \left(h^{-1}(A, Z), u \right) \right) dF(u).
\end{align*}
This implies the sigma algebras associated with the following vectors of random variables are identical: $(\bar{V}, U), (F_{\eta | U}(\eta(U)), U), (F_{\eta | U}(\eta), U), (\eta, U)$. The last two associated sigma algebras' equality follows from assumption \ref{a:strict-monotonicity}.\ref{a:mon-con} $F_{\eta | U}(\eta)$ is strictly monotonic on the support of $\eta$, the sigma algebra of $(F_{\eta | U}(\eta), U)$ and $(\eta, U)$ are equal. 
\end{proof}

\begin{proof}[Proof of lemma \ref{t:still-valid}]

First, note that the main step of both proofs for IPW and REG estimators, $\E{ \int_{\mathcal{A}} \E{Y | A=a, V_{\ref{a:strict-monotonicity}}, U} \pi(a) \dif \mu_A(a)} = \E{ \int_{\mathcal{A}} \E{Y | A=a, V, U} \pi(a) \dif \mu(a)}$, follows directly from lemma \ref{l:same-info}.

\textbf{IPW}. 
Want to show that $\E{\phi_{IPW}(Y, A, V_{\ref{a:strict-monotonicity}}, U; f_{A | V_{\ref{a:strict-monotonicity}}, U})} = \E{\phi_{IPW}(Y, A, V, U; f_{A | V, U})}$.
\begin{align*}
\E{\phi_{IPW}(Y, A, V_{\ref{a:strict-monotonicity}}, U; f_{A | V_{\ref{a:strict-monotonicity}}, U})} &= \E{Y \frac{\pi(A)}{f_{A | V_{\ref{a:strict-monotonicity}}, U}(A | V_{\ref{a:strict-monotonicity}}, U)}} \\
&= \E{ \int_{\mathcal{A}} \E{Y | A=a, V_{\ref{a:strict-monotonicity}}, U} \frac{\pi(a)}{f_{A | V_{\ref{a:strict-monotonicity}}, U}(a | V_{\ref{a:strict-monotonicity}}, U)} f_{A | V_{\ref{a:strict-monotonicity}}, U}(a | V_{\ref{a:strict-monotonicity}}, U) \dif \mu_A(a)} \\
&= \E{ \int_{\mathcal{A}} \E{Y | A=a, V_{\ref{a:strict-monotonicity}}, U} \pi(a) \dif \mu_A(a)} \\
&= \E{ \int_{\mathcal{A}} \E{Y | A=a, V, U} \pi(a) \dif \mu_A(a)} \\
&= \E{\phi_{IPW}(Y, A, V, U; f_{A | V, U})}
\end{align*}
The penultimate line follows from lemma \ref{l:same-info}, and the last line from lemma \ref{l:simple-cf}.

\textbf{REG}. 
Want to show that $\E{\phi_{REG}(V_{\ref{a:strict-monotonicity}}, U; k_{0, V_{\ref{a:strict-monotonicity}}})} = \E{\phi_{REG}(V, U; k_{0, v})}$.
\begin{align*}
\E{\phi_{REG}(V_{\ref{a:strict-monotonicity}}, U; k_{0, V_{\ref{a:strict-monotonicity}}})} &= \E{\int_{\mathcal{A}} k_0(a, V_{\ref{a:strict-monotonicity}}, U) \pi(a) \dif \mu_A(a)} \\
&= \E{ \int_{\mathcal{A}} \E{Y | A=a, V_{\ref{a:strict-monotonicity}}, U} \pi(a) \dif \mu_A(a)} \\
&= \E{ \int_{\mathcal{A}} \E{Y | A=a, V, U} \pi(a) \dif \mu_A(a)} \\
&= \E{\phi_{REG}(V, U; k_0)}
\end{align*}
The last line follows from the definition of $\phi_{REG}(.;.)$ in lemma \ref{l:simple-cf}.

\textbf{DR}. 
Want to show that $\E{\phi_{DR}(Y, A, V_{\ref{a:strict-monotonicity}}, U; f, k_0)} = \E{\phi_{DR}(Y, A, V, U; f, k_0)}$.
\begin{align*}
\E{\phi_{DR}(Y, A, V_{\ref{a:strict-monotonicity}}, U; f_{A | V_{\ref{a:strict-monotonicity}}, U}, k_{0, V_{\ref{a:strict-monotonicity}}})} &= \E{\phi_{REG}(V_{\ref{a:strict-monotonicity}}, U; k_{0, V_{\ref{a:strict-monotonicity}}})} \\
&= \E{\phi_{REG}(V, U; k_{0, v})} = \E{\phi_{DR}(Y, A, V, U; f_{A | V, U}, k_{0, v})}
\end{align*}
The penultimate line follows from lemma \ref{l:same-info}, and the last line from lemma \ref{l:simple-cf}.
\end{proof}

\begin{proof}[Proof of lemma \ref{l:control-obs}]
Proof of equation \ref{eq:control-obs-mom-1}. For any $\tau_{A,0} \in \mathbb{T}_0$,
\begin{align*}
\E{\tau_{A,0}(Z, W_1) | Z, U, W_0} &= F(A | Z, U) \\
\E{ \E{\tau_{A,0}(Z, W_1) | Z, U, W_0} | Z, W_0} &= \E{F(A | Z, U) | Z, W_0} \\
\E{ \tau_{A,0}(Z, W_1) | Z, W_0} &= F(A | Z, W_0).
\end{align*}
We move from the second to third equation by assumption $A \CI W_0 \ | \ (Z, U)$ (assumption \ref{a:icc-first-add}).

Proof of equation \ref{eq:control-obs-mom-2}. For any $\kappa_0 \in \mathbb{K}_0$,
\begin{align*}
\E{\kappa_0(Z, W_0) | Z, U, W_1} &= \frac{f(U)}{f(U | Z)} \\
\E{ \E{ \kappa_0(Z, W_0) | Z, U, W_1} | Z, W_1} &= \E{ \frac{f(U)}{f(U | Z)} | Z, W_1} \\
\E{ \E{ \kappa_0(Z, W_0) | Z, U, W_1} | Z, W_1} &= f(Z) \E{ \frac{1}{f(Z | U)} | Z, W_1} \\
\E{ \kappa_0(Z, W_0) | Z, W_1}
&= f(Z) \int_{\mathcal{U}} \frac{f(U | W_1, Z)}{f(Z | U)} \dif \mu_U(U) \\
&= f(Z) \int_{\mathcal{U}} \frac{f(W_1, Z | U) f(U)}{f(Z | U) f(W_1, Z)} \dif \mu_U(U) \\
&= f(Z) \int_{\mathcal{U}} \frac{f(W_1 | U) f(U)}{f(W_1, Z)} \dif \mu_U(U) \\
&= \frac{f(Z)}{f(Z | W_1)} \\
&= \frac{f(W_1)}{f(W_1 | Z)}
\end{align*}
We move from the fifth to the sixth equation by assumption $W_1 \CI Z \ | \ U$ (assumption \ref{a:icc-first-add}).
\end{proof}

\begin{proof}{Proof of lemma \ref{l:control-obs-id}}
First, note that for any $\tau_{A,0} \in \mathbb{T}_0$,
\begin{align*}
F(A | Z, U) &= \E{ \tau_{A,0}(Z, W_1) | Z, U, W_0 } \\
&= \E{ \tau_{A,0}(Z, W_1) | Z, U } \\
&= \int_{\mathcal{W}_1} \tau_{A,0}(Z, W_1) \dif F(W_1 | U)  \\
V_{\ref{a:strict-monotonicity}} = \int_{\mathcal{U}} F(A | Z, U) \dif F(U) &= \int_{\mathcal{W}_1} \tau_{A,0}(Z, W_1) \dif F(W_1) 
\end{align*}
We move from the second to the third line by assumption $W_1 \CI Z \ | \ U$ (assumption \ref{a:icc-first-add}).

Now, note that for any $\tau_A \in L_2(Z, W_1)$ and $\kappa_0 \in \mathbb{K}_0^{\text{obs}}$,
\begin{align*}
\E{ \kappa_0(Z, W_0) \tau_A(Z, W_1) | Z } &= \E{ \E{ \kappa_0(Z, W_0) \tau_A(Z, W_1) | Z, W_1 } | Z } \\
&= \E{ \E{ \kappa_0(Z, W_0) | Z, W_1 } \tau_A(Z, W_1) | Z } \\
&= \E{ \frac{f(W_1)}{f(W_1 | Z)} \tau_A(Z, W_1) | Z } \\
&= \int_{\mathcal{W}_1} \frac{f(W_1)}{f(W_1 | Z)} \tau_A(Z, W_1) \dif F(W_1 | Z) \\
&= \int_{\mathcal{W}_1} \tau_A(Z, W_1) \dif F(W_1)
\end{align*}

For any $\tau_A \in L_2(Z, W_1)$ and $\kappa_0 \in \mathbb{K}_0^{\text{obs}}$, we write
\begin{align*}
\int_{\mathcal{W}_1} \tau_A(Z, W_1) \dif F(W_1) &= \E{ \kappa_0(Z, W_0) \tau_A(Z, W_1) | Z } \\
&= \E{ \kappa_0(Z, W_0) \E{ \tau_A(Z, W_1) | Z, W_0} | Z }.
\end{align*}

For any $\tau_{A, 0} \in \mathbb{T}_0$ and $\kappa_0 \in \mathbb{K}_0^{\text{obs}}$, we write
\begin{align*}
V_{\ref{a:strict-monotonicity}} &= \int_{\mathcal{W}_1} \tau_{A, 0}(Z, W_1) \dif F(W_1) \\
&= \E{ \kappa_0(Z, W_0) \tau_{A, 0}(Z, W_1) | Z } \\
&= \E{ \kappa_0(Z, W_0) \E{ \tau_{A, 0}(Z, W_1) | Z, W_0} | Z } \\
&= \E{ \kappa_0(Z, W_0) F(A | Z, W_0) | Z }.
\end{align*}
Now this implies that for any $\tau_A \in L_2(Z, W_1)$,
\begin{align*}
\int_{\mathcal{W}_1} \tau_A(Z, W_1) \dif F(W_1) - V_{\ref{a:strict-monotonicity}} &= \E{ \kappa_0(Z, W_0) \E{ \tau_A(Z, W_1) - F(A | Z, W_0) | Z, W_0} | Z }.
\end{align*}
Hence, for any $\tau_A \in \mathbb{T}_0^{\text{obs}}$ as long as $\mathbb{K}_0^{\text{obs}} \neq \emptyset$,
\begin{align*}
V_{\ref{a:strict-monotonicity}} &= \int_{\mathcal{W}_1} \tau_A(Z, W_1) \dif F(W_1).
\end{align*}
\end{proof}

\begin{proof}[Proof of lemma \ref{l:main}]

\textbf{IPW}.
Identification of $\tilde{\phi}_{IPW}$. For any $q_0 \in \mathbb{Q}_0$,
\begin{align*}
\E{\tilde{\phi}_{IPW}(Y, A, \tilde{V}, Z)} &= \E{Y \pi(A) q_0(A, \tilde{V}, Z)} \\
&= \E{ \E{Y \pi(A) q_0(A, \tilde{V}, Z) | Y, A, \tilde{V}, U}} \\
&= \E{ Y \E{\pi(A) q_0(A, Z) | A, \tilde{V}, U}} \\
&= \E{ Y \frac{\pi(A)}{f(A | \tilde{V}, U)}} = \E{ {\phi}_{IPW}(Y, A, \tilde{V}, U) } \\
&= \E{ Y \frac{\pi(A)}{f(A | \eta, U)}} = J
\end{align*}
We move from the second to third equation by assumption $Y \CI Z \ | \ (A, U)$ of \ref{a:icc-first-add}. If \ref{a:strict-monotonicity} holds, the last line holds by theorem \ref{t:still-valid}. Otherwise, the last line directly holds by the steps in the proof of lemma \ref{l:simple-cf}, as \ref{a:strict-monotonicity-no-cf-fs} implies that $\tilde{V} = V_{\ref{a:strict-monotonicity-no-cf-fs}}$ is a one-to-one transformation of $\eta$.


\textbf{REG}.
Identification of $\tilde{\phi}_{REG}$. For any $h_0 \in \mathbb{H}_0$,
\begin{align*}
\E{\tilde{\phi}_{REG}(\tilde{V}, W)} &= \E{\int_{\mathcal{A}} h_0(a, \tilde{V}, W) \pi(a) \dif \mu_A(a)} \\
&= \E{\int_{\mathcal{A}} \E{h(a, \tilde{V}, W) | A=a, \tilde{V}, U} \pi(a) \dif \mu_A(a)} \\
&= \E{\int_{\mathcal{A}} k_0(a, \tilde{V}, U) \pi(a) \dif \mu_A(a)} = \E{\phi_{REG}(\tilde{V}, U)} \\
&= \E{\int_{\mathcal{A}} k_0(a, \eta, U) \pi(a) \dif \mu_A(a)} = J
\end{align*}
We move from the first to the second equation by assumption $W \CI (A, Z) \ | \ U$ of \ref{a:icc-first-add}. If \ref{a:strict-monotonicity} holds, the last line holds by theorem \ref{t:still-valid}. Otherwise, the last line directly holds by the steps in the proof of lemma \ref{l:simple-cf}, as \ref{a:strict-monotonicity-no-cf-fs} implies that $\tilde{V} = V_{\ref{a:strict-monotonicity-no-cf-fs}}$ is a one-to-one transformation of $\eta$.

\textbf{DR}.
Identification of $\tilde{\phi}_{DR}$.
\begin{align*}
\E{\tilde{\phi}_{DR}(Y, A, \tilde{V}, W, Z)} &= \E{ \pi(A) q_0(A, \tilde{V}, Z) (Y - h_0(A, \tilde{V}, W)) }  + \E{\tilde{\phi}_{REG}(Y, A, \tilde{V}, W)} \\
&= \E{ \E{ \pi(A) q_0(A, \tilde{V}, Z) (Y - h_0(A, \tilde{V}, W)) | A, \tilde{V}, U }}  + J \\
&= \E{ \E{ \pi(A) q_0(A, \tilde{V}, Z) | A, \tilde{V}, U} \E{ (Y - h_0(A, \tilde{V}, W)) | A, \tilde{V}, U }}  + J \\
&= \E{ \E{ \pi(A) q_0(A, \tilde{V}, Z) | A, \tilde{V}, U}  (\E{Y | A, \tilde{V}, U } - k_0(A, \tilde{V}, U)) }  + J \\
&= \E{ \E{ \pi(A) q_0(A, \tilde{V}, Z) | A, \tilde{V}, U}  (k_0(A, \tilde{V}, U) - k_0(A, \tilde{V}, U)) }  + J \\
&= J
\end{align*}
We move from the second to the third equation by assumption $(Y, W) \CI Z \ | \ (U, A)$ of \ref{a:icc-first-add}. 
\end{proof}

\begin{proof}[Proof of lemma \ref{l:obs}]
Any $h_0 \in \mathbb{H}_0$ satisfies
\begin{align*}
\E{Y - h_0(A, \tilde{V}, W) | A, \tilde{V}, Z, U} &= \E{Y - h_0(A, \tilde{V}, W) | A, \tilde{V}, U} = 0.
\end{align*}
The first equality holds 
by $(W, Y) \CI Z | A, U$. Consequently,
\begin{align*}
\E{Y - h_0(A, \tilde{V}, W) | A, \tilde{V}, Z} &= \E{ \E{ Y - h_0(A, \tilde{V}, W) | A, \tilde{V}, Z, U } | A, \tilde{V}, Z } = 0.
\end{align*}
This proves that equation \ref{eq:obs-mom-1} of lemma \ref{l:obs} holds.
Similarly, any $q_0 \in \mathbb{Q}_0$ satisfies
\begin{align*}
\E{\pi(A) q_0(A, \tilde{V}, Z) | A, \tilde{V}, W, U} &= \E{\pi(A) q_0(A, \tilde{V}, Z) | A, \tilde{V}, U} = \frac{\pi(A)}{f(A | \tilde{V}, U)}.
\end{align*}
The first equality holds 
by $Z \CI W \ | \ A, U$. Consequently,
\begin{align*}
\E{\pi(A) q_0(A, \tilde{V}, Z) | A, \tilde{V}, W} &= \E{ \E{\pi(A) q_0(A,\tilde{V}, Z) | A, \tilde{V}, U, W} | A, \tilde{V}, W } \\
&= \E{ \frac{\pi(A)}{f(A | \tilde{V}, U)} \Big| A, \tilde{V}, W }.
\end{align*}
Equation \ref{eq:obs-mom-2} of lemma \ref{l:obs} holds because
\begin{align*}
\E{\frac{1}{f(A | \tilde{V}, U)} | A, \tilde{V}, W} &= \int \frac{1}{f(A | \tilde{V}, u)} f(u | A, \tilde{V}, W) \dif \mu_U(u) \\
&= \int \frac{f(A, W | u, \tilde{V}) f(u, \tilde{V})}{f(A | \tilde{V}, u) f(A, \tilde{V}, W)} \dif \mu_U(u) \\
&= \int \frac{f(A | u, \tilde{V}) f(W | u, \tilde{V}) f(u, \tilde{V})}{f(A | \tilde{V}, u) f(A, \tilde{V}, W)} \dif \mu_U(u) \\
&= \int \frac{f(W | u, \tilde{V}) f(u, \tilde{V})}{f(A, \tilde{V}, W)} \dif \mu_U(u) \\
&= \frac{f(W, \tilde{V})}{f(A, \tilde{V}, W)} = \frac{1}{f(A | \tilde{V}, W)}
\end{align*}
The third equality follows from conditional independence $W \CI Z \ | \ U$ and $A=h(Z, \eta)$ under assumption \ref{a:strict-monotonicity-no-cf-fs}, or $A=h(Z, m(U, \eta))$ under assumption \ref{a:strict-monotonicity} (with \ref{a:icc-first-add},  \ref{a:control-bridge} to identify $\tilde{V}$). 
\end{proof}

\begin{proof}[Proof of lemma \ref{l:5}]

\textbf{IPW}.
First we prove equation of \ref{eq:l5-ipw} of lemma \ref{l:5} for the IPW estimator. For any $h_0 \in \mathbb{H}_0^{\text{obs}}$,
\begin{align*}
\E{\tilde{\phi}_{IPW}(Y, A, \tilde{V}, Z; q)} &= \E{ Y \pi(A) q(A, \tilde{V}, Z) } \\
&= \E{ \E{Y | A, \tilde{V}, Z} \pi(A) q(A, \tilde{V}, Z) } \\
&= \E{ \E{h_0(A, \tilde{V}, W) | A, \tilde{V}, Z} \pi(A) q(A, \tilde{V}, Z) } \\
&= \E{ h_0(A, \tilde{V}, W)  \pi(A) q(A, \tilde{V}, Z) } \\
&= \E{ h_0(A, \tilde{V}, W) \E{ \pi(A) q(A, \tilde{V}, Z) | A, \tilde{V}, W } }.
\end{align*}
The third line requires equation \ref{eq:obs-mom-1} from lemma \ref{l:obs}.
For any $q_0 \in \mathbb{Q}_0$,
\begin{align*}
\E{\frac{\pi(A)}{f(A | \tilde{V}, W)} h_0(A, \tilde{V}, W)} 
&= \E{ \E{\pi(A) q_0(A, \tilde{V}, Z) | A, \tilde{V}, W} h_0(A, \tilde{V}, W) } \\
&= \E{ \E{\pi(A) q_0(A, \tilde{V}, Z) | A, \tilde{V}, W} \E{ Y | A, \tilde{V}, W} } \\
&= \E{ \E{\pi(A) q_0(A, \tilde{V}, Z) Y | A, \tilde{V}, W} } \\
&= \E{ \pi(A) q_0(A, \tilde{V}, Z) Y  } = J  
\end{align*}
The first line requires \ref{eq:obs-mom-2} from lemma \ref{l:obs}.
Combining both above results, for any $h_0 \in \mathbb{H}_0^{\text{obs}}$ as long as $\mathbb{Q}_0 \neq \emptyset$, we have
\begin{align*}
\E{\tilde{\phi}_{IPW}(Y, A, \tilde{V}, Z; q)} - J &= \E{ \left( \E{ \pi(A) q(A, \tilde{V}, Z) | A, \tilde{V}, W} - \frac{\pi(A)}{f(A | \tilde{V}, W)} \right) h_0(A, \tilde{V}, W)}.
\end{align*}

\textbf{REG}.
Now we prove equation of \ref{eq:l5-reg} of lemma \ref{l:5} for the REG estimator. Again, we use equations \ref{eq:obs-mom-1} and \ref{eq:obs-mom-2} from lemma \ref{l:obs}. For any $q_0 \in \mathbb{Q}_0^{\text{obs}}$,
\begin{align*}
\E{\tilde{\phi}_{REG}(\tilde{V}, W; h)} &= \E{(\mathcal{T}h)(\tilde{V}, W)} \\
&= \E{ \int_{\mathcal{A}} \frac{\pi(a)}{f(a | \tilde{V}, W)} h(a, \tilde{V}, W) f(a | \tilde{V}, W) d \mu_A(a) } \\
&= \E{ \E{ \frac{\pi(A)}{f(A | \tilde{V}, W)} h(A, \tilde{V}, W) | \tilde{V}, W } } \\
&= \E{  \frac{\pi(A)}{f(A | \tilde{V}, W)} h(A, \tilde{V}, W)  } \\
&= \E{ \E{ \pi(A) q_0(A, \tilde{V}, Z) | A, \tilde{V}, W }  h(A, \tilde{V}, W)  } \\
&= \E{ \pi(A) q_0(A, \tilde{V}, Z)  h(A, \tilde{V}, W)  }.
\end{align*}
For any $h_0 \in \mathbb{H}_0$,
\begin{align*}
\E{\pi(A) q_0(A, \tilde{V}, Z)  \E{Y | A, \tilde{V}, Z}} &=  \E{\pi(A) q_0(A, \tilde{V}, Z)  \E{h_0(A, \tilde{V}, W) | A, \tilde{V}, Z}} \\
&= \E{\pi(A) q_0(A, \tilde{V}, Z)  h_0(A, \tilde{V}, W) } \\
&= \E{(\mathcal{T}h_0)(\tilde{V}, W)} = J. \text{ by lemma \ref{l:main}}
\end{align*}
The first line holds by lemma \ref{l:obs} equation \ref{eq:obs-mom-1}.
Combining both above results, for any $q_0 \in \mathbb{Q}_0^{\text{obs}}$ as long as $\mathbb{H}_0 \neq \emptyset$, we have 
\begin{align*}
\E{\tilde{\phi}_{REG}(\tilde{V}, W; h)} - J &= \E{ \pi(A) q_0(A, \tilde{V}, Z) \E{h(A, \tilde{V}, W) - Y | A, \tilde{V}, Z}}.
\end{align*}
\end{proof}

\begin{proof}[Proof of lemma \ref{t:4}]
\textbf{IPW}. First we prove equation \ref{eq:l4-ipw} of lemma \ref{t:4}. We use lemma \ref{l:5} with some $h_0 \in \mathbb{H}_0^{\text{obs}}$ and take a $q_0 \in \mathbb{Q}_0^{\text{obs}}$, such that
\begin{align*}
\E{\tilde{\phi}_{IPW}(Y, A, \tilde{V}, Z; q_0)} - J &= \E{ \left( \E{ \pi(A) q_0(A, \tilde{V}, Z) | A, \tilde{V}, W} - \frac{\pi(A)}{f(A | \tilde{V}, W)} \right) h_0(A, \tilde{V}, W)} = 0,
\end{align*}
by the definition of $\mathbb{Q}_0^{\text{obs}}$. This requires $\{ \mathbb{Q}_0 \neq \emptyset$ and $\mathbb{H}_0^{\text{obs}} \neq \emptyset \}$ as in lemma \ref{l:5}.

\textbf{REG}. Now we prove equation \ref{eq:l4-reg} of theorem \ref{t:4}. We use lemma \ref{l:5} with some $q_0 \in \mathbb{Q}_0^{\text{obs}}$ and take a $h_0 \in \mathbb{H}_0^{\text{obs}}$, such that
\begin{align*}
\E{\tilde{\phi}_{REG}(\tilde{V}, W; h_0)} - J &= \E{ \pi(A) q_0(A, Z) \E{h_0(A, \tilde{V}, W) - Y | A, \tilde{V}, Z}} = 0.
\end{align*}
by the definition of $\mathbb{H}_0^{\text{obs}}$. This requires $\{ \mathbb{H}_0 \neq \emptyset$ and $\mathbb{Q}_0^{\text{obs}} \neq \emptyset \}$ as in lemma \ref{l:5}.

\textbf{DR}. Double robustness is shown as usual. Suppose $\{ \mathbb{H}_0 \neq \emptyset$ and $\mathbb{Q}_0^{\text{obs}} \neq \emptyset \}$. For any $h_0 \in \mathbb{H}_0^{\text{obs}}$ and $q \in L_2(A, Z)$,
\begin{align*}
\E{\tilde{\phi}_{DR}(Y, A, \tilde{V}, W; h_0, q)} &= \E{(Y - h_0(A, \tilde{V}, W)) \pi(A) q(A, \tilde{V}, Z) + (\mathcal{T} h_0)(\tilde{V}, W)} \\
&= \E{ \E{(y - h_0(A, \tilde{V}, W)) | A, Z} \pi(A) q(A, \tilde{V}, Z) } + \E{(\mathcal{T} h_0)(\tilde{V}, W)} \\
&= \E{\tilde{\phi}_{REG}(\tilde{V}, W; h_0)} = J.
\end{align*}
Suppose $\{ \mathbb{Q}_0 \neq \emptyset$ and $\mathbb{H}_0^{\text{obs}} \neq \emptyset \}$. For any $q_0 \in \mathbb{Q}_0^{\text{obs}}$ and $h \in L_2(A, \tilde{V}, W)$,
\begin{align*}
&\E{\tilde{\phi}_{DR}(Y, A, \tilde{V}, W; h, q_0)} \\
&= \E{(Y - h(A, \tilde{V}, W)) \pi(A) q_0(A, \tilde{V}, Z) + (\mathcal{T} h)(\tilde{V}, W)} \\
&= \E{(Y - h(A, \tilde{V}, W)) \pi(A) q_0(A, \tilde{V}, Z) + \E{h(A, \tilde{V}, W) \frac{\pi(A)}{f(A | \tilde{V}, W)} | \tilde{V}, W}} \\
&= \E{Y \pi(A) q_0(A, \tilde{V}, Z) + h(A, \tilde{V}, W) \pi(A) \left( \frac{1}{f(A | \tilde{V}, W)} - q_0(A, \tilde{V}, Z) \right) } \\
&= \E{Y \pi(A) q_0(A, \tilde{V}, Z)} + \E{ h(A, \tilde{V}, W) \pi(A) \E{ \frac{1}{f(A | \tilde{V}, W)} - q_0(A, \tilde{V}, Z) | A, \tilde{V}, W } } \\
&= \E{Y \pi(A) q_0(A, \tilde{V}, Z) } \\
&= \E{\tilde{\phi}_{IPW}(Y, A, \tilde{V}, Z; q_0)} = J.
\end{align*}
\end{proof}

\end{document}